\documentclass[envcountsame,envcountsect]{llncs}
\pagestyle{plain}
\usepackage[english]{babel}
\usepackage{amsmath}
\usepackage{amsfonts}
\usepackage{amssymb}
\usepackage{hyperref}
\usepackage{tikz}
\usepackage{stmaryrd}
\usepackage{mathtools}
\usepackage{cite}
\usepackage{stix}
\usepackage[section]{placeins}
\usetikzlibrary{positioning,arrows.meta,decorations.markings,decorations.pathreplacing,bending,calc,decorations.pathmorphing,shapes}

\title{Event structures for the reversible early internal $\pi$-calculus}
\author{Eva Graversen \and Iain Phillips\orcidID{0000-0001-5013-5876} \and Nobuko Yoshida \orcidID{0000-0002-3925-8557}}
\institute{Imperial College London, UK}

\newcommand{\fn}{\mathsf{fn}}
\newcommand{\bn}{\mathsf{bn}}
\newcommand{\M}[1]{\mathcal{#1}}

\newcommand{\CS}[1]{(E_{#1},F_{#1},\mathsf{C}_{#1},\rightarrow_{#1})}
\newcommand{\CStrans}[2]{\xrightarrow{#1\cup \underline{#2}}}

\newcommand{\cf}{\mathrel{\sharp}}

\newcommand{\I}{\mathsf{Init}}
\newcommand{\ind}{\;\;\;}

\newcommand{\LRBES}[1]{(E_{#1},F_{#1},\mapsto_{#1},\cf _{#1},\rhd_{#1},\lambda_{#1},\mathsf{Act}_{#1})}

\newcommand{\lrangles}[1]{\left\langle {#1} \right\rangle}
\newcommand{\lrBraces}[1]{\left\lBrace {#1} \right\rBrace}

\newcommand{\rfrac}[2]{{}^{#1}\!/_{#2}}
\newcommand{\inp}[2]{#1(#2)}
\newcommand{\outp}[2]{\overline{#1}(#2)}

\newcommand{\loc}{\mathsf{loc}}
\usepackage{xfrac}
\usepackage{bussproofs}
\usepackage{proof}
\usepackage{multicol}
 \usepackage{vwcol}  
\begin{document}
\tikzset{->-/.style={decoration={
  markings,
  mark=at position #1 with {\arrow{>}}},postaction={decorate}}}
  
  \newcounter{sarrow}
  \newcommand\xrsquigarrow[1]{\hspace{1pt}%
\mathrel{\begin{tikzpicture}[baseline={($(current bounding box.south)+(0,-0.5ex)$)}]
\node[inner sep=.5ex] (\thesarrow) {$\scriptstyle #1$};
\draw[<-,decorate,decoration={snake,amplitude=0.7pt,segment length=1.2mm,pre=lineto,pre length=4pt}] (\thesarrow.south east) -- (\thesarrow.south west);
\end{tikzpicture}}%
\hspace{1pt}}
\newcommand\xsquigarrow[2]{\hspace{1pt}%
\stepcounter{sarrow}%
\mathrel{\begin{tikzpicture}[anchor=base,baseline=-8pt]
\node[inner sep=.5ex] (\thesarrow) {$\scriptstyle #1$};
\node[inner sep=.5ex,below=0.85ex of \thesarrow] (below) {$\scriptstyle #2$};
\path[draw,<-,decorate,
  decoration={zigzag,amplitude=0.7pt,segment length=1.2mm,pre=lineto,pre length=4pt},transform canvas={yshift = -1pt}] 
    (\thesarrow.south east) -- (\thesarrow.south west);
\end{tikzpicture}}%
\hspace{1pt}}
\tikzset{negated/.style={
        decoration={markings,
            mark= at position 0.5 with {
                \node[transform shape] (tempnode) {$\backslash$};
            }
        },
        postaction={decorate}
    }
}\newcommand\xarrowtail[2]{\hspace{1pt}%
\stepcounter{sarrow}%
\mathrel{\begin{tikzpicture}[anchor=base,baseline=-8pt]
\node[inner sep=.5ex] (\thesarrow) {$\scriptstyle #1$};
\node[inner sep=.5ex,below=0.8ex of \thesarrow] (below) {$\scriptstyle #2$};
\path[draw,<-<] 
    (\thesarrow.south east) -- (\thesarrow.south west);
\end{tikzpicture}}%
\hspace{1pt}}
\tikzset{negated/.style={
        decoration={markings,
            mark= at position 0.5 with {
                \node[transform shape] (tempnode) {$\backslash$};
            }
        },
        postaction={decorate}
    }
}
\maketitle
\begin{abstract}
%
%
The $\pi$-calculus is a widely used process calculus, which models communications between processes 
and allows the passing of communication links.
Various operational semantics of the $\pi$-calculus have been proposed, which can be classified according to whether transitions are unlabelled (so-called reductions) or labelled.
With labelled transitions, we can distinguish early and late semantics.
The early version allows a process to receive names it already knows from the environment, while the late semantics and reduction semantics do not.
All existing reversible versions of the $\pi$-calculus
use reduction or late semantics,
despite
the early semantics of the (forward-only) $\pi$-calculus being more widely used than the late.
We define $\pi$IH, the first reversible early $\pi$-calculus,
and give it a denotational semantics in terms of reversible bundle event structures.
The new calculus is a reversible form of the internal $\pi$-calculus,
which is a subset of the $\pi$-calculus where every link sent by an output is private,
yielding greater symmetry between inputs and outputs.
\end{abstract}
\section{Introduction}\label{sec:intro}
The $\pi$-calculus~\cite{MPW92} is a widely used process calculus, which models communications between processes using input and output actions,
and allows the passing of communication links.
Various operational semantics of the $\pi$-calculus have been proposed, which can be classified according to whether transitions are unlabelled or labelled.
Unlabelled transitions (so-called reductions) represent completed interactions.
As observed in~\cite{SWU10} they give us the internal behaviour of complete systems,
whereas to reason compositionally about the behaviour of a system in terms of its components we need labelled transitions.
With labelled transitions, we can distinguish early and late semantics~\cite{MPW93},
with the difference being that early semantics allows a process to receive (free) names it already knows from the environment, while the late does not. This creates additional causation in the early case between those inputs and previous output actions making bound names free.
All existing reversible versions of the $\pi$-calculus use reduction semantics~\cite{lanese2010reversing,TIEZZI2015684} or late semantics~\cite{cristescu2013compositional,DBLP:journals/corr/abs-1808-08655}.
However the early semantics of the (forward-only) $\pi$-calculus is
more widely used than the late, partly because it has a sound 
correspondence with 
contextual congruences~\cite{10.5555/646246.684864,HondaKYoshida95}.

We define $\pi$IH, {the first reversible early $\pi$-calculus},
and give it a denotational semantics in terms of reversible event structures.
The new calculus is a reversible form of the internal $\pi$-calculus, or
$\pi$I-calculus~\cite{SANGIORGI1996235}, which is a subset of the $\pi$-calculus where every link sent by an output is bound (private),
yielding greater symmetry between inputs and outputs.
It has been shown that the asynchronous $\pi$-calculus can be encoded in the
asynchronous form of the $\pi$I-calculus~\cite{BOREALE1998205}.

The $\pi$-calculus has two forms of causation. \emph{Structural} causation, as one would find in CCS, comes directly from the structure of the process, e.g.\ in $\inp{a}{b}.\inp{c}{d}$ the action $\inp{a}{b}$ must happen before $\inp{c}{d}$.
\emph{Link} causation, on the other hand, comes from one action making a name available for others to use, e.g.\ in the process $\inp{a}{x}\vert \outp{b}{c}$, the event $\inp{a}{c}$ will be caused by $\outp{b}{c}$ making $c$ a free name.
Note that link causation as in this example is present in the early form of the $\pi$I-calculus though not the late{, since {it is created by the process receiving one of its free names}. Restricting ourselves to the $\pi$I-calculus, rather than the full $\pi$-calculus lets us focus on the link causation created by early semantics, since it removes the other forms of link causation present in the $\pi$-calculus.}


We base $\pi$IH on the work of Hildebrandt \emph{et al.}~\cite{hildebrandt2017stable}, which used extrusion histories and locations to define a stable non-interleaving early operational semantics for the $\pi$-calculus.
We extend the extrusion histories so that they contain enough information to reverse the $\pi$I-calculus, storing not only extrusions but also communications.
Allowing processes to evolve, while moving past actions to a history separate from the process, is called dynamic reversibility. 
By contrast, static reversibility, as in CCSK~\cite{PU07}, lets processes keep their structure during the computation, and annotations are used to keep track of the current state and how actions may be reversed.

Event structures are a model of concurrency which
describe causation, conflict and concurrency between events.
They are `truly concurrent' in that they do not reduce concurrency of events to the different possible interleavings.
They have been used to model forward-only process calculi~\cite{Crafa2012,Boudol1989,winskel1982event}, including the $\pi$I-calculus~\cite{Crafa2007compositional}. {Describing reversible processes as event structures is useful because it gives us a simple representation of the causal relationships between actions and gives us equivalences between processes which generate isomorphic event structures. True concurrency in semantics is particularly important in reversible process calculi, as the order actions can reverse in depends on their causal relations~\cite{journals/entcs/PhillipsU07}.}

Event structure semantics of dynamically reversible process calculi have the added complexity of the histories and the actions in the process being separated, obscuring the structural causation.
This was an issue for Cristescu \emph{et al.}~\cite{CristescuKV16}, who used rigid families~\cite{CastellanHLW14}, related to event structures, to describe the semantics of R$\pi$~\cite{cristescu2013compositional}.
Their semantics require a process to first reverse all actions to find the original process, map this process to a rigid family, and then apply each of the reversed memories in order to reach the current state of the process. Aubert and Cristescu~\cite{AUBERT201777} used a similar approach to describe the semantics of a subset of RCCS processes as configuration structures.
{We use a different tactic of first mapping to a statically reversible calculus, $\pi$IK, and then obtaining the event structure.} This means that while we do have to reconstruct the original structure of the process, we avoid redoing the actions in the event structure.

Our $\pi$IK is inspired by CCSK and the statically reversible $\pi$-calculus of~\cite{DBLP:journals/corr/abs-1808-08655}, which use communication keys to denote past actions.
To keep track of link causation,
keys are used in a number of different ways in~\cite{DBLP:journals/corr/abs-1808-08655}.
In our case we can handle link causation by using keys purely
to annotate the action which was performed using the key, and any names which were substituted during that action. 

Although our two reversible variants of the $\pi$I-calculus have very different syntax and originate from different ideas, we show an operational correspondence between them in Theorem~\ref{the:key-ext-eq}. We do this despite the extrusion histories containing more information than the keys, since they remember what bound names were before being substituted. 
The mapping from $\pi$IH to $\pi$IK bears some resemblance to the one presented from RCCS to CCSK in~\cite{Medic2016}, though with some important differences. $\pi$IH uses centralised extrusion histories more similar to rho$\pi$~\cite{LANESE201625} while RCCS uses distributed memories. Additionally, unlike CCS, $\pi$I has substitution as part of its transitions and memories are handled differently by $\pi$IK and $\pi$IH, and our mapping has to take this into account.

We describe denotational structural event structure semantics of $\pi$IK, partly inspired by~\cite{Crafa2012,Crafa2007compositional}, using reversible bundle event structures~\cite{EFG2018}. Reversible event structures~\cite{journals/jlp/PhillipsU15} allow their events to reverse and include relations describing when events can reverse. Bundle event structures are more expressive than prime event structures, since they allow an event to have multiple possible conflicting causes.
This allows us to model parallel composition without having a single action correspond to multiple events.
While it would be possible to model $\pi$IK using reversible prime event structures, using bundle event structures not only gives us fewer events, it also lays the foundation for adding rollback to $\pi$IK and $\pi$IH, similarly to~\cite{EFG2018}, which cannot be done using reversible prime event structures.

The structure of the paper is as follows: Section~\ref{sec:Ext-sem} describes $\pi$IH; Section~\ref{sec:piik} describes $\pi$IK; Section~\ref{sec:sem-corr} describes the mapping from $\pi$IH to $\pi$IK; Section~\ref{sec:BES} recalls labelled reversible bundle event structures; and Section~\ref{sec:Den-Ev-Sem} gives event structure semantics of $\pi$IK. Proofs of the results presented in this paper can be found in the technical report~\cite{graversen2020event}.

\section{$\pi$I-calculus reversible semantics with extrusion histories}\label{sec:Ext-sem}
Stable non-interleaving, early operational semantics of the $\pi$-calculus were defined by Hildebrandt \emph{et al.} in~\cite{hildebrandt2017stable}, using locations and extrusion histories to keep track of link causation. We will in this section use a similar approach to define a reversible variant of the $\pi$I-calculus, $\pi$IH, using the locations and histories to keep track of not just causation, but also past actions. The $\pi$I-calculus is a restricted variant of the $\pi$-calculus wherein output on a channel $a$, $\outp{a}{b}$, binds the name being sent, $b$, corresponding to the $\pi$-calculus process $(\nu b)\overline{a}\!\lrangles{b}\!.P$. This creates greater symmetry with the input $\inp{a}{x}$, where the variable $x$ is also bound. The syntax of $\pi$IH processes is:

\vspace{3pt}$P::=\sum\limits_{i\in I} \alpha_i.P_i \;\mid\; P_0\vert P_1 \;\mid \; (\nu x) P \;\;\;\; \alpha::=\outp{a}{b}\;\mid \; \inp{a}{b}$

\vspace{3pt}The forward semantics of $\pi$IH can be seen in Table~\ref{tab:ext-sem-fwd} and the reverse semantics can be seen in Table~\ref{tab:ext-sem-rev}. We associate each transition with an action $\mu::=\alpha\;\vert\; \tau$ and a location $u$ (Definition~\ref{def:Loc}), describing where the action came from and what changes are made to the process as a result of the action. We store these location and action pairs in extrusion and communication histories associated with processes, so $(\overline{H},\underline{H},H)\vdash\! P$ means that if $(\mu,u)$ is an action and location pair in the output history $\overline{H}$ then $\mu$ is an output action, which $P$ previously performed at location $u$. Similarly $\underline{H}$ contains pairs of input actions and locations and $H$ contains triples of two communicating actions and the location associated with their communication. We use $\mathbf{H}$ as shorthand for $(\overline{H},\underline{H},H)$.
\begin{definition}[Location \cite{hildebrandt2017stable}]\label{def:Loc}
A location $u$ of an action $\mu$ is one of the following:
\begin{enumerate}
\item $l[P][P']$ if $\mu$ is an input or output, where $l\in \{0,1\}^*$ describes the path taken through parallel compositions to get to $\mu$'s origin, $P$ is the subprocess reached by following the path before $\mu$ has been performed, and $P'$ is the result of performing $\mu$ in $P$.
\item $l\lrangles{0l_0[P_0][P_0'],1l_1[P_1][P_1']}$ if $\mu=\tau$, where $l0l_0[P_0][P_0']$ and $l1l_1[P_1][P_1']$ are the locations of the two actions communicating.
\end{enumerate}
The path $l$ can be empty if the action did not go through any parallel compositions.
\end{definition}

We also use the operations on extrusion histories from Definition~\ref{def:ExtOp}. These (1) add a branch to the path in every location, (2) isolate the extrusions whose locations begin with a specific branch, (3) isolate the extrusions whose locations begin with a specific branch and then remove the first branch from the locations, and (4) add a pair to the history it belongs in.
\begin{definition}[Operations on extrusion histories \cite{hildebrandt2017stable}]\label{def:ExtOp}
Given an extrusion history $(\overline{H},\underline{H},H)$, for $H^*\in\{\overline{H},\underline{H},H\}$ we have the following operations for $i\in \{0,1\}$:
\begin{enumerate}
\item $iH^*=\{(\mu,iu)\mid (\mu,u)\in H^*\}$
\item $[i]H^*=\{(\mu,iu)\mid (\mu,iu)\in H^*\}$
\item $[\check{i}]H^*=\{(\mu,u)\mid (\mu,iu)\in H^*\}$
\item $\mathbf{H}+(\mu,u)=\begin{dcases*}
(\overline{H}\cup \{L\},\underline{H},H) & if $(\mu,u)=(\outp{a}{n},u)$ \\
(\overline{H},\underline{H}\cup \{L\},H) & if $(\mu,u)=(\inp{a}{x},u)$ \\
(\overline{H},\underline{H},H\cup \{L\}) & if $(\mu,u)=(\inp{a}{x},\outp{a}{n},l\langle u_0,u_1\rangle)$ \\
\end{dcases*}$
\end{enumerate}
\end{definition}
\begin{table}[tb]
 \begin{tabular}{c}
\vspace{5pt}\infer[{[\text{OUT}]}]{{\mathbf{H}\vdash\!\displaystyle\sum\limits_{i\in I} \alpha_i.P_i} \,\xrightarrow[u]{\alpha_j} {(\overline{H}\cup \{(\outp{a}{n},u)\},\underline{H},H)\vdash\! P_j}}{u=[\sum\limits_{i\in I} \alpha_i.P_i][P_j]\;\;\;\alpha_j=\outp{a}{n}\;\;\; j\in I}\\
\vspace{5pt}\infer[{[\text{IN}]}]{{\mathbf{H}\vdash\!\sum\limits_{i\in I} \alpha_i.P_i} \,\xrightarrow[u]{\inp{a}{n}} {(\overline{H},\underline{H}\cup \{(\inp{a}{n},u)\},H)\vdash\! P_j'}}{u=[\sum\limits_{i\in I} \alpha_i.P_i][P_j]\;\;\;P_j'=P_j[x:=n]\;\;\;\alpha_j=\inp{a}{x}\;\;\; j\in I } \\

\vspace{8pt}\infer[{[\text{PAR}_i]}]{{\mathbf{H}\vdash\! P_0\vert P_1}\,\xrightarrow[iu]{\mu} {((\overline{H}\setminus [i]\overline{H})\cup i\overline{H'_i},(\underline{H}\setminus [i]\underline{H})\cup i\underline{H'_i},(H\setminus [i]H)\cup iH'_i)\vdash\! P_0'\vert P_1'}}{\begin{array}{l}
{([\check{i}]\overline{H},[\check{i}]\underline{H},[\check{i}]H)\vdash\! P_i} \,\xrightarrow[u]{\mu} {\mathbf{H}_i'\vdash\! P_i'}\;\;\; P_{1-i}'=P_{1-i}\;\;\;\text{if }\mu=\overline{a}(n)\text{ then } n\notin \fn(P_{1-i})
\end{array}} \\
\vspace{5pt}\infer[{[\text{COM}_i]}]
{{\mathbf{H}\vdash\! P_0\vert P_1} \,\xrightarrow[(0v_0,1v_1)]{\tau} {(\overline{H},\underline{H},H\cup\{((\alpha_0,\alpha_1,\langle 0v_0,1v_1\rangle)\})\vdash\! (\nu n)(P_0'\vert P_1')}}{\begin{array}{l}
{([\check{i}]\overline{H},[\check{i}]\underline{H},[\check{i}]H)\vdash\! P_i} \,\xrightarrow[v_i]{\alpha_i} {\mathbf{H}_i'\vdash\! P_i'} \;\;\; \alpha_i=\outp{a}{n} \;\;\; \alpha_j=\inp{a}{n} \vspace{3pt}\\ {([\check{j}]\overline{H},[\check{j}]\underline{H},[\check{j}]H)\vdash\! P_j} \,\xrightarrow[v_j]{\alpha_i} {\mathbf{H}_j'\vdash\! P_j'\;\;\; j=1-i\;\;\; n\notin \fn(P_j)}
\end{array}}\\
\vspace{5pt}\infer[{[\text{SCOPE}]}]{{\mathbf{H}\vdash\!(\nu x)P} \,\xrightarrow[u]{\mu} {\mathbf{H}'\vdash\!(\nu x)P'}}{{\mathbf{H}\vdash\! P} \,\xrightarrow[u]{\mu} {\mathbf{H}'\vdash\! P'}\;\; x\notin n(\mu)} \;\;\;\;\;\infer[{[\text{STR}]}]
{{\mathbf{H}\vdash\! P} \,\xrightarrow[u]{\mu} {\mathbf{H}'\vdash\! Q}}{P\equiv P'\;\;\; {\mathbf{H}\vdash\! P'} \,\xrightarrow[u]{\mu} {\mathbf{H}'\vdash\! Q'}\;\;\;Q'\equiv Q}\\
\end{tabular}
\caption{Semantics of $\pi$IH (forwards rules)}\label{tab:ext-sem-fwd}
\end{table}
The forwards semantics of $\pi$IH have six rules. In $[\text{OUT}]$ the action is an output, the location is the process before and after doing the output, and they are added to the output history. The equivalent reverse rule, $[\text{OUT}^{-1}]$, similarly removes the pair from the history and transforms the process from the second part of the location back to the first. The input rule $[\text{IN}]$ works similarly, but performs a substitution on the received name and adds the pair to the input history instead. In $[\text{PAR}_i]$ we isolate the parts of the histories whose locations start with $i$ and use those to perform an action in $P_i$, getting $\mathbf{H}_i'\vdash\! P_i'$. It then replaces the part of the histories parts of the histories whose locations start with $i$ with $\mathbf{H}_i'$ when propagating the action through the parallel. A communication in $[\text{COM}_i]$ adds memory of the communication to the history. The rules $[\text{SCOPE}]$ and $[\text{STR}]$ are standard and self-explanatory. 

The reverse rules use the extrusion histories to find a location $l[P][P']$ such that the current state of the subprocess at $l$ is $P'$, and change it to $P$.

In these semantics structural congruence, consisting only of $\alpha$-conversion together with ${!P}\equiv {!P\vert P}$ and ${(\nu~a)(\nu b)P}\equiv {(\nu~b)(\nu~a) P}$, is primarily used to create and remove extra copies of a replicated process when reversing the action that happened before the replication. Since we use locations in our extrusion histories, we try to avoid using structural congruence any more than necessary. However, not using it for parallel composition would mean that we would need some other way of preventing traces such as ${\mathbf{H}\vdash!P}\,\xrightarrow[u]{\mu}\xsquigarrow{\mu}{u}{\mathbf{H}\vdash!P\vert P}$, which allows a process to reach a state it could not reach via a parabolic trace. Using structural congruence for replication does not cause any problems for the locations, as we can tell past actions originating in each copy of $P$ apart by the path in their location, with actions from the $i$th copy having a path of $i$ $0$s followed by a $1$.

 \begin{table}[tb]
\small 
\begin{tabular}{c}
\vspace{5pt}
$\infer[{[\text{OUT}^{-1}]}]{\mathbf{H}\vdash P_j \xsquigarrow{\alpha_j}{u} (\overline{H}\setminus \left\{(\outp{a}{n},u)\right\},\underline{H},H)\vdash \sum_{i\in I} \alpha_i.P_i}{u=[\sum\limits_{i\in I} \alpha_i.P_i][P_j]\;\;\; \alpha_j=\outp{a}{n}\;\;\; j\in I\;\;\; (\outp{a}{n},u)\in \overline{H}} $ \\

 \vspace{5pt}
 $\infer[{[\text{IN}^{-1}]}]{\mathbf{H}\vdash P_j' \xsquigarrow{\inp{a}{n}}{u} (\overline{H},\underline{H}\setminus \left\{(\inp{a}{n},u)\right\},H)\vdash \sum\limits_{i\in I} \alpha_i.P_i}{u=[\sum\limits_{i\in I} \alpha_i.P_i][P_j]\;\;\;P_j'= P_j[x:=n]\;\;\;\alpha_j=\inp{a}{x}\;\;\; j\in I \;\;\; (\inp{a}{n},u)\in \underline{H}}$ \\
 \vspace{8pt}
 $\infer[{[\text{PAR}_i^{-1}]}]{\mathbf{H}\vdash P_0\vert P_1\xsquigarrow{\alpha}{iu} ((\overline{H}\setminus [i]\overline{H})\cup i\overline{H'_i},(\underline{H}\setminus [i]\underline{H})\cup i\underline{H'_i},(H\setminus [i]H)\cup iH'_i)\vdash P_0'\vert P_1'}{([\check{i}]\overline{H},[\check{i}]\underline{H},[\check{i}]H)\vdash P_i\xsquigarrow{\alpha}{u} \mathbf{H}_i'\vdash P_i'\;\;\; P_{1-i}'=P_{1-i} \text{ if }\alpha=\outp{a}{n}\text{ then } n\notin \fn(P_{1-i})}$\\
 \vspace{5pt}\infer[{[\text{COM}_i^{-1}]}]
{{\mathbf{H}\vdash\! (\nu n)(P_0\vert P_1)} \,\xsquigarrow{\tau}{(0v_0,1v_1)} {(\overline{H},\underline{H},H\setminus\{((\alpha_0,\alpha_1,\langle 0v_0,1v_1\rangle)\}\vdash\! P_0'\vert P_1'}}{\begin{array}{l}
([\check{i}]\overline{H}\cup \{(\outp{a}{n},v_i)\},[\check{i}]\underline{H},[\check{i}]H)\vdash P_i\xsquigarrow{\outp{a}{n}}{v_i} \mathbf{H}_i'\vdash P_i' \;\;\; \alpha_i=\outp{a}{n} \;\;\; \alpha_j=\inp{a}{n} \vspace{3pt}\\([\check{j}]\overline{H},[\check{j}]\underline{H}\cup \{(\inp{a}{n},v_j)\},[\check{j}]H)\vdash P_j\xsquigarrow{\inp{a}{n}}{v_j} \mathbf{H}_j'\vdash P_j'\;\;\; j=1-i\;\;\; n\notin \fn(P_j)
\end{array}}\\
 \vspace{5pt} 
$\infer[{[\text{SCOPE}^{-1}]}]{\mathbf{H}\vdash(\nu x)P\xsquigarrow{\mu}{u} \mathbf{H}'\vdash(\nu x)P'}{\mathbf{H}\vdash P\xsquigarrow{\mu}{u} \mathbf{H}'\vdash P'\;\; x\notin n(\alpha)}\;\;\;\;\;
\infer[{[\text{STR}^{-1}]}]{\mathbf{H}\vdash P \xsquigarrow{\alpha}{u} \mathbf{H}'\vdash Q}{P\equiv P'\;\;\; \mathbf{H}\vdash P' \xsquigarrow{\alpha}{u} \mathbf{H}'\vdash Q' \;\;\; Q'\equiv Q}$ \\
\end{tabular}
\caption{Semantics of reversible $\pi$IH (reverse rules)}\label{tab:ext-sem-rev}
\end{table}

\begin{example} 
Consider the process $(\inp{a}{x}.\outp{x}{d} \vert \outp{a}{c})\vert \inp{b}{y}$. If we start with empty histories, each transition adds actions and locations:

 \noindent{\resizebox{\textwidth}{!}{
 $\begin{array}{lr}
 {(\emptyset,\emptyset,\emptyset)\vdash\! (\inp{a}{x}.\outp{x}{d} \vert \outp{a}{c}) \vert \inp{b}{y}}& \hspace{-2cm}\xrightarrow[{0\langle 0[\inp{a}{x}.\outp{x}{d}][\outp{c}{d}],1[\outp{a}{c}][0]\rangle}]{\tau}\\
 {(\emptyset,\emptyset,\{(\inp{a}{c},\outp{a}{c},0\lrangles{ 0[\inp{a}{x}.\outp{x}{d}][\outp{c}{d}],1[\outp{a}{c}][0]}\})\vdash\! (\nu c)(\outp{c}{d} \vert 0)\vert \inp{b}{y}} & \xrightarrow[{00[\outp{c}{d}][0]}]{\outp{c}{d}}\\
 {(\{(\outp{c}{d},00[\outp{c}{d}][0])\},\emptyset,\{(\inp{a}{c},\outp{a}{c},0\lrangles{ 0[\inp{a}{x}.\outp{x}{d}][\outp{c}{d}],1[\outp{a}{c}][0]}\})\vdash\! (\nu c)(0\vert 0)\vert \inp{b}{y}}\; & \xrightarrow[{1[\inp{b}{y}][0]}]{\inp{b}{d}}\\
 \multicolumn{2}{l}{(\{(\outp{c}{b},00[\outp{c}{b}][0])\},\{(\inp{b}{d},{1[\inp{b}{y}][0]})\},\{(\inp{a}{c},\outp{a}{c},0\lrangles{ 0[\inp{a}{x}.\outp{x}{d}][\outp{c}{d}],1[\outp{a}{c}][0]}\})\vdash\! (0\vert 0)\vert 0}\\
 \end{array}$
 }}
 \end{example}

%
%

 We show that our forwards and reverse transitions correspond.

 \begin{proposition}[Loop]\label{prop:fwdtorevTrans}\mbox{}
 \begin{enumerate}
 \item Given a $\pi$IH process $P$ and an extrusion history $\mathbf{H}$, if ${\mathbf{H}\vdash\! P}\,\xrightarrow[u]{\alpha} {\mathbf{H}' \vdash\! Q}$, then ${\mathbf{H}'\vdash\! Q} \xsquigarrow{\alpha}{u} {\mathbf{H}\vdash\! P}$.
  \item Given a forwards-reachable $\pi$IH process $P$ and an extrusion history $\mathbf{H}$, if ${\mathbf{H}\vdash\! P} \xsquigarrow{\alpha}{u} {\mathbf{H}' \vdash\! Q}$, then ${\mathbf{H}'\vdash\! Q}\,\,\xrightarrow[u]{\alpha} {\mathbf{H}\vdash\! P}$.
  \end{enumerate}
  \end{proposition}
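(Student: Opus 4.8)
The plan is to prove both parts by induction on the derivation of the transition, working rule by rule through Tables~\ref{tab:ext-sem-fwd} and~\ref{tab:ext-sem-rev}. The two parts are nearly mirror images of each other, so I would set up the induction for part~(1) in detail and then indicate how part~(2) is obtained by reading the same argument backwards, paying attention to the one asymmetry (the forwards-reachability hypothesis, needed because an arbitrary process-with-history need not be well formed, e.g.\ the history could refer to locations incompatible with the current process term).

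For part~(1), consider the last rule used in ${\mathbf{H}\vdash\! P}\,\xrightarrow[u]{\alpha}{\mathbf{H}'\vdash\! Q}$. In the base cases $[\text{OUT}]$ and $[\text{IN}]$, the transition has the shape $u=[\sum_i\alpha_i.P_i][P_j]$ (resp.\ with the substitution $P_j'=P_j[x:=n]$), and $\mathbf{H}'=\mathbf{H}+(\alpha,u)$; I check that the side conditions of $[\text{OUT}^{-1}]$ (resp.\ $[\text{IN}^{-1}]$) are exactly met: the location records the source term $\sum_i\alpha_i.P_i$ and target $P_j$ (resp.\ $P_j'$), the relevant pair $(\outp{a}{n},u)$ (resp.\ $(\inp{a}{n},u)$) is in the history $\mathbf{H}'$ by construction, and removing it returns $\mathbf{H}$. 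So $[\text{OUT}^{-1}]$/$[\text{IN}^{-1}]$ applies and yields ${\mathbf{H}'\vdash\! Q}\xsquigarrow{\alpha}{u}{\mathbf{H}\vdash\! P}$. For the inductive cases $[\text{PAR}_i]$, $[\text{COM}_i]$, $[\text{SCOPE}]$, $[\text{STR}]$: in each, the premise is a (smaller) forwards transition of a subprocess, to which the induction hypothesis applies, producing a reverse transition of that subprocess; I then reassemble it with the matching reverse rule. The key bookkeeping point is that the history-surgery operations $[\check{i}](\cdot)$, $i(\cdot)$, $H^*\setminus[i]H^*$ used in $[\text{PAR}_i]$ (and $[\text{COM}_i]$) are set up so that splitting off branch~$i$, applying the IH, and re-merging is the inverse of the forwards manipulation — i.e.\ $((H\setminus[i]H)\cup iH_i')$ with $[\check i]$ of it giving back $H_i'$, and with branch~$i$ restored to $H$ recovering $\mathbf{H}$. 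For $[\text{COM}_i]$ one additionally observes that the reverse rule $[\text{COM}_i^{-1}]$ feeds the two sub-histories augmented with the output/input pairs $(\outp{a}{n},v_i)$, $(\inp{a}{n},v_j)$ into the reverse sub-transitions — which is exactly what the IH delivers, since the forwards sub-transitions added those very pairs. For $[\text{STR}]$ one uses that $\equiv$ is symmetric and transitive so the same equivalences witness the reverse step.

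For part~(2), induct on the derivation of ${\mathbf{H}\vdash\! P}\xsquigarrow{\alpha}{u}{\mathbf{H}'\vdash\! Q}$. The base cases $[\text{OUT}^{-1}]$, $[\text{IN}^{-1}]$ are handled as above but in the other direction: the side condition $(\alpha,u)\in\overline{H}$ (resp.\ $\underline{H}$) together with the location $u=[\sum_i\alpha_i.P_i][P_j]$ guarantees that $[\text{OUT}]$ (resp.\ $[\text{IN}]$) applies to $\mathbf{H}'\vdash\! Q$ and returns $\mathbf{H}\vdash\! P$; here forwards-reachability of $P$ is what ensures the history entry $(\alpha,u)$ is consistent, i.e.\ that $Q$ (the subprocess-after) genuinely has the form the location claims, so the forwards rule can fire — this is the only place the extra hypothesis is used, and it propagates through the inductive cases since reachability is preserved by the sub-transitions. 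The inductive cases $[\text{PAR}_i^{-1}]$, $[\text{COM}_i^{-1}]$, $[\text{SCOPE}^{-1}]$, $[\text{STR}^{-1}]$ mirror their forwards counterparts exactly as in part~(1).

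The main obstacle I expect is not any single rule but the history-manipulation bookkeeping in $[\text{PAR}_i]$ and especially $[\text{COM}_i]$: one must verify carefully that the decompose–recurse–recompose pattern on the triple $(\overline{H},\underline{H},H)$ really is an involution in the relevant sense, in particular that $[\text{COM}_i^{-1}]$'s trick of re-inserting the consumed output/input pairs into the sub-histories before recursing lines up precisely with what $[\text{COM}_i]$ removed, and that the freshness side conditions ($n\notin\fn(P_{1-i})$, etc.) transfer unchanged. A secondary subtlety is confirming that forwards-reachability is indeed an invariant under the reverse sub-transitions invoked by the induction in part~(2), which is needed to keep applying the induction hypothesis.
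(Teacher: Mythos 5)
Your proposal matches the paper's proof: both parts are proved by induction on the derivation of the transition, with a case analysis over the rules of Tables~\ref{tab:ext-sem-fwd} and~\ref{tab:ext-sem-rev}, and part~(2) handled symmetrically. The one piece of bookkeeping you flag as the main obstacle --- that the reverse sub-transitions in the $[\text{COM}_i^{-1}]$ premises depend only on the single extrusion being removed, so they can be transplanted into the augmented sub-histories --- is exactly what the paper isolates as Lemma~\ref{lem:SingleExtrusion} and invokes in the $[\text{COM}_i]$ case.
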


 
 \section{$\pi$I-calculus reversible semantics with annotations}\label{sec:piik}
In order to define event structure semantics of $\pi$IH, we first map from $\pi$IH to a statically reversible variant of $\pi$I-calculus, called $\pi$IK. $\pi$IK is based on previous statically reversible calculi $\pi$K~\cite{DBLP:journals/corr/abs-1808-08655} and
CCSK \cite{PU07}. Both of these use \emph{communication keys} to denote past actions and which other actions they have interacted with, so ${\inp{a}{x}\vert\outp{a}{b}}\xrightarrow{\tau[n]}{\inp{a}{b}[n]\vert\outp{a}{b}[n]}$ means a communication with the key $n$ has taken place between the two actions. We apply this idea to define early semantics of $\pi$IK, which has the following syntax: 

\vspace{3pt}
$P::= \alpha.P\,\mid\,\alpha[n].P\,\mid\, P_0+P_1\,\mid\, P_0\vert P_1\,\mid\, (\nu x)P \;\;\; \alpha::=\outp{a}{b}\,\mid \, \inp{a}{b}$

\vspace{3pt}The primary difference between applying communication keys to CCS and the $\pi$I-calculus is the need to deal with substitution. We need to keep track of not only which actions have communicated with each other, but also which names were substituted when. We do this by giving the substituted names a key, $a_{[n]}$, but otherwise treating them the same as those without the key, except when undoing the input associated with $n$.
  \begin{table}[tb] 
 \begin{center}
\begin{tabular}{c}
\infer{{\inp{a}{x}.P}\xrightarrow{\inp{a}{b}[n]} {\inp{a}{b}[n].P'}}{\mathsf{std}(P)\;\;\; P'=P[x:=b_{[n]}]}  \;\;\;\;\;  \infer{\outp{a}{b}.P\xrightarrow{\outp{a}{b}[n]} \outp{a}{b}[n].P}{\mathsf{std}(P)} \\ \infer{{\alpha[n].P}\xrightarrow{\mu[m]} {\alpha[n].P'}}{P\xrightarrow{\mu[m]} P' \;\;\; m\neq n\;\;\;\text{ if }\mu=\overline{a}(x) \text{ then } x\notin n(\alpha)} \;\;\;\; \infer{P_0 + P_1 \xrightarrow{\mu[n]} P_0'+ P_1}{P_0\xrightarrow{\mu[n]} P_0' \;\;\; \mathsf{std}(P_1)} \\
\infer{P_0\vert P_1\xrightarrow{\mu[n]} P_0'\vert P_1}{P_0\xrightarrow{\mu[n]} P_0' \;\;\; \mathsf{fsh}[n](P_1)\;\;\; \text{if }\mu=\outp{a}{b}\text{ then }b\notin \fn(P_1)} \;\;\;\;\; \infer{P_0\vert P_1\xrightarrow{\tau[n]} (\nu b)(P_0'\vert P_1')}{P_0\xrightarrow{\inp{a}{b}[n]}P_0'\;\;\; P_1\xrightarrow{\outp{a}{b}[n]} P_1'} \\
\vspace{5pt}\infer{(\nu a)P \xrightarrow{\mu[m]} (\nu a)P'}{P\xrightarrow{\mu[m]} P' \;\;\;\; a\notin n(\mu)} \;\;\;\;\;   \infer{P\xrightarrow{\mu[n]} P'}{P\equiv Q\xrightarrow{\mu[n]}Q'\equiv P'}\\
\end{tabular}
\caption{$\pi$IK forward semantics}\label{tab:pik-sem} \end{center}
 \end{table}
 
   \begin{table}[tb] 
 \begin{center}
\begin{tabular}{c}
\infer{\inp{a}{b}[m].P\xrsquigarrow{\inp{a}{b}[m]} \inp{a}{x}.P'}{\mathsf{std}(P)\;\;\; x\notin n(P) \;\;\; P'=P[b_{[m]}:=x]} \;\;\;\; \infer{\outp{a}{b}[n].P\xrsquigarrow{\outp{a}{b}[n]} \outp{a}{b}.P}{\mathsf{std}(P)} \\ 
\infer{\alpha[n].P\xrsquigarrow{\mu[m]} \alpha[n].P'}{P\xrsquigarrow{\mu[m]} P' \;\;\; m\neq n} \;\;\;\; \infer{P_0 + P_1 \xrsquigarrow{\mu[n]} P_0'+ P_1}{P_0\xrsquigarrow{\mu[n]} P_0' \;\;\; \mathsf{std}(P_1)}\\
\infer{P_0\vert P_1\xrsquigarrow{\mu[n]} P_0'\vert P_1}{P_0\xrsquigarrow{\mu[n]} P_0' \;\;\; \mathsf{fsh}[n](P_1)\;\;\; \text{ if } \mu=\outp{a}{b} \text{ then } b\notin \fn(P_1)} \;\;\;\;\; \infer{(\nu b)(P_0\vert P_1)\xrsquigarrow{\tau[n]} P_0'\vert P_1'}{P_0\xrsquigarrow{\inp{a}{b}[n]}P_0' \;\;\; P_1\xrsquigarrow{\outp{a}{b}[n]} P_1'} \\
\vspace{5pt} \infer{(\nu a)P \xrsquigarrow{\mu[m]} (\nu a)P'}{P\xrsquigarrow{\mu[m]} P' \;\;\; a\notin n(\mu)} \;\;\;\;\; \infer{P\xrsquigarrow{\mu[n]} P'}{P\equiv Q\xrsquigarrow{\mu[n]}Q'\equiv P'}\\
\end{tabular}
\caption{$\pi$IK reverse semantics}\label{tab:pik-rev-sem} \end{center}
 \end{table}
 
Table \ref{tab:pik-sem} shows the forward semantics of $\pi$IK. The reverse semantics can be seen in Table~\ref{tab:pik-rev-sem}. We use~$\alpha$ to range over input and output actions and $\mu$ over input, output, and~$\tau$. We use $\mathsf{std}(P)$ denote that~$P$ is a \emph{standard process}, meaning it does not contain any past actions (actions annotated with a key), and $\mathsf{fsh}[n](P)$ to denote that a key~$n$ is fresh for~$P$. Names in past actions are always free. Our semantics very much resemble those of CCSK, with the exceptions of substitution and ensuring that any name being output does not appear elsewhere in the process. 
The semantics use structural congruence as defined in Table~\ref{tab:str-con}.
\begin{table}[t]
\begin{tabular}{lll}
		$P\vert 0\equiv P$ & $P_0\vert P_1\equiv P_1\vert P_0$ & $P_0\vert(P_1\vert P_2)\equiv (P_0\vert P_1)\vert P_2$ \\
		$P+0\equiv P$ \hspace{1cm} & $P_0+P_1\equiv P_1+P_0$ \hspace{1cm} & $P_0+(P_1+P_2)\equiv (P_0+P_1)+P_2$ \\
\vspace{5pt}		$!P\equiv {!P\vert P} $ & $(\nu x)(\nu y)P\equiv (\nu y)(\nu x) P$ & $(\nu a)(P_0\vert P_1) \equiv ((\nu a) P_0\vert P_1)$ if $a\notin n(P_1)$\\
	\end{tabular}
	\caption{Structural congruence}\label{tab:str-con}
\end{table}	
	

We again show a correspondence between forward and reverse transitions.
\begin{proposition}[Loop]\label{the:FwdToRev}\mbox{}
\begin{enumerate}
\item Given a process $P$, if $P\xrightarrow{\mu[n]} Q$ then $Q\xrsquigarrow{\mu[n]} P$.
\item Given a forwards reachable process $P$, if $P\xrsquigarrow{\mu[n]} Q$ then $Q\xrightarrow{\mu[n]} P$.
\end{enumerate} 
\end{proposition}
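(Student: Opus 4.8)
The plan is to prove both parts by rule induction on the derivation of the transition, exploiting the close syntactic match between each forward rule in Table~\ref{tab:pik-sem} and the correspondingly-placed reverse rule in Table~\ref{tab:pik-rev-sem}. For part~(1), I would induct on the derivation of $P\xrightarrow{\mu[n]} Q$. The base cases are the two axioms: for output, $\outp{a}{b}.P\xrightarrow{\outp{a}{b}[n]}\outp{a}{b}[n].P$ with $\mathsf{std}(P)$, and since $\mathsf{std}(P)$ still holds we may apply the reverse output axiom to get $\outp{a}{b}[n].P\xrsquigarrow{\outp{a}{b}[n]}\outp{a}{b}.P$; for input, $\inp{a}{x}.P\xrightarrow{\inp{a}{b}[n]}\inp{a}{b}[n].P'$ where $P'=P[x:=b_{[n]}]$, and the reverse input axiom lets us undo this provided we pick a fresh variable and compute $P'[b_{[n]}:=x]$. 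Here I need the substitution lemma that, when $x\notin n(P)$ and $b_{[n]}$ does not occur in $P$ except as introduced, $P[x:=b_{[n]}][b_{[n]}:=x]=P$ (up to $\alpha$-equivalence / choice of the fresh variable), together with the fact that $\mathsf{std}(P)$ is preserved. The inductive cases — prefixed past action $\alpha[n'].(-)$, choice, parallel, communication, restriction, and the structural-congruence rule — each follow by applying the induction hypothesis to the premise transition and then re-applying the matching reverse rule; the side conditions ($m\neq n$, $\mathsf{fsh}$, $b\notin\fn(P_1)$, $a\notin n(\mu)$, $\mathsf{std}$) transfer verbatim because forward and reverse rules carry the same ones, and for the $\tau$-rule one uses that $(\nu b)(P_0'\vert P_1')$ reverses to $P_0\vert P_1$ by combining the two reversed premises.

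For part~(2), the symmetric direction, I would again induct on the derivation of $P\xrsquigarrow{\mu[n]} Q$, showing $Q\xrightarrow{\mu[n]} P$. The structure is identical, but now the base cases need the reverse direction of the substitution lemma: undoing an input replaces $b_{[n]}$ by a fresh $x$, and redoing it substitutes $x:=b_{[n]}$ again, so I need $P'[b_{[n]}:=x]$ followed by $[x:=b_{[n]}]$ to return $P'$, which holds because after reversal $x$ is fresh and $b_{[n]}$ no longer occurs. The hypothesis that $P$ is \emph{forwards reachable} is what makes this direction true: it guarantees the keys and keyed names in $P$ are well-formed (e.g. a keyed name $b_{[n]}$ only appears if there is a past input $\inp{a}{b}[n]$ that introduced it, keys annotating past actions are consistent, fresh-key conditions are genuinely satisfiable), so that the reverse axioms are applicable and their re-application as forward axioms lands back exactly at $P$. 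Without reachability a reverse transition could in principle start from a malformed term for which the corresponding forward rule does not re-derive the same process.

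I expect the main obstacle to be the bookkeeping around substitution and keyed names in the input/output axioms — precisely stating and proving the inverse-substitution lemma (including the role of $\alpha$-conversion in choosing the fresh variable $x$, and the interaction of $\mathsf{std}$ with keyed names being free) — rather than the inductive structure, which is routine. A secondary subtlety is the $[\text{STR}]$/$[\text{STR}^{-1}]$ case: I must check that structural congruence is preserved in the right places and that applying the IH under $\equiv$ is sound, which amounts to noting that $\equiv$ is an equivalence respecting the transition relations (used implicitly already in the rule). Finally, in part~(2) I should make explicit why forwards reachability is preserved along reverse transitions, or else carry enough of an invariant (well-keyedness of the term) through the induction to close each case; this is the one place where the two parts genuinely diverge and where a naive symmetric argument would fail.
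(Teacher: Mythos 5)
Your proposal matches the paper's proof: both parts are proved by rule induction on the derivation of the transition, with the input/output axioms handled via invertibility of the keyed substitution ($x\notin n(P')$ after the input, so $P'[b_{[n]}:=x]$ recovers $P$ up to the choice of fresh variable) and every other case by the induction hypothesis plus the matching rule from the other table. The one point you gesture at but do not quite pin down --- where part (2) genuinely uses forwards reachability --- is the past-prefix case $\alpha[n'].({-})$, whose forward rule carries the side condition ``if $\mu=\overline{a}(x)$ then $x\notin n(\alpha)$'' that the reverse rule lacks (so the conditions do not in fact ``transfer verbatim''), and the paper discharges exactly that condition by appealing to reachability.
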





 \section{Mapping from $\pi$IH to $\pi$IK}\label{sec:sem-corr}
We will now define a mapping from $\pi$IH to $\pi$IK and show that we have an operational correspondence in Theorem~\ref{the:key-ext-eq}. The extrusion histories store more information than the keys, as they keep track of which names were substituted, as illustrated by Example~\ref{ex:ext-key-sub}. This means we lose some information in our mapping, but not information we need.

\begin{example}\label{ex:ext-key-sub}
Consider the processes $(\emptyset,\{(a(b),[a(x)][0])\},\emptyset)\vdash\! 0$ and $a(b)[n]$. These are the result of $a(x)$ receiving $b$ in the two different semantics. We can see that the extrusion history remembers that the input name was $x$ before $b$ was received, but the keys do not remember, and when reversing the action could use any name as the input name. This does not make a great deal of difference, as after reversing $a(b)$, the process with the extrusion history can also $\alpha$-convert $x$ to any name.
\end{example}

Since we intend to define a mapping from processes with extrusion histories to processes with keys, we first describe how to add keys to substituted names in a process in Definition~\ref{def:S}. We have a function, $S$, which takes a process, $P_1$, in which we wish to add the key $[n]$ to all those names which were $x$ in a previous state of the process, $P_2$, before being substituted for some other name in an input action with the key $[n]$.

\begin{definition}[Substituting in $\pi$IK-process to correspond with processes with extrusion histories]\label{def:S} Given a $\pi$IK process $P_1$, a $\pi$I-calculus process without keys, $P_2$, a key $n$, and a name $x$, we can add the key $n$ to any names which $x$ has been substituted with, by applying $S(P_1,P_2,[n],x)$, defined as:

\begin{enumerate}
\item $S\left(0,0,[n],x\right)=0$  \vspace{5pt}
\item $S\left(\sum\limits_{i\in I} P_{i1},\sum\limits_{i\in I} P_{i2},[n],x\right)=\sum\limits_{i\in I}S\left(P_{i1},P_{i2},[n],x\right)$ \vspace{5pt}
\item $S\left(P_1\vert Q_1,P_2\vert Q_2,[n],x\right)=S\left(P_1,P_2,[n],x\right)\vert S\left(Q_1,Q_2,[n],x\right)$ \vspace{5pt}
\item $S\left((\nu a) P_1, (\nu b) P_2,[n],x\right)= P_1'$ where: \\
\ind if $x=b$ then $P_1'=P_1$ and otherwise $P_1'=(\nu a) S\left(P_1,P_2,[n],x\right)$. \vspace{5pt}
\item $S\left(\alpha_1.P_1,\alpha_2.P_2,[n],x\right)=\alpha_1'.P_1'$ where: \\
\ind if $\alpha_2\in \{\inp{x}{c},\outp{x}{c}\}$ then $\alpha_1'=\alpha_{1_{[n]}}$ and otherwise $\alpha_1'=\alpha_1$; \\
\ind if $\alpha_2\in \{\inp{c}{x},\outp{c}{x}\}$ then $P_1'=P_1$ and otherwise $P_1'=S\left(P_1,P_2,[n],x\right)$. \vspace{5pt}
\item $S\left(\alpha_1[m].P_1,\alpha_2.P_2,[n],x\right)=\alpha_1'[m].P_1'$ where: \\
\ind if $\alpha_2\in \{\inp{x}{c},\outp{x}{c}\}$ then $\alpha_1'=\alpha_{1_{[n]}}$ and otherwise $\alpha_1'=\alpha_1$; \\
\ind if $\alpha_2\in \{\inp{c}{x},\outp{c}{x}\}$ then $P_1'=P_1$ and otherwise $P_1'=S\left(P_1,P_2,[n],x\right)$.\vspace{5pt}
\item $S\left(!P_1, !P_2,[n],x\right)= {!S\left(P_1,P_2,[n],x\right)}$ \vspace{5pt}
\item $S\left(P_1\vert P_1', !P_2,[n],x\right)= S\left(P_1,!P_2,[n],x\right)\vert S\left(P_1',P_2,[n],x\right)$ \vspace{5pt}
\item $S\left(!P_1,P_2\vert P_2', [n],x\right)=S\left(!P_1,P_2,[n],x\right)\vert S\left(P_1,P_2',[n],x\right)$
\end{enumerate}

\noindent where $\inp{a}{b}_{[n]}=\inp{a_{[n]}}{b}$ and $\outp{a}{b}_{[n]}=\outp{a_{[n]}}{b}$ 
\end{definition}

Being able to annotate our names with keys, we can define a mapping, $E$, from extrusion histories to keys in Definition~\ref{def:E}. $E$ iterates over the extrusions, having one process which builds $\pi$IK-process, and another that keeps track of which state of the original $\pi$IH process has been reached. When turning an extrusion into a keyed action, we use the locations as key and also give each extrusion an extra copy of its location to use for determining where the action came from. This way we can use one copy to iteratively go through the process, removing splits from the path as we go through them, while still having another intact copy of the location to use as the final key. {In $E(\mathbf{H}\vdash\! P,P')$, $\mathbf{H}$ is a history of extrusions which need to be turned into keyed actions, $P$ is the process these keyed actions should be added to, and $P'$ is the state the process would have reached, had the added extrusions been reversed instead of turned into keyed actions.}

If $E$ encounters a parallel composition in $P$ (case 2), it splits its extrusion histories in three. One part, $\mathbf{H}_{\mathsf{shared}}$ contains the locations which have an empty path, and therefore belong to actions from before the processes split. Another part contains the locations beginning with $0$, and goes to the first part of the process. And finally the third part contains the locations beginning with $1$, and goes to the second part of the process. 

{$E$ can add an action -- and the choices not picked when that action was performed -- to $P$ (cases 3,4) when the associated location has an empty path and has $P'$ as its result process.} 
When turning an input memory from the history into a past input action in the process (case 4), we use $S$ (Definition~\ref{def:S}) to add keys to the substituted names. When $E$ encounters a restriction (case 5), it moves a memory that can be used inside the restriction inside. It does this iteratively until there are no such memories left in the extrusion histories. We apply $E$ to a process in Example~\ref{ex:E}.
\begin{definition}
The function $\mathsf{lcopy}$ gives each member of an extrusion history an extra copy of its location:

\begin{tabular}{l}
$\mathsf{lcopy}(H^*)=\{(\mu,u,u)\mid (\mu,u)\in H^*\}$ \\ $\mathsf{lcopy}(\overline{H},\underline{H},H)=(\mathsf{lcopy}(\overline{H}),\mathsf{lcopy}(\underline{H}),\mathsf{lcopy}(H))$\\
\end{tabular}
\end{definition}

\begin{definition}\label{def:E}
Given a $\pi$IH process, $\mathbf{H}\vdash\! P$, we can create an equivalent $\pi$IK process, $E(\mathsf{lcopy}(\mathbf{H})\vdash\! P,P)=P'$ defined as 
\begin{enumerate}
\item $E((\emptyset,\emptyset,\emptyset)\vdash\! P,P')=P$


\vspace{5pt}
\item $E(\mathbf{H}\vdash\! P_0\vert P_1,P_0'\vert P_1')=E(\mathbf{H}_{\mathsf{shared}}\vdash\! P_0'' \vert P_1'' ,P_0'''\vert P_1''')$ where:

$\;\;\; \mathbf{H}_{\mathsf{shared}}=(\{(\alpha,u,u')\mid {(\alpha,u,u')\in \overline{H}} \text{ and }u\neq iu''\},\{(\alpha,u,u')\mid {(\alpha,u,u')\in \underline{H}}$ 

$\;\;\; {\text{and }} {u\neq iu''}\},\emptyset)$

$\ind P_0''=E((\overline{H_0},\underline{H_0},H_0)\vdash\! P_0,P_0')$ where:

$\ind\ind \overline{H_0}=\{(\outp{a}{b},u_0,u_0')\mid (\outp{a}{b},0u_0,u_0')\in \overline{H} \text{ or }{(\outp{a}{b},\alpha_1,\lrangles{0u_0,1u_1},u_0')}\in H\}$

$\ind\ind \underline{H_0}=\{ (\inp{a}{b},u_0,u_0')\mid (\inp{a}{b},0u_0,u_0')\in \underline{H} \text{ or }{(\inp{a}{b},\alpha_1,\lrangles{0u_0,1u_1},u_0')}\in H\}$

$\ind\ind H_0=\{(\alpha,\alpha',u,u')\mid (\alpha,\alpha',0u,u')\in H\}$

$\ind P_1''=E((\overline{H_1},\underline{H_1},H_1)\vdash\! P_1,P_1'))$ where:

$\ind\ind \overline{H_1}=\{(\outp{a}{b},u_1,u_1')\mid (\outp{a}{b},1u_1,u_1')\in \overline{H} \text{ or }{(\alpha_0,\outp{a}{b},\lrangles{0u_0,1u_1},u_1')}\in H\}$

$\ind\ind \underline{H_1}=\{(\inp{a}{b},u_1,u_1')\mid (\inp{a}{b},1u_1,u_1')\in \underline{H} \text{ or }{(\alpha_0,\inp{a}{b},\lrangles{0u_0,1u_1},u_1')}\in H\}$

$\ind\ind H_1=\{(\alpha,\alpha',u,u')\mid (\alpha,\alpha',1u,u')\in H\}$

$\mathbf{H_i}\vdash\! P_i' \xsquigarrow{\alpha_{i,0}}{u_{i,0}} \dots \xsquigarrow{\alpha_{i,n}}{u_{i,n}} (\emptyset,\emptyset,\emptyset)\vdash\! P_i'''$ for $i\in \{0,1\}$
\vspace{5pt}
\item $E((\overline{H}\cup \{(\outp{a}{b},[Q][P'],u)\},\underline{H},H)\vdash\! P,P')=E(\mathbf{H}\vdash\! \outp{a}{b}\left[u\right].P+\sum\limits_{i\in I\setminus \{j\}} \alpha_i.P_i,Q)$ 

$\ind$ if $Q=\sum_{i\in I} \alpha_i.P_i$, $\outp{a}{b}=\alpha_j$, and $P'=P_j$
\vspace{5pt}
\item $E((\overline{H},\underline{H}\cup\{(\inp{a}{b},[Q][P'],u)\},H)\vdash\! P,P')=$\\ \hspace*{\fill} $E(\mathbf{H}\vdash\! \inp{a}{b}\left[u\right].S(P,P_j,[u],x)+\sum\limits_{i\in I\setminus \{j\}} \alpha_i.P_i,Q)$ 

$\ind$ if $Q=\sum_{i\in I} \alpha_i.P_i$, $\inp{a}{x}=\alpha_j$, and $P'=P_j[x:=b]$
\vspace{5pt}
\item $E(\mathbf{H}\vdash\! (\nu x)P, (\nu x)P')=E(\mathbf{H}-(\alpha,u,u')\vdash\! P'',(\nu x)Q')$ 

$\ind$ where $P''=(\nu x)E((\emptyset,\emptyset,\emptyset)+(\alpha,u,u')\vdash\! P,P')$ 

$\ind$ if $(\alpha,u,u')\in \overline{H}\cup\underline{H}$ and $(\emptyset,\emptyset,\emptyset)+(\alpha,u,u)\vdash\! P \xsquigarrow{\alpha}{u} (\emptyset,\emptyset,\emptyset)\vdash\! Q'$
\vspace{5pt}
\item $E(\mathbf{H}\vdash\! !P,!P')=E(\mathbf{H}\vdash\! !P\vert P,!P'\vert P')$ if there exists $(\alpha,u,u')\in \overline{H}\cup\underline{H}\cup H$ such that $u\neq [Q][Q']$.
\end{enumerate}
\end{definition}

\begin{example}\label{ex:E}
We will now apply $E$ to the process $$(\{(\outp{b}{c},u_2)\},\emptyset,\{(\inp{b}{a},\outp{b}{a},\langle{0u_0,1u_1}\rangle )\})\vdash\! \inp{a}{x}\mid 0$$ with locations $u_0=[\inp{b}{y}.\inp{y}{x}][\inp{a}{x}]$, $u_1=[\outp{b}{a}][0]$, and $u_2=[\outp{b}{c}.(\inp{b}{y}.\inp{y}{x}\mid\outp{b}{a}][\inp{b}{y}.\inp{y}{x}\mid\outp{b}{a}]$. We perform $$E(\mathsf{lcopy}((\{(\outp{b}{c},u_2)\},\emptyset,\{(\inp{b}{a},\outp{b}{a},\langle{0u_0,1u_1}\rangle )\}))\vdash \inp{a}{x}\mid 0,\inp{a}{x}\mid 0)$$

Since we are at a parallel, we use Case 2 of Definition~\ref{def:E} to split the extrusion histories into three to get $E((\{(\outp{b}{c},u_2,u_2)\},\emptyset,\emptyset)\vdash\!P_0\mid P_1,\inp{b}{y}.\inp{y}{x}\mid\outp{b}{a})$ where $P_0=E((\emptyset,\{(\inp{b}{a},u_0,\langle{0u_0,1u_1}\rangle)\},\emptyset)\vdash\! \inp{a}{x},\inp{a}{x})$ and $P_1= E((\{(\outp{b}{a},u_1,\langle{0u_0,1u_1}\rangle)\},\emptyset,\emptyset)\vdash\! 0,0)$. 

To find $P_0$, we look at $u_0$, and find that it has $\inp{a}{x}$ as its result, meaning we can apply Case 4 to obtain $E((\emptyset,\emptyset,\emptyset)\vdash\! \inp{b}{a}[\langle{0u_0,1u_1}\rangle].S(\inp{a}{x},\inp{y}{x},[\langle{0u_0,1u_1}\rangle],y), \inp{b}{y}.\inp{y}{x})$.
And by applying Case 5 of Definition~\ref{def:S}, $S(\inp{a}{x},\inp{y}{x},[\langle{0u_0,1u_1}\rangle],y)=\inp{a_{{[\langle{0u_0,1u_1}\rangle]}}}{x}$. Since we have no more extrusions to add, we apply Case 1 to get our process $P_0=\inp{b}{a}[\langle{0u_0,1u_1}\rangle].\inp{a_{{[\langle{0u_0,1u_1}\rangle]}}}{x}$.

To find $P_1$, we similarly look at $u_1$ and find that we can apply Case 3. This gives us $P_1=\outp{b}{a}[\langle{0u_0,1u_1}\rangle].0$.

We can then apply Case 3 to $E((\{(\outp{b}{c},u_2,u_2)\},\emptyset,\emptyset)\vdash\! P_0 \mid P_1,\inp{b}{y}.\inp{y}{x}\mid\outp{b}{a})$. This gives us our final process, $$\outp{b}{c}[k']. \inp{b}{a}[k].\inp{a_{{[k]}}}{x} \mid \outp{b}{a}[k].0$$ where $k=\langle{0u_0,1u_1}\rangle$ and $k'=u_2$
\end{example}
We can then show, in Theorem~\ref{the:key-ext-eq}, that we have an operational correspondence between our two calculi and $E$ preserves transitions. Item 1 states that every transition in $\pi$IH corresponds to one in $\pi$IK process generated by $E$, and Item 2 vice versa. 

\begin{theorem}\label{the:key-ext-eq}
Given a reachable $\pi$IH process, $\mathbf{H}\vdash\! P$, and an action, $\mu$, 
\begin{enumerate}
\item if there exists a location $u$ such that $\mathbf{H}\vdash\! P\xarrowtail{\mu}{u} \mathbf{H}'\vdash\! P'$ then there exists a key, $m$, such that $E(\mathsf{lcopy}(\mathbf{H})\vdash\! P,P)\xarrowtail{\mu[m]}{} E(\mathsf{lcopy}(\mathbf{H}')\vdash\! P',P')$;
\item if there exists a key, $m$, such that $E(\mathsf{lcopy}(\mathbf{H})\vdash\! P,P)\xarrowtail{\mu[m]}{}P''$, then there exists a location, $u$, and a $\pi$IH process, $\mathbf{H}'\vdash\! P'$, such that $\mathbf{H}\vdash\! P\xarrowtail{\mu}{u} \mathbf{H}'\vdash\! P'$ and $P''\equiv E(\mathsf{lcopy}(\mathbf{H}')\vdash\! P',P')$.
\end{enumerate}
\end{theorem}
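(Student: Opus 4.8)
The plan is to prove both items by induction on the derivation of the $\pi$IH transition (item 1) and the $\pi$IK transition (item 2), but the real work is to first set up a precise inductive invariant tying the shape of $E(\mathsf{lcopy}(\mathbf{H})\vdash\! P,P)$ to $\mathbf{H}\vdash\! P$. Concretely, I would prove a structural lemma describing $E$: that $E(\mathsf{lcopy}(\mathbf{H})\vdash\! P,P')$ is obtained from $P$ by (i) prefixing, at each location recorded in $\mathbf{H}$, the corresponding keyed action with key equal to that location (using $\langle 0u_0,1u_1\rangle$ as the common key for the two halves of a communication), (ii) reinstating the discarded summands of each choice, (iii) annotating with $[u]$ exactly the names that were substituted by the input at location $u$ (this is what $S$ does), and (iv) leaving the "current" part of the process — the part not mentioned in $\mathbf{H}$ — untouched. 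Crucially $E$ is well-defined on reachable processes: the cases of Definition~\ref{def:E} do not overlap destructively because at a reachable $\mathbf{H}\vdash\! P$ every extrusion in $\mathbf{H}$ sits at a location that is genuinely reachable in $P$ by the reverse semantics, so the side conditions ($P'=P_j$, $Q=\sum_i\alpha_i.P_i$, the existence of the reverse trace to $(\emptyset,\emptyset,\emptyset)$) are all met; I would cite Proposition~\ref{prop:fwdtorevTrans} and the reachability hypothesis here. Establishing confluence/termination of the rewriting defining $E$ (cases 2–6 strictly decrease $|\mathbf{H}|$ except case 2 which decreases the nesting of $\vert$, and case 6 which is controlled by the "$u\neq[Q][Q']$" guard) is part of this lemma.

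For item 1, I proceed by induction on the derivation of $\mathbf{H}\vdash\! P\xarrowtail{\mu}{u}\mathbf{H}'\vdash\! P'$ using the forward rules of Table~\ref{tab:ext-sem-fwd}. In the base cases $[\text{OUT}]$ and $[\text{IN}]$, $P=\sum_i\alpha_i.P_i$ and $\mathbf{H}'=\mathbf{H}+(\alpha,u)$; by the structural lemma, $E(\mathsf{lcopy}(\mathbf{H}')\vdash\! P',P')$ is exactly $E(\mathsf{lcopy}(\mathbf{H})\vdash\!-,-)$ with one extra keyed prefix $\alpha[u]$ (and, for input, the $S$-annotation of the substituted name) sitting in front of the relevant branch — which is precisely the effect of the $\pi$IK rule $[\text{IN}]$/$[\text{OUT}]$ firing with key $m=u$. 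The $[\text{PAR}_i]$ case uses the induction hypothesis on the $i$-th component together with the fact that $E$'s case 2 splits $\mathbf{H}$ along paths in exactly the way $[\check i]$/$i(-)$ do; the freshness side condition "$n\notin\fn(P_{1-i})$" matches the $\pi$IK side condition "$b\notin\fn(P_1)$" and the key-freshness $\mathsf{fsh}[n](P_1)$ follows because locations starting with $i$ are distinct from those in the other component. The $[\text{COM}_i]$ case combines two sub-derivations and produces the common key $m=\langle 0v_0,1v_1\rangle$, matching the way $E$ feeds the triple in $H$ to both halves in case 2. $[\text{SCOPE}]$ matches $(\nu a)$ in $\pi$IK (note $E$'s case 5 may pull some memories inside the restriction first, but this only rearranges the target up to $\equiv$), and $[\text{STR}]$ is handled by the $\pi$IK structural-congruence rule plus the observation that $E$ respects $\equiv$.

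For item 2, I again induct, this time on the derivation of $E(\mathsf{lcopy}(\mathbf{H})\vdash\! P,P)\xarrowtail{\mu[m]}{}P''$ in Table~\ref{tab:pik-sem}. Using the structural lemma I can read off which constructor of $E(\dots)$ the $\pi$IK rule acted on: if the firing prefix is a fresh (unkeyed) $\alpha$, it must lie in the "current" untouched part of $P$, so $P$ itself can do the corresponding $\pi$IH step by $[\text{OUT}]$/$[\text{IN}]$ (or $[\text{COM}_i]$ when the $\pi$IK $\tau$-rule synchronises two such); if the firing goes under a keyed prefix $\alpha[u]$, the $\pi$IK "$\alpha[n].P\xrightarrow{\mu[m]}\alpha[n].P'$" rule with $m\neq u$ corresponds to the $\pi$IH step happening deeper in the process, and I match it by $[\text{PAR}_i]$/$[\text{SCOPE}]$ after peeling the corresponding layer of $E$. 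The chosen location $u$ in $\pi$IH is recovered as the key $m$ (for non-$\tau$) or the pair encoded in $m$ (for $\tau$); the target $P''$ equals $E(\mathsf{lcopy}(\mathbf{H}')\vdash\! P',P')$ only up to $\equiv$, which is where the "$P''\equiv\dots$" slack in the statement is used — it absorbs both the $\alpha$-conversion freedom illustrated in Example~\ref{ex:ext-key-sub} and the reordering done by $E$'s cases 5–6.

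\textbf{Main obstacle.} The hard part is not any single rule case but the bookkeeping lemma about $E$: showing it is well-defined (cases don't clash) on reachable processes, terminates, and has exactly the "layered" normal form described above, all while tracking the interaction between $S$ (name-key annotation), the duplicated location copies introduced by $\mathsf{lcopy}$, and the path-surgery in case 2. Everything downstream is then a careful but routine induction; getting the invariant statement right — strong enough to push through $[\text{PAR}_i]$ and $[\text{COM}_i]$, yet true — is the crux, and handling $\equiv$ (especially replication via case 6 and the $(\nu a)$-scope-extrusion law) without circularity is the subtlest piece.
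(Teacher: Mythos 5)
There is a genuine gap: your proof only covers forward transitions, but the arrow $\rightarrowtail$ in the statement ranges over both forward and reverse transitions (this is how it is used in the parabola results, where a $\rightarrowtail^*$ trace is decomposed into a backward trace followed by a forward one). You induct on derivations built from Table~\ref{tab:ext-sem-fwd} for item~1 and Table~\ref{tab:pik-sem} for item~2, and never mention Tables~\ref{tab:ext-sem-rev} and~\ref{tab:pik-rev-sem}. The paper's proof treats four cases, and the two reverse ones need extra ingredients that your forward argument does not supply. For the reverse direction of item~1 one must invoke the Loop properties (Propositions~\ref{prop:fwdtorevTrans} and~\ref{the:FwdToRev}) to turn the reverse $\pi$IH step into a forward step from $\mathbf{H}'\vdash P'$, apply the forward correspondence there, and then loop back on the $\pi$IK side; this is where the reachability hypothesis is actually consumed. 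For the reverse direction of item~2 there is a further obligation you do not address even implicitly: after $E(\mathsf{lcopy}(\mathbf{H})\vdash P,P)\xrsquigarrow{\mu[m]}P''$ one must show that $P''$ is, up to $\equiv$, in the image of $E$ at all, i.e.\ that some $\mathbf{H}'\vdash P'$ with $P''\equiv E(\mathsf{lcopy}(\mathbf{H}')\vdash P',P')$ exists. The paper establishes this via the parabola results (Theorem~\ref{the:paraK} and Proposition~\ref{the:para}): $P''$ is forwards reachable from a standard process, and forward reachability keeps one inside the image of $E$. Without this step the reverse half of item~2 does not go through.

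The forward half of your proposal is sound and essentially a reorganisation of the paper's argument: where you posit a structural normal-form lemma for $E$ and induct on the transition derivation, the paper instead inducts on the size of $\overline{H}\cup\underline{H}\cup H$ together with the structure of $P$, taking $E(\,(\emptyset,\emptyset,\emptyset)\vdash P,P)=P$ as the base case and observing that each additional history entry only inserts a keyed prefix or an unused summand, past which actions propagate. Your identification of the well-definedness/termination of $E$ and the role of $S$ (the paper's Lemma~\ref{lem:S}) as the crux of the bookkeeping is accurate for that half. But as written the proposal proves only half of each item of the theorem.
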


\section{Bundle event structures}\label{sec:BES}
 In this section we will recall the definition of \emph{labelled reversible bundle event structures} (LRBESs), which we intend to use later to define the event structure semantics of $\pi$IK and through that $\pi$IH. We also describe some operations on LRBESs, which our semantics will make use of. This section is primarily a review of definitions from~\cite{EFG2018}. We use bundle event structures, rather than the more common prime event structures, because LRBESs yield more compact event structures with fewer events and simplifies parallel composition. 
 
 An LRBES consists of a set of events, $E$, a subset of which, $F$, are reversible,and three relations on them. The bundle relation, $\mapsto$, says that if $X\mapsto e$ then one of the events of $X$ must have happened before $e$ can and all events in $X$ are in conflict with each other. The conflict relation, $\cf$, says that if $e\cf e'$ then $e$ and $e'$ cannot occur in the same configuration. The prevention relation, $\rhd$, says that if $e\rhd \underline{e'}$ then $e'$ cannot reverse after $e$ has happened. Since the event structure is labelled, we also have a set of labels $\mathsf{Act}$, and a labelling function $\lambda$ from events to labels. We use $\underline{e}$ to denote $e$ being reversed, and $e^*$ to denote either $e$ or $\underline{e}$.
 
 	\begin{definition}[Labelled Reversible Bundle Event Structure \cite{EFG2018}] 
		A labelled reversible bundle event structure is a 7-tuple $\M{E}=(E,F,\mapsto,\cf ,\rhd,\lambda,\mathsf{Act})$ where:
		\begin{enumerate}
			\item $E$ is the set of events;
			\item $F\subseteq E$ is the set of reversible events;
			\item the bundle set, ${\mapsto}\subseteq 2^E\times (E\cup \underline{F})$, satisfies $X\mapsto e^*\Rightarrow \forall e_1,e_2\in X.e_1\neq e_2\Rightarrow e_1\cf e_2$ and for all $e\in F$, $\{e\}\mapsto \underline{e}$;
			\item the conflict relation, ${\cf} \subseteq E\times E$, is symmetric and irreflexive;
			\item $\rhd \subseteq E\times \underline{F}$ is the prevention relation.
			\item $\lambda:E\rightarrow \mathsf{Act}$ is a labelling function.
		\end{enumerate}
	\end{definition}
	
%
	
	An event in a LRBES can have multiple possible causes as defined in Definition~\ref{def:Pos-Cause}. A possible cause $X$ of an event $e$ is a conflict-free set of events which contains a member of each bundle associated with $e$ and contains possible causes of all events in $X$.
	\begin{definition}[Possible Cause]\label{def:Pos-Cause}
Given an LRBES, $\M{E}=\LRBES{}$ and an event $e\in E$, $X\subseteq E$ is a possible cause of $e$ if 
\begin{itemize}
\item $e\notin X$, $X \text{is finite}$, whenever $X'\mapsto e$ we have $X'\cap X\neq \emptyset$;
\item for any $ e',e''\in\{e\}\cup X$, we have $e'\not\cf e''$ ($X\cup \{e\}$ is conflict-free);
\item for all $e'\in X$, there exists $X''\subseteq X$, such that $X''$ is a possible cause of $e'$;
\item there does not exist any $X'''\subset X$, such that $X'''$ is a possible cause of $e$.
\end{itemize}
\end{definition}
	
Since we want to compare the event structures generated by a process to the operational semantics, we need a notion of transitions on event structures. For this purpose we use configuration systems (CSs), which event structures can be translated into.

	\begin{definition}[Configuration system \cite{journals/jlp/PhillipsU15}]\label{def:CS}
		A \emph{configuration system} (CS) is a quadruple $\M{C} = \CS{}$ where $E$ is a set of events, $F\subseteq E$ is a set of reversible events, $\mathsf{C}\subseteq 2^E$ is the set of configurations, and $\rightarrow\subseteq \mathsf{C}\times 2^{E\cup\underline{F}} \times \mathsf{C}$ is a labelled transition relation such that if $X\CStrans{A}{B} Y$ then:
		\begin{itemize}
			\item $X,Y\in \mathsf{C}$, $A\cap X=\emptyset$; $B\subseteq X\cap F$; and $Y=(X\setminus B)\cup A$;
			\item for all $A'\subseteq A$ and $B'\subseteq B$, we have $X\CStrans{A'}{B'} Z \CStrans{(A\setminus A')}{(B\setminus B')}Y$, meaning $Z=(X\setminus B')\cup A'\in \mathsf{C}$.
		\end{itemize}
	\end{definition}
	\begin{definition}[From LRBES to CS \cite{EFG2018}]\label{def:RBEStoCS}
		We define a mapping $C_{br}$ from LRBESs to CSs as:
	 $C_{br}(\LRBES{})=\CS{}$ where:
			\begin{enumerate}
				\item $X\in\textsf{C}$ if $X$ is conflict-free;
				\item \vspace{-.2cm}For $X,Y\in \textsf{C}$, $A\subseteq E$, and $B\subseteq F$, there exists a transition $X\CStrans{A}{B} Y$ if:
				\begin{enumerate}
					\item $Y=(X\setminus B)\cup A$; $X\cap A=\emptyset$; $B\subseteq X$; and $X\cup A$ conflict-free;
					\item for all $e\in B$, if $e'\rhd \underline{e}$ then $e'\notin X\cup A$;
					\item for all $e\in A$ and $X'\subseteq E$, if $X'\mapsto e$ then $X'\cap (X\setminus B)\neq \emptyset$;
					\item for all $e\in B$ and $X'\subseteq E$, if $X'\mapsto \underline{e}$ then $X'\cap (X\setminus (B\setminus \{e\}))\neq \emptyset$.
				\end{enumerate}
			\end{enumerate}
	\end{definition}
For our semantics we need to define a prefix, restriction, parallel composition, and choice. Causal prefixing takes a label, $\mu$, an event, $e$, and an LRBES, $\M{E}$, and adds $e$ to $\M{E}$ with the label $\mu$ and associating every other event in $\M{E}$ with a bundle containing only $e$. Restriction removes a set of events from an LRBES.
\begin{definition}[Causal Prefixes \cite{EFG2018}]\label{def:CausPref}
Given an LRBES $\M{E}$, a label $\mu$, and an event $e$, $(\mu)(e).\M{E}=(E',F',\mapsto',\cf',\rhd',\lambda',\mathsf{Act}')$ where:
\vspace{-5pt}
\begin{multicols}{2}
\begin{enumerate}
\item $E'=E\cup e$
\item $F'=F\cup e$
\item ${\mapsto'}={\mapsto\cup (\{\{e\}\}\times (E\cup \{\underline{e}\}))}$
\item ${\cf'}={\cf}$
\item $\rhd'=\rhd \cup (E\times \{ \underline{e} \})$
\item $\lambda'=\lambda[e\mapsto \mu]$
\item $\mathsf{Act}'=\mathsf{Act} \cup \{ \mu \}$
\end{enumerate}
\end{multicols}
\end{definition} 

%
Removing a set of labels $L$ from an LRBES removes not just events with labels in~$A$ but also events dependent on events with labels in $L$.

\begin{definition}[Removing labels and their dependants]
Given an event structure $\M{E}=\LRBES{}$ and a set of labels $L\subseteq \mathsf{Act}$, we define $\rho_{\M{E}}(L)=X$ as the maximum subset of $E$ such that 
\begin{enumerate}
\item if $e\in X$ then $\lambda(e)\notin L$;
\item if $e\in X$ then there exists a possible cause of $e$, $x$, such that $x\subseteq X$.
\end{enumerate}
\end{definition}
A choice between LRBESs puts all the events of one event structure in conflict with the events of the others.
\begin{definition}[Choice \cite{EFG2018}]
Given LRBESs $\M{E}_0,\M{E}_1,\dots,\M{E}_n$, the choice between them is $\sum\limits_{0\leq i\leq n}\M{E}_i=\LRBES{}$ where:
\vspace{-5pt}
\begin{multicols}{2}
\begin{enumerate}
\item $E=\bigcup\limits_{0\leq i\leq n} \{i\}\times E_i$
\item $F=\bigcup\limits_{0\leq i\leq n} \{i\}\times F_i$
\item $X\mapsto e^*$ if $e=(i,e_i)$, $X_i\mapsto_i e_i^*$, and $X=\{i\}\times X_i$
\item $(i,e)\cf (j,e')$ if $i\neq j$ or $e\cf_i e'$
\item $(i,e)\rhd (j,e')$ if $i\neq j$ or $e\cf_i e'$
\item $\lambda(j,e)=\lambda_j(e)$
\item $\mathsf{Act}=\bigcup\limits_{0\leq i\leq n} \mathsf{Act}_i$
\end{enumerate}
\end{multicols}
\end{definition}

 \begin{definition}[Restriction \cite{EFG2018}]
Given an LRBES, $\M{E}=\LRBES{}$, restricting $\M{E}$ to $E'\subseteq E$ creates $\M{E}\upharpoonright E'=(E',F',\mapsto',\cf',\rhd',\lambda',\mathsf{Act}')$ where:
\vspace{-5pt}
\begin{multicols}{2}
\begin{enumerate}
\item $F'=F\cap E'$;
\item ${\mapsto'}={\mapsto\cap(\mathcal{P}(E')\times(E'\cup\underline{F'}))}$;
\item ${\cf'}={\cf\cap(E'\times E')}$;
\item ${\rhd'}={\rhd\cap (E'\times\underline{F'})}$;
\item $\lambda'=\lambda\upharpoonright_{E'}$;
\item $\mathsf{Act}=\mathsf{ran}(\lambda')$.
\end{enumerate} 
\end{multicols}
\end{definition}

 For parallel composition we construct a product of event structures, which consists of events corresponding to synchronisations between the two event structures. The possible causes of an event $(e_0,e_1)$ contain a possible cause of $e_0$ and a possible cause of $e_1$.
	\begin{definition}[Parallel \cite{EFG2018}]\label{def:RBESpro} 
		Given two LRBESs $\M{E}_0=\LRBES{0}$ and $\M{E}_1=\LRBES{1}$, their parallel composition $\M{E}_0\times \M{E}_1=\LRBES{}$ with projections $\pi_0$ and $\pi_1$ where:
		\begin{enumerate}
			\item $E=E_0\times_* E_1=\{(e,*) \mid e\in E_0\} \cup \{(*,e) \mid e\in E_1\} \cup \{(e,e') \mid e\in E_0\text{ and } e'\in E_1\}$;
			\item $F=F_0\times_* F_1=\{(e,*) \mid e\in F_0\} \cup \{(*,e) \mid e\in F_1\} \cup \{(e,e') \mid e\in F_0\text{ and } e'\in F_1\}$;
			\item
			for $i \in \{0,1\}$ we have
			$(e_0,e_1)\in E$, $\pi_i((e_0,e_1))=e_i$;
			\item for any $e^*\in E\cup \underline{F}$, $X\subseteq E$, $X\mapsto e^*$ iff there exists $i\in \{0,1\}$ and $X_i\subseteq E_i$ such that $X_i\mapsto \pi_i(e)^*$ and $X=\{e'\in E\mid \pi_i(e')\in X_i\}$;
			\item for any $e,e'\in E$, $e\cf e'$ iff there exists $i\in \{0,1\}$ such that $\pi_i(e)\cf_i\pi_i(e')$, or $\pi_i(e)=\pi_i(e')\neq \bot$ and $\pi_{1-i}(e)\neq \pi_{1-i}(e')$;
			\item for any $e\in E$, $e'\in F$, $e\rhd \underline{e'}$ iff there exists $i\in \{0,1\}$ such that $\pi_i(e)\rhd_i\underline{\pi_i(e')}$.
			
 \item $\lambda(e)=\begin{dcases*}
 \lambda_0(e_0) & if $e=(e_0,*)$\\
 \lambda_1(e_1) & if $e=(*,e_1)$\\
\tau & if $e=(e_0,e_1)$ and either $\lambda_0(e_0)=\inp{a}{x}$ and $\lambda_1(e_1)=\outp{a}{x}$\\ & or $\lambda_0(e_0)=\outp{a}{x}$ and $\lambda_1(e_1)=\inp{a}{x}$\\
0 & otherwise
 \end{dcases*}$ \vspace{5pt}
 \item $\mathsf{Act}=\{\tau\}\cup \mathsf{Act}_0\cup \mathsf{Act}_1$
		\end{enumerate}
	\end{definition}
 \section{Event structure semantics of $\pi$IK}\label{sec:Den-Ev-Sem} 
 In this section we define event structure semantics of $\pi$IK using the LRBESs and operations defined in Section~\ref{sec:BES}. Theorems~\ref{the:PtoLRBEStrans} and~\ref{the:PtoLRBEStrans2} give us an operational correspondence between a $\pi$IK process and the generated event structure. Together with Theorem~\ref{the:key-ext-eq}, this gives us a correspondence between a $\pi$IH process and the event structure it generates by going via a $\pi$IK process.

As we want to ensure that all free and bound names in our process are distinct, we modify our syntax for replication, assigning each replication an infinite set, $\mathbf{x}$, of names to substitute into the place of bound names in each created copy of the process, so that $$!_{\mathbf{x}} P\equiv {!_{\mathbf{x}\setminus \{x_0,\dots, x_k\}} P}\vert {P\{\rfrac{x_0,\dots,x_k}{a_0,\dots,a_k}\}}\text{ if }\{x_0,\dots, x_k\}\subseteq \mathbf{x}\text{ and }\bn(P)=\{a_0,\dots,a_k\}$$

Before proceeding to the semantics we also define the standard bound names of a process $P$, $\mathsf{sbn}(P)$, meaning the names that would be bound in $P$ if every action was reversed, in Definition~\ref{def:sbn}.

\begin{definition}\label{def:sbn}
The standard bound names of a process $P$, $\mathsf{sbn}(P)$, are defined as:\vspace{5pt}

\begin{tabular}{ll}
$\mathsf{sbn}(a(x).P')=\{x\}\cup \mathsf{sbn}(P')$ & $\mathsf{sbn}(a(x)[m].P')=\{x\}\cup \mathsf{sbn}(P')$ \\ $\mathsf{sbn}(\overline{a}(x).P')=\{x\}\cup \mathsf{sbn}(P')$ &
$\mathsf{sbn}(\overline{a}(x)[m].P')=\{x\}\cup \mathsf{sbn}(P')$ \\ $\mathsf{sbn}(P_0\vert P_1)=\mathsf{sbn}(P_0) \cup \mathsf{sbn}(P_1)$ \hspace{10pt} & $\mathsf{sbn}(P_0 + P_1)=\mathsf{sbn}(P_0) \cup \mathsf{sbn}(P_1)$ \\
$\mathsf{sbn}( \nu x) P' =\{x\} \cup \mathsf{sbn}(P')$ & $\mathsf{sbn}(!_{\mathbf{x}}P)= \mathbf{x}$ \\
\end{tabular}
\end{definition}

	We can now define the event structure semantics in Table~\ref{tab:pik-den-sem}. We do this using rules of the form $\lrBraces{P}_{(\M{N},l)}=\lrangles{\M{E},\I,k}$ where $l$ is the level of unfolding of replication, $\M{E}$ is an LRBES, $\I$ is the initial configuration, $\M{N}\supseteq n(P)$ is a set of names, which any input in the process could receive, and $k:\I\rightarrow\mathcal{K}$ is a function assigning communication keys to the past actions, which we use in parallel composition to determine which synchronisations of past actions to put in $\I$. We define $\lrBraces{P}_{\M{N}}=\sup_{l\in \mathbb{N}} \lrBraces{P}_{(\M{N},l)}$
	
	{The denotational semantics in Table~\ref{tab:pik-den-sem} make use of of the LRBES operators defined in Section~\ref{sec:BES}. The choice and output cases are straightforward uses of the choice and causal prefix operators. The input creates a case for prefixing an input of each name in $\M{N}$ and a choice between the cases. We have two cases for restriction, one for restriction originating from a past communication and another for restriction originating from the original process. If the restriction does not originate from the original process, then we ignore it, otherwise we remove events which would use the restricted channel and their causes. The parallel composition uses the parallel operator, but additionally needs to consider link causation caused by the early semantics.} Each event labelled with an input of a name in standard bound names gets a bundle consisting of the event labelled with the output on that name. And each output event is prevented from reversing by the input names receiving that name. {This way, inputs on extruded names are caused by the output that made the name free. Replication substitutes the names and counts down the level of replication.}
	
	Note that the only difference between a future and a past action is that the event corresponding to a past action is put in the initial state and given a communication key.
	
\begin{table}{\small\begin{tabular}{ll}
 \hspace{-0cm}$\lrBraces{0}_{(\M{N},l)}=$ & $\lrangles{(\emptyset,\emptyset,\emptyset,\emptyset,\emptyset, \emptyset, \emptyset),\emptyset,\emptyset}$ \vspace{5pt}\\
 \hspace{-0cm}$\lrBraces{P_0+P_1}_{(\M{N},l)}=$ & $\lrangles{ \M{E}_0+\M{E}_1,\{0\}\times\I_0\cup \{1\}\times \I_1, k((i,e))=k_i(e)}$ where \\
  \hspace{-0cm} & $\lrBraces{P_i}=\lrangles{\M{E}_i,\I_i,k_i}$ for $i\in \{0,1\}$ \vspace{5pt}\\ 
 \hspace{-0cm}$\lrBraces{\outp{a}{n}.P}_{(\M{N},l)}=$ & $\lrangles{\outp{a}{n}(e).\M{E}_{P},\I_{P},k_{P}}$ for some fresh $e\notin E$ where \\
\hspace{-0cm} & $\lBrace P \rBrace_{(\M{N},l)}= \lrangles{\M{E}_{P},\I_{P},k_{P}}$ \vspace{5pt}\\
\hspace{-0cm}$\lrBraces{\inp{a}{x}.P}_{(\M{N},l)}=$ & $\lrangles{\sum\limits_{n\in (\M{N}\setminus \mathsf{sbn}(P))} \inp{a}{n}(e).\M{E}_{P_{n}},\bigcup\limits_{n\in (\M{N}\setminus \mathsf{sbn}(P))}\{n\}\times\I_{P_{n}},(n,e)\mapsto k_{P_{n}}(e)}$ \\ & for some fresh $e_n\notin E_n$ where \\
\hspace{-0cm} & $\lBrace P[x:= n] \rBrace_{(\M{N},l)}= \lrangles{\M{E}_{P_{n}},\I_{P_{n}},k_{P_{n}}}$ \vspace{5pt}\\
\hspace{-0cm}$\lrBraces{\outp{a}{n}[m].P}_{(\M{N},l)}=$\hspace{-0cm} & $\lrangles{\outp{a}{n}(e).\M{E}_{P},\I_{P}\cup \{e\},k_{P}[e\mapsto m]}$ for some fresh $e\notin E$ where \\
\hspace{-0cm} & $\lBrace P \rBrace_{(\M{N},l)}= \lrangles{\M{E}_{P},\I_{P},k_{P}}$ \vspace{5pt}\\
\hspace{-0cm}$\lrBraces{\inp{a}{b}[m].P}_{(\M{N},l)}=$\hspace{-0cm} & $\lrangles{\sum\limits_{n\in (\M{N}\setminus \mathsf{sbn}(P))} \inp{a}{n}(e_n).\M{E}_{P_{n}},(\bigcup\limits_{n\in (\M{N}\setminus \mathsf{sbn}(P))}\{n\}\times\I_{P_{n}})\cup \{(b,e_b)\},k}$ \\ & for some fresh $e_n\notin E_n$ where \\
\hspace{-0cm} & $\lBrace P[b_{[m]}:= n] \rBrace_{(\M{N},l)}= \lrangles{\M{E}_{P_{n}},\I_{P_{n}},k_{P_{n}}}$ \\
& $k((n,e))=\begin{dcases*}
m & if $e=e_b$ and $n=b$\\
k_{P_{n}}(e) & otherwise \\
\end{dcases*}$ \vspace{5pt}\\

\hspace{-0cm} $\lrBraces{(\nu a)P}_{(\M{N},l)}=$ & $\lrangles{\M{E}\upharpoonright E_{\alpha},\I\cap E_{\alpha}, k\upharpoonright E_{\alpha})}$ where: \\
\hspace{-0cm} & $\lBrace P \rBrace_{(\M{N},l)}= \lrangles{\M{E},\I,k}$ \\
\hspace{-0cm} & $E_{\alpha}=\rho(\{\alpha\mid a\in n(\alpha)\}$ \\
\hspace{-0cm} & if whenever there exist past actions $\inp{b}{a}[m]$ and $\outp{b}{a}[m]$ in $P$ then \\ & they are guarded by a restriction $(\nu a)$ in $P$\vspace{5pt}\\
\hspace{-0cm} $\lrBraces{(\nu a)P}_{(\M{N},l)}=$ & $\lrangles{\M{E},\I, k}$ where: \\
\hspace{-0cm} & $\lBrace P \rBrace_{(\M{N},l)}= \lrangles{\M{E},\I,k}$ \\
\hspace{-0cm} & if there exist past actions $\inp{b}{a}[m]$ and $\outp{b}{a}[m]$ in $P$ which\\ & are not guarded by a restriction $(\nu a)$ in $P$\vspace{5pt}\\
\hspace{-0cm} $\lrBraces{P_0\vert P_1}_{(\M{N},l)}=$ & $\lrangles{\LRBES{}\upharpoonright \{e\mid \lambda(e)\neq 0\},\I,k}$  where\\
\hspace{-0cm} & for $i\in \{0,1\}$, $\lBrace P_i \rBrace_l= \lrangles{\M{E}_{i} ,\I_i,k_i}$ \\
 \hspace{-0cm} & $(E_0,F_0,\mapsto_0,\cf_0,\rhd_0)\times (E_0,F_0,\mapsto_0,\cf_0,\rhd_0)=(E,F,\mapsto',\cf,\rhd')$ \\
 & $\I=\{(e_0,*)\vert e_0\in \I_0\text{ and } \nexists e_1\in \I_1.k_1(e_1)=k_0(e_0)\} \cup$ \\ & $ \{(*,e_1)\vert e_1\in \I_1\text{ and } \nexists e_0\in \I_0.k_1(e_1)=k_0(e_0)\} \cup$ \\ & $\{(e_0,e_1)\vert e_0\in \I_0\text{ and } e_1\in \I_1 \text{ and } k_1(e_1)=k_0(e_0)\}$ \\
 & $X\mapsto e$ if $X\mapsto' e$ or there exists $x\in \mathsf{no}(\lambda(e))$ such that  \\ &$X=\{e'\mid\exists a. \lambda(e')=\outp{a}{x} \}$ and $x\in \mathsf{sbn}(P)$\\
 & $e\rhd \underline{e'}$ if either $e\rhd' \underline{e'}$ or there exists $x\in \mathsf{no}(\lambda(e))$ and $a$ such that $\lambda(e')=\overline{a}(x)$ \\
 \hspace{-0cm} & $k(e)=\begin{dcases*}
 k_0(e_0) & if $e=(e_0,*)$\\
 k_1(e_1) & if $e=(*,e_1)$\\
 k_0(e_0) & if $e=(e_0,e_1)$\\
 \end{dcases*}$ \vspace{5pt}\\
\hspace{-0cm} $\lrBraces{!_{\mathbf{x}}P}_{(\M{N},0)}=$ & $\lrangles{(\emptyset,\emptyset,\emptyset, \emptyset, \emptyset, \emptyset, \emptyset),\emptyset,\emptyset}$ \vspace{5pt}\\
\hspace{-0cm} $\lrBraces{!_{\mathbf{x}}P}_{(\M{N},l)}=$ & $\lrBraces{!_{\mathbf{x}\setminus \{x_0,\dots, x_k\}} P\vert P\{\rfrac{x_0,\dots,x_k}{a_0,\dots,a_k}\}}_{(\M{N},l-1)}$ if $\{x_0,\dots, x_k\}\subseteq \mathbf{x}$ \\ & and $\bn(P)=\{a_0,\dots,a_k\}$\vspace{5pt}\\
 \end{tabular}}
 \caption{Denotational event structure semantics of $\pi$IK}\label{tab:pik-den-sem}
 \end{table}

\begin{example}
 Consider the process $\inp{a}{b}[n]\mid \outp{a}{b}[n]$. Our event structure semantics generate a LRBES $\lrBraces{\inp{a}{x}[n]\mid \outp{a}{b[n]}}_{\{a,b,x\}}=\lrangles{\LRBES{},\I,k}$ where:
\[ \begin{array}{lrl}
 E=F = \{\inp{a}{b},\inp{a}{a},\inp{a}{x},\outp{a}{b},\tau\} &\lambda(e)=& e\\
  \{\outp{a}{b}\}\mapsto \inp{a}{b} & \mathsf{Act}=& \{\inp{a}{b},\inp{a}{a},\inp{a}{x},\outp{a}{b},\tau\}\\
  \inp{a}{b}\cf \inp{a}{a},~\inp{a}{b}\cf\inp{a}{x},~\inp{a}{a}\cf\inp{a}{x}, & \I=&\{\tau\} \\
  \inp{a}{b}\cf\tau,~\inp{a}{a}\cf\tau,~\inp{a}{x}\cf\tau,~\outp{a}{b}\cf\tau \;\;\;\;\;\; & k(\tau)=&n\\
  \inp{a}{b} \rhd \outp{a}{b} \\
 \end{array}\] 
 From this we see that (1) receiving $b$ is causally dependent on sending $b$, (2) all the possible inputs on $a$ are in conflict with one another, (3) the synchronisation between the input and the output is in conflict with either happening on their own, and (4) since the two past actions have the same key, the initial state contains their synchronisation.
 \end{example}
 
	We show in Theorems~\ref{the:PtoLRBEStrans} and~\ref{the:PtoLRBEStrans2} that given a process $P$ with a conflict-free initial state, including any reachable process,  performing a transition $P\xrightarrow{\mu[m]} P'$ does not affect the event structure, as $\lrBraces{P}_{\M{N}}$ and $\lrBraces{P'}_{\M{N}}$ are isomorphic. It also means we have an event $e$ labelled $\mu$ such that $e$ is available in $P$'s initial state, and $P'$'s initial state is $P$'s initial state with $e$ added. A similar event can be removed to correspond to a reverse action.
	
\begin{theorem}\label{the:PtoLRBEStrans}
Let $P$ be a forwards reachable process wherein all bound and free names are different and let $\M{N}\supseteq n(P)$ be a set of names. If (1) $\lrBraces{P}_{\M{N}}=\lrangles{\M{E},\I,k}$ where $\M{E}=\LRBES{}$, and $\I$ is conflict-free, and (2) there exists a transition $P\xrightarrow{\mu[m]} P'$ such that $\lrBraces{P'}_{\M{N}}=\lrangles{\M{E}',\I',k'}$, then there exists an isomorphism $f:\M{E}\rightarrow \M{E}'$ and a transition in $C_{br}(\M{E})$, $\I\xrightarrow{\{e\}} X$, such that $\lambda(e)=\mu$, $f\circ k'=k[e\mapsto m]$, and $f(X)=\I'$.
\end{theorem}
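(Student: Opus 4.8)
The statement is a simulation/correspondence result between the operational semantics of $\pi$IK (Table~\ref{tab:pik-sem}) and the denotational event structure semantics (Table~\ref{tab:pik-den-sem}). The natural approach is induction on the derivation of the transition $P \xrightarrow{\mu[m]} P'$, tracking how each operational rule acts on the denotational semantics. Since $\lrBraces{P}_{\M{N}} = \sup_{l} \lrBraces{P}_{(\M{N},l)}$, I would first reduce to a statement about each finite unfolding level $l$ (or argue that for a fixed reachable $P$ only finitely many levels matter, since past actions are finite and the process itself is finite up to replication), so that the induction is on the structure of a finitely-unfolded process. The key observation driving the whole argument is the remark at the end of Section~\ref{sec:Den-Ev-Sem}: the \emph{only} difference between the denotations of a process with a guarded action $\alpha.Q$ and the same process with that action made past, $\alpha[m].Q$, is that the corresponding event $e$ is moved into the initial configuration $\I$ and assigned key $m$ by $k$. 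So performing a forward transition in $\pi$IK amounts exactly to relabelling one prefix $\alpha$ as $\alpha[m]$ (up to substitution effects), which at the event-structure level is precisely ``add $e$ to $\I$, extend $k$ by $e \mapsto m$.''

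\textbf{Case analysis.} The base cases are $[\text{IN}]$ and $[\text{OUT}]$: for $\outp{a}{b}.P \xrightarrow{\outp{a}{b}[m]} \outp{a}{b}[m].P$ with $\mathsf{std}(P)$, compare the causal-prefix clauses for $\outp{a}{n}.P$ and $\outp{a}{n}[m].P$ in Table~\ref{tab:pik-den-sem}; they are literally identical except $\I_P$ becomes $\I_P \cup \{e\}$ and $k_P$ becomes $k_P[e \mapsto m]$, so $f = \mathrm{id}$ works and the transition $\I \xrightarrow{\{e\}} \I \cup \{e\}$ exists in $C_{br}(\M{E})$ because $e$ is a minimal event (no bundles point to it, nothing prevents it, and $\mathsf{std}(P)$ means $\I_P = \emptyset$ so conflict-freeness is trivial). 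The input base case is analogous but must handle the substitution $P' = P[x := b_{[m]}]$ versus the branch $n = b$ of the sum $\sum_{n} \inp{a}{n}(e_n).\M{E}_{P_n}$; here $f$ is the inclusion picking out the $b$-branch, and one uses that $\lBrace P[x:=n] \rBrace$ (future) and $\lBrace P[b_{[m]}:=n] \rBrace$ (past) agree. For the inductive cases --- prefixed-by-past-action, choice, parallel (non-communication side), restriction, and replication --- I would in each case invoke the IH on the premise transition, obtain $f$ and $e$ there, and check that the LRBES operator in question (causal prefix with the extra $\rhd$ and bundles, $+$, $\times$ followed by $\upharpoonright$, $\upharpoonright E_\alpha$, unfolding) transports $f$ to an isomorphism of the composite structures and transports the transition $\I_{\text{sub}} \xrightarrow{\{e\}} X_{\text{sub}}$ to $\I \xrightarrow{\{e'\}} X$ with the right label and key. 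The genuinely new case is communication $P_0 \vert P_1 \xrightarrow{\tau[n]} (\nu b)(P_0' \vert P_1')$ from $P_0 \xrightarrow{\inp{a}{b}[n]} P_0'$ and $P_1 \xrightarrow{\outp{a}{b}[n]} P_1'$: here I apply the IH twice to get events $e_0 \in \M{E}_0$, $e_1 \in \M{E}_1$ with the two complementary labels and the same key $n$; in the product $\M{E}_0 \times \M{E}_1$ the synchronisation event $(e_0,e_1)$ has label $\tau$, and I must show $(e_0,e_1)$ becomes enabled in $\I$ exactly when both $e_0$ and $e_1$ were enabled, that the new $\I$ contains $(e_0,e_1)$ (using the $k_1(e_1) = k_0(e_0)$ clause defining $\I$ for parallel), and that the subsequent restriction $\upharpoonright E_\alpha$ for the freshly bound $b$ (which, being newly restricted and bound, does not remove $(e_0,e_1)$ or any of $\I$) and the $\upharpoonright \{e \mid \lambda(e) \neq 0\}$ filter leave everything intact.

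\textbf{Main obstacles.} I expect the hardest part to be the interaction of the \emph{link-causation} additions in the parallel rule with the inductive bookkeeping --- specifically the extra bundles $\{e' \mid \exists a.\ \lambda(e') = \outp{a}{x}\} \mapsto e$ for $x \in \mathsf{sbn}(P)$ and the extra preventions $e \rhd \underline{e'}$ --- together with keeping ``all bound and free names distinct'' as an invariant under substitution and communication. When a communication on $\inp{a}{b}$/$\outp{a}{b}$ occurs and $b$ is added to $\mathsf{sbn}$ of the residual via the new $(\nu b)$, one must check that the bundle-from-the-output clause for inputs on $b$ is correctly present in $\lrBraces{(\nu b)(P_0'\vert P_1')}$ and matches what the product-then-restrict construction produces; this is where the denotational definition's care about which restrictions ``originate from a past communication'' (the two $(\nu a)P$ clauses) has to be reconciled with the operational $(\nu b)$ introduced by $[\text{COM}]$. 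A secondary technical point is that the transition $\I \xrightarrow{\{e\}} X$ must be shown to \emph{exist in $C_{br}(\M{E})$}, i.e.\ one must verify conditions (a)--(d) of Definition~\ref{def:RBEStoCS}: conflict-freeness of $\I \cup \{e\}$ follows from $\I' = X$ being conflict-free (given as hypothesis (2) via $\lrBraces{P'}$, once we know $f(X) = \I'$), the bundle condition (c) needs that every bundle pointing at $e$ --- structural or link-causal --- already has a member in $\I$, which holds because $e$ corresponds to the action just performed and its causes are exactly the already-past actions, and (b),(d) are vacuous for a purely-forward transition. I would isolate conflict-freeness preservation and the $\mathsf{sbn}$/link-causation compatibility as named sublemmas to keep the induction readable, and rely on Proposition~\ref{the:FwdToRev} only implicitly (it is not needed for this direction).
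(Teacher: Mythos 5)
Your proposal is correct and follows essentially the same route as the paper's proof: induction on the derivation of $P\xrightarrow{\mu[m]}P'$, with the base prefix cases reducing to the observation that a past action only adds its event to $\I$ and extends $k$, the communication case combining two inductive hypotheses into the synchronisation event $(e_0,e_1)$ and noting that the freshly introduced $(\nu b)$ falls under the ``past communication'' restriction clause, and the name-distinctness hypothesis ensuring $\mathsf{no}(\mu)\cap\mathsf{sbn}(P_{1-i})=\emptyset$ so the link-causation bundles do not block the transition. The obstacles you single out (link causation in the parallel operator, the two restriction clauses, and verifying the $C_{br}$ transition conditions) are exactly the points the paper's case analysis addresses.
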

\begin{theorem}\label{the:PtoLRBEStrans2}
Let $P$ be a forwards reachable process wherein all bound and free names are different and let $\M{N}\supseteq n(P)$ be a set of names. If (1) $\lrBraces{P}_{\M{N}}=\lrangles{\M{E},\I,k}$ where $\M{E}=\LRBES{}$, and (2) there exists a transition $\I\xrightarrow{\{e\}} X$ in $C_{br}(\M{E})$, then there exists a transition $P\xrightarrow{\mu[m]} P'$ such that $\lrBraces{P'}_{\M{N}}=\lrangles{\M{E}',\I',k'}$ and an isomorphism $f:\M{E}\rightarrow \M{E}'$ such that $\lambda(e)=\mu$, $f\circ k'=k[e\mapsto m]$, and $f(X)=\I'$.
\end{theorem}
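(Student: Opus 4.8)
The plan is to prove Theorem~\ref{the:PtoLRBEStrans2} by structural induction on $P$, mirroring the structure of the definition of $\lrBraces{\cdot}_{(\M{N},l)}$ in Table~\ref{tab:pik-den-sem} and piggybacking on the machinery one would set up for Theorem~\ref{the:PtoLRBEStrans}. First I would fix the level $l$ large enough that $\lrBraces{P}_{(\M{N},l)}$ has stabilised (this is possible since $P$ is finite and only finitely many unfoldings of each replication are reachable), so that we may reason with $\lrBraces{P}_{(\M{N},l)}$ in place of the supremum $\lrBraces{P}_{\M{N}}$; an auxiliary monotonicity lemma (the inclusion maps $\lrBraces{P}_{(\M{N},l)}\hookrightarrow\lrBraces{P}_{(\M{N},l+1)}$ reflect configurations and transitions) handles this reduction. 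The transition $\I\xrightarrow{\{e\}}X$ in $C_{br}(\M{E})$ unpacks, via Definition~\ref{def:RBEStoCS}, into: $X=\I\cup\{e\}$, $\I\cup\{e\}$ conflict-free, and for every bundle $X'\mapsto e$ we have $X'\cap\I\neq\emptyset$. The goal is to read off from $e$ and these conditions a matching $\pi$IK transition $P\xrightarrow{\mu[m]}P'$ with $\mu=\lambda(e)$, and a fresh key $m$, and then to exhibit an isomorphism $f:\M{E}\to\M{E}'$ with $f\circ k'=k[e\mapsto m]$ and $f(X)=\I'$.

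The inductive cases then follow the operators. For prefix $\alpha.P$ or $\alpha[n].P$, the event $e$ available in $\I$ with no proper sub-cause must be the prefixing event itself (every other event has the prefix event in a singleton bundle, and the prefix event is only reversible-guarded, not forward-guarded), so $\mu=\alpha$ and the corresponding rule is the axiom of Table~\ref{tab:pik-sem}; the residual process is $P$ (or $P[x:=n]$ for input), and the induced isomorphism is essentially the identity on the sub-event-structure together with relabelling the new initial event with key $m$. For choice, the conflict clause $(i,e)\cf(j,e')$ when $i\neq j$ forces $e$ to lie in a single summand, and we appeal to the induction hypothesis on that summand plus the $\mathsf{std}$ side-condition (guaranteed because $P$ is reachable, hence has at most one active branch per choice). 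For restriction we use whichever of the two $(\nu a)$ clauses applies: in the ``ignore'' case the event structures coincide and the result is immediate; in the genuine-restriction case $e\in E_\alpha=\rho(\{\alpha\mid a\in n(\alpha)\})$ so $\lambda(e)$ does not mention $a$, matching the side-condition $a\notin n(\mu)$ of the $(\nu a)$ rule, and we restrict the isomorphism obtained from the IH. For replication we unfold once (using the chosen $l$) and reduce to the parallel case.

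The parallel case $P_0\vert P_1$ is the main obstacle, exactly as one expects, and it splits into three sub-cases according to whether $\pi_0(e)$ or $\pi_1(e)$ is $*$. If $e=(e_0,*)$, applying the IH to $P_0$ and the transition $\I_0\xrightarrow{\{e_0\}}X_0$ (which holds because a bundle $X'\mapsto' e$ in the product projects to a bundle $X_0'\mapsto_0 e_0$, and $X'\cap\I\neq\emptyset$ projects to $X_0'\cap\I_0\neq\emptyset$) gives $P_0\xrightarrow{\mu[m]}P_0'$; one must then check the side-conditions of the $\pi$IK parallel rule, namely $\mathsf{fsh}[m](P_1)$ (choose $m$ not occurring as a key of $\I_1$, possible since $\M{N}$ and the process are finite) and, if $\mu=\outp{a}{b}$, that $b\notin\fn(P_1)$. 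This last point is where the added link-causation bundle does its work: if $b\in\fn(P_1)$ and $b\in\mathsf{sbn}(P_0\vert P_1)$ then some input event $e'$ of $P_1$ on $b$ has $\{e'\}\cup\dots\mapsto e'$ with the output event $e$ as its sole available predecessor — but then doing $e$ now would not violate anything, whereas the $\pi$-calculus side-condition genuinely forbids it; the reachability of $P$ and the distinctness-of-names hypothesis are used to rule this situation out (a free $b$ of $P_1$ that is standard-bound in $P_0$ cannot arise in a reachable, name-distinct process). The synchronisation sub-case $e=(e_0,e_1)$ is handled by applying the IH to both components to get $P_0\xrightarrow{\inp{a}{b}[m]}P_0'$ and $P_1\xrightarrow{\outp{a}{b}[m]}P_1'$ with a common fresh $m$ (the label clause of Definition~\ref{def:RBESpro} forces complementary labels, and the product conflict $\pi_i(e)=\pi_i(e')$ forces uniqueness), yielding $P_0\vert P_1\xrightarrow{\tau[m]}(\nu b)(P_0'\vert P_1')$; one then checks that the outer $(\nu b)$ introduced by the rule is matched on the event-structure side by the restriction clause of $\lrBraces{(\nu b)(\cdots)}$ — this is the fiddliest bookkeeping, since it requires tracking that $b$ is standard-bound and that the new past actions $\inp{a}{b}[m],\outp{a}{b}[m]$ sit under that restriction so the ``genuine restriction'' clause fires. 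Assembling $f$ from the componentwise isomorphisms $f_0,f_1$ and checking $f\circ k'=k[e\mapsto m]$ and $f(X)=\I'$ then uses the explicit description of $\I$, $\mapsto$, $\rhd$, and $k$ for the product in Table~\ref{tab:pik-den-sem}; the compatibility of $f$ with the extra link-causation bundles and prevention pairs is the remaining routine-but-delicate verification.
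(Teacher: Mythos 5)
Your proposal is correct and follows essentially the same route as the paper's proof: a structural induction on $P$ tracking the denotational clauses, with the parallel case split into the three sub-cases $e=(e_Q,*)$, $e=(*,e_R)$, $e=(e_Q,e_R)$, the restriction case split according to which of the two $(\nu a)$ clauses applies, and the isomorphism assembled componentwise from those given by the induction hypothesis. Your additional remarks on stabilising the unfolding level and on discharging the $b\notin\fn(P_1)$ side condition are finer-grained than what the paper records, but they do not change the argument.
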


By Theorems~\ref{the:key-ext-eq},~\ref{the:PtoLRBEStrans}, and~\ref{the:PtoLRBEStrans2} we can combine the event structure semantics of $\pi$IK and mapping $E$ (Definition~\ref{def:E}) and get an operational correspondence between $\mathbf{H}\vdash\! P$ and the event structure $\lrBraces{E(\mathsf{lcopy}(\mathbf{H})\vdash\! P,P)}_{n(E(\mathsf{lcopy}(\mathbf{H})\vdash\! P,P))}$.

\section{Conclusion and future work}
All existing reversible versions of the $\pi$-calculus use reduction semantics~\cite{lanese2010reversing,TIEZZI2015684} or late semantics~\cite{cristescu2013compositional,DBLP:journals/corr/abs-1808-08655}, despite the early semantics being used more widely than the late in the forward-only setting.
We have introduced $\pi$IH, the first reversible early $\pi$-calculus. It is a reversible form of the \emph{internal} $\pi$-calculus,
where names being sent in output actions are always bound.
As well as structural causation, as in CCS, the early form of the internal $\pi$-calculus also has a form of link causation {created by the semantics being early, which is not present in other reversible $\pi$-calculi}.
In $\pi$IH past actions are tracked by using extrusion histories adapted from~\cite{hildebrandt2017stable}, which move past actions and their locations into separate histories for dynamic reversibility.
We mediate the event structure semantics of $\pi$IH via
a statically reversible version of the internal $\pi$-calculus, $\pi$IK, which keeps the structure of the process intact but annotates past actions with keys,
similarly to $\pi$K~\cite{DBLP:journals/corr/abs-1808-08655} and CCSK~\cite{PU07}.
We showed that a process $\pi$IH with extrusion histories can be mapped to a $\pi$IK process with keys, creating an operational correspondence (Theorem~\ref{the:key-ext-eq}).


The event structure semantics of $\pi$IK, and by extension $\pi$IH, are defined inductively on the syntax of the process.
We use labelled reversible bundle event structures~\cite{EFG2018},
rather than prime event structures, to get a more compact representation where each action in the calculus has only one corresponding event.
While causation in the internal $\pi$-calculus is simpler that in the full $\pi$-calculus, our early semantics means that we still have to handle link causation, in the form of an input receiving a free name being caused by a previous output of that free name. We show an operational correspondence between $\pi$IK processes and their event structure representations in Theorems~\ref{the:PtoLRBEStrans} and~\ref{the:PtoLRBEStrans2}.
Cristescu \emph{et al.}~\cite{CristescuKV16} have used rigid families~\cite{CastellanHLW14}, related to event structures, to describe the semantics of R$\pi$~\cite{cristescu2013compositional}. However, unlike our denotational event structure semantics, their semantics require one to reverse every action in the process before applying the mapping to a rigid family, and then redo every reversed action in the rigid family. Our approach of using a static calculus as an intermediate step means we get the current state of the event structure immediately, and do not need to redo the past steps.

\paragraph{Future work:} We could expand the event structure semantics of $\pi$IK to $\pi$K. This would entail significantly more link causation, but would give us event structure semantics of a full $\pi$-calculus.
Another possibility is to expand $\pi$IH to get a full reversible early $\pi$-calculus. 

\paragraph{Acknowledgements:} We thank Thomas Hildebrandt and H{\aa}kon Normann for discussions on how to translate their work on $\pi$-calculus with extrusion histories to a reversible setting. 
We thank the anonymous reviewers of RC 2020 for their helpful comments.

This work was partially supported by an EPSRC DTP award; also by the following EPSRC projects: EP/K034413/1, EP/K011715/1, EP/L00058X/1, EP/N027833/1, EP/T006544/1, EP/N028201/1 and EP/T014709/1; and by EU COST Action IC1405 on Reversible Computation.

\bibliography{bib}
\bibliographystyle{splncs04}
\appendix
\section{Section~\ref{sec:Ext-sem}}\label{app:Ext-sem}

\begin{lemma}\label{lem:SingleExtrusion}
Let $P$ be a process. If there exists an extrusion history $\mathbf{H}$ such that $\mathbf{H}\vdash\! P\xsquigarrow{\mu}{u} \mathbf{H}'\vdash\! P'$ then there exists $L$ such that $\mathbf{H}=\mathbf{H}'+L$, and for any extrusion history $\mathbf{H}''$ not containing $L$, $\mathbf{H}''+ L\vdash\! P\xsquigarrow{\mu}{u} \mathbf{H}''\vdash\! P'$.
\end{lemma}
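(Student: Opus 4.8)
The plan is to proceed by induction on the derivation of the reverse transition $\mathbf{H}\vdash\! P\xsquigarrow{\mu}{u} \mathbf{H}'\vdash\! P'$, using the rules of Table~\ref{tab:ext-sem-rev}. The statement bundles together two claims: (i) that a single reverse step removes exactly one item $L$ from the histories, so that $\mathbf{H}=\mathbf{H}'+L$ where the $+$ operation is as in Definition~\ref{def:ExtOp}(4), and (ii) that the same step is still available from any history $\mathbf{H}''+L$ in which $L$ does not already occur, yielding $\mathbf{H}''+L\vdash\! P\xsquigarrow{\mu}{u} \mathbf{H}''\vdash\! P'$. The key observation making this work is that each reverse rule inspects the histories only to \emph{locate} the item to be removed (an output pair, input pair, or communication triple), together with the ``freshness''/side conditions $n\notin\fn(P_{1-i})$ etc., and these side conditions depend only on the process, not on the rest of the history; so extra unrelated history items neither enable nor disable a step.

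First I would handle the base cases $[\text{OUT}^{-1}]$ and $[\text{IN}^{-1}]$. For $[\text{OUT}^{-1}]$, the rule requires $(\outp{a}{n},u)\in\overline{H}$ and produces $\mathbf{H}'=(\overline{H}\setminus\{(\outp{a}{n},u)\},\underline{H},H)$; take $L=(\outp{a}{n},u)$, so that indeed $\mathbf{H}=\mathbf{H}'+L$. Then for any $\mathbf{H}''$ not containing $L$, the history $\mathbf{H}''+L$ does contain $L$, the only other premises ($u=[\sum\alpha_i.P_i][P_j]$, $\alpha_j=\outp{a}{n}$, $j\in I$) are about $P$ alone, so the rule fires and removes exactly $L$, landing in $\mathbf{H}''$. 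The $[\text{IN}^{-1}]$ case is identical with $L=(\inp{a}{n},u)$. For $[\text{SCOPE}^{-1}]$ the claim follows directly from the induction hypothesis since the restriction leaves the histories untouched; $[\text{STR}^{-1}]$ likewise, as structural congruence here only touches the process.

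The inductive cases $[\text{PAR}_i^{-1}]$ and $[\text{COM}_i^{-1}]$ are where the real work lies. For $[\text{PAR}_i^{-1}]$: the premise is a sub-transition $([\check i]\overline H,[\check i]\underline H,[\check i]H)\vdash\! P_i\xsquigarrow{\alpha}{u} \mathbf{H}_i'\vdash\! P_i'$, to which the IH gives some $L_i$ with $([\check i]\overline H,[\check i]\underline H,[\check i]H)=\mathbf{H}_i'+L_i$; I would set $L = iL_i$ (the item $L_i$ with branch $i$ prepended to its location, matching the operation $iH^*$ of Definition~\ref{def:ExtOp}(1)) and check that the conclusion's history rewrite $(\overline H\setminus[i]\overline H)\cup i\overline{H_i'}$ etc.\ is precisely $\mathbf{H}'+iL_i$ — this is a calculation with the $[i]$, $[\check i]$, $i(-)$ operations that should go through cleanly, using that $[\check i]$ and $i(-)$ are mutually inverse on the $i$-branch part. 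For the robustness half, given $\mathbf{H}''$ not containing $iL_i$, observe $[\check i]$ applied to $\mathbf{H}''+iL_i$ yields $([\check i]\mathbf{H}'')+L_i$ with $[\check i]\mathbf{H}''$ not containing $L_i$, so the IH gives the sub-transition from $([\check i]\mathbf{H}'')+L_i$; the side condition $n\notin\fn(P_{1-i})$ is unchanged, so $[\text{PAR}_i^{-1}]$ reassembles the desired transition. The $[\text{COM}_i^{-1}]$ case is the main obstacle: here $L$ must be the communication triple $(\alpha_i,\alpha_j,\langle 0v_0,1v_1\rangle)$ removed from $H$, but the rule's premises involve sub-transitions whose histories are augmented by $\{(\outp{a}{n},v_i)\}$ and $\{(\inp{a}{n},v_j)\}$ — items that are \emph{not} in $\mathbf{H}$, being synthesised from the stored communication. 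I would need to argue that the IH applied to each premise returns exactly these synthesised items as ``their'' $L$, i.e.\ that the output (resp.\ input) sub-transition removes precisely $(\outp{a}{n},v_i)$ (resp.\ $(\inp{a}{n},v_j)$); this requires that those sub-transitions are necessarily the base rules $[\text{OUT}^{-1}]$/$[\text{IN}^{-1}]$ possibly wrapped in $[\text{PAR}^{-1}]$/$[\text{SCOPE}^{-1}]$/$[\text{STR}^{-1}]$, never another $[\text{COM}^{-1}]$, which holds because the action labels are $\outp{a}{n}$ and $\inp{a}{n}$ rather than $\tau$. Then the net effect on $H$ is removal of the single triple, and the robustness argument proceeds as before, noting that adding the synthesised output/input items to $[\check i]\mathbf{H}''$ (resp.\ $[\check j]\mathbf{H}''$) is still legitimate since $\mathbf{H}''$, not containing $L$, cannot contain those items either (they would have been merged into the triple). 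I expect the bookkeeping around these synthesised history items in $[\text{COM}_i^{-1}]$, and checking the ``not containing $L$'' hypothesis is preserved through $[\check i]$ and the augmentation, to be the most delicate part.
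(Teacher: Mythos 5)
Your proposal is correct and follows essentially the same route as the paper, whose own proof is just a two-sentence sketch of this very case analysis: $[\text{SCOPE}^{-1}]$ and $[\text{PAR}_i^{-1}]$ propagate the history changes, while $[\text{OUT}^{-1}]$, $[\text{IN}^{-1}]$ and $[\text{COM}_i^{-1}]$ each remove exactly one item, which is the only one they depend on. You simply carry out the induction in much more detail than the paper does, including the bookkeeping for the synthesised premises of $[\text{COM}_i^{-1}]$ that the paper leaves implicit.
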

 \begin{proof}

[SCOPE$^{-1}$] and [PAR$^{-1}_i$] simply propagate the changes to extrusion histories, and [COM$^{-1}_i$], [IN$^{-1}$], and [OUT$^{-1}$] remove exactly one extrusion from the histories, which is the only one they depend on.
\end{proof}

 \begin{proof}[Proof of Proposition~\ref{prop:fwdtorevTrans}]
\begin{enumerate}
\item We prove this by induction in $\mathbf{H}\vdash\! P\,\xrightarrow[u]{\alpha} \mathbf{H}' \vdash\! Q$:
 
 \begin{description}
 \item[{[SCOPE]}] In this case $P=(\nu x) P'$ and $Q=(\nu x) Q'$, $x\notin n(\alpha)$, and by induction $\mathbf{H}'\vdash\! Q'\xsquigarrow{\alpha}{u} \mathbf{H}\vdash\! P'$. From rule [SCOPE$^{-1}$] we therefore get $\mathbf{H}'\vdash\! Q\xsquigarrow{\alpha}{u} \mathbf{H}\vdash\! P$.
 \item[{[PAR$_i$]}] In this case $P=P_0\vert P_1$ and $Q=Q_0\vert Q_1$, $P_{1-i}=Q_{1-i}$, in $\alpha=\overline{a}(n)$ then $n\notin \fn(P_{1-i})$, and by induction $([\check{i}]\overline{H}',[\check{i}]\underline{H}',[\check{i}]H')\vdash\! Q_i \xsquigarrow{\alpha}{u} ([\check{i}]\overline{H},[\check{i}]\underline{H},[\check{i}]H)\vdash\! P_i$, meaning according to rule [PAR$_i^{-1}$], $\mathbf{H}'\vdash\! Q\xsquigarrow{\alpha}{u} \mathbf{H}\vdash\! P$.
 \item[{[COM$_i$]}] In this case $P=P_0\vert P_1$ and $Q=Q_0\vert Q_1$, $n\notin \fn(P_j)$, $\overline{H}=\overline{H}'$, $\underline{H}=\underline{H}'$, $H'=H\cup0 \{(n,(0v_0,1v_1))\}$, and by induction and Lemma~\ref{lem:SingleExtrusion}, we have $([\check{i}]\overline{H}',[\check{i}]\underline{H}',[\check{i}]H')\vdash\! Q_i\xsquigarrow{\outp{a}{n}}{v_i} ([\check{i}]\overline{H},[\check{i}]\underline{H},[\check{i}]H)\vdash\! P_i$ and $([\check{j}]\overline{H}',[\check{j}]\underline{H}',[\check{j}]H')\vdash\! Q_j\xsquigarrow{\inp{a}{n}}{v_j} ([\check{j}]\overline{H},[\check{j}]\underline{H},[\check{j}]H)\vdash\! P_j$. This means according to [COM$^{-1}$], $\mathbf{H}'\vdash\! Q\xsquigarrow{\alpha}{u} \mathbf{H}\vdash\! P$.
\item[{[STR]}] In this case $Q\equiv Q'$,$\mathbf{H}'\vdash\! Q' \xsquigarrow{\alpha}{u} \mathbf{H}\vdash\! P'$, and $P'\equiv P$, and by rule [STR$^{-1}$], $\mathbf{H}'\vdash\! Q\xsquigarrow{\alpha}{u} \mathbf{H}\vdash\! P$.
 \item[{[OUT]}] In this Case $P=\sum\limits_{i\in I} \alpha_i.P_i$, $Q=P_j$ and $\alpha=\outp{a}{n}=\alpha_j$ for some $j\in I$, and by [OUT$^{-1}$], $\mathbf{H}'\vdash\! Q\xsquigarrow{\alpha}{u} \mathbf{H}\vdash\! P$.
 \item[{[IN]}] Similar to [OUT].
 \end{description}
\item Similar to previous.
\end{enumerate} 
\end{proof} 
 
  \begin{definition}[Location]
 Given a location $u$, its set of paths is defined as
 $$\loc(u)=\begin{dcases}
 \{l\} & \text{if } u=l[P][P'] \\
 \{ll_0,ll_1\} & \text{if } u=l\lrangles{l_0[P_0][P_0'],l_1[P_1][P_0]} \\
 \end{dcases}$$
 \end{definition}
  To get causal semantics of $\pi$IH, we add a set of causes to each transition, consisting of the previous extrusions, from the output history, which extruded the names of the action.
 \begin{definition}[Causal semantics]\label{def:CauseSem}
 The early causal semantics consist of transitions of the form ${\mathbf{H}\vdash\! P} \,\xrightarrow[u,D]{\alpha} {\mathbf{H}'\vdash\! P}$ where ${\mathbf{H}\vdash\! P} \,\xrightarrow[u]{\alpha} {\mathbf{H}'\vdash\! P}$ and
 \begin{enumerate}
 \item $(n,u) \in D\Rightarrow \exists a.\;(\outp{a}{n},u) \in\overline{H}$;
 \item if $(n,l),(n,l')\in D$ then $l=l'$;
 \item $\mathsf{dom}(D)=\mathsf{dom}(\overline{H})\cap \mathsf{no}(\alpha)$ where $\mathsf{no}(\alpha)$ is the set of non-output names in $\alpha$, defined by $\mathsf{no}(\outp{a}{b})=\{a\}\setminus \{b\}$, $\mathsf{no}(\inp{a}{b})=\{a,b\}$ and $\mathsf{no}(\tau)=\emptyset$.
 \end{enumerate}
 \end{definition}
 
 \begin{definition}[Independence]\label{def:Ind}
 Two locations, $u_0$ and $u_1$, are independent if for all $l_0\in \loc(u_0)$ and $l_1\in \loc(u_1)$, there exist $l$, $l_0'$, $l_1'$ such that either $l_0=l0l_0'$ and $l_1=l1l_1'$ or $l_0=l1l_0'$ and $l_1=l0l_1'$.
 
 Two transitions $\,\xarrowtail{\alpha_0}{u_0,D_0}$ and $\,\xarrowtail{\alpha_1}{u_1,D_1}$ are independent if $u_0$ and $u_1$ are independent, there does not exist $n$ such that $D_i(n)=u_{1-i}$.
 \end{definition}
 \begin{proposition}[Forward diamond \cite{danos2004reversible}]\label{prop:square}
If $\mathbf{H}\vdash\! P\,\xarrowtail{\alpha_0}{u_0,D_0} \mathbf{H}_0\vdash\! P_0$ and $\mathbf{H}\vdash\! P\,\xarrowtail{\alpha_1}{u_1,D_1} \mathbf{H}_1\vdash\! P_1$ are independent transitions then there exists $\mathbf{H}'\vdash\! P'$ such that $\mathbf{H}_0\vdash\! P_0\,\xarrowtail{\alpha_1}{u_1,D_1} \mathbf{H}'\vdash\! P'$ and $\mathbf{H}_1\vdash\! P_1\,\xarrowtail{\alpha_0}{u_0,D_0} \mathbf{H}' \vdash\! P'$.
\end{proposition}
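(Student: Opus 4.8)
\textbf{Proof plan for Proposition~\ref{prop:square} (Forward diamond).}
The plan is to proceed by induction on the derivation of the transition $\mathbf{H}\vdash\! P\,\xarrowtail{\alpha_0}{u_0,D_0} \mathbf{H}_0\vdash\! P_0$, examining in each case how the second transition $\mathbf{H}\vdash\! P\,\xarrowtail{\alpha_1}{u_1,D_1} \mathbf{H}_1\vdash\! P_1$ interacts with it. Since independence of the two transitions requires $u_0$ and $u_1$ to be independent, the locations must diverge at some parallel composition: there is a least common prefix $l$ such that (after peeling off $l$) one location descends into the left component of some $P_0'\vert P_1'$ and the other into the right. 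The core of the argument is therefore the $[\text{PAR}_i]$ and $[\text{COM}_i]$ cases, where I would show the two transitions act on disjoint subprocesses and that the history-splitting operations $[\check i](-)$, $i(-)$ and $[i](-)$ commute in the required way.

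First I would handle the \emph{structural / pass-through} cases. For $[\text{SCOPE}]$ both transitions are derived from transitions on the body $P'$ (the side conditions $x\notin n(\alpha_i)$ are preserved), so the induction hypothesis applied inside the restriction gives the common reduct, and re-applying $[\text{SCOPE}]$ twice closes the diamond. For $[\text{STR}]$ I would use that structural congruence (only $\alpha$-conversion, $!P\equiv{!P\vert P}$, and commuting restrictions) does not affect locations in a way that breaks independence, so one can transport both transitions across $\equiv$ and apply the hypothesis. The base cases $[\text{OUT}]$ and $[\text{IN}]$ cannot both fire from the same prefix sum in an independent way (a single guarded sum produces only one location with empty path, and two such transitions would have $u_0=u_1$, contradicting independence since independent locations must diverge at a parallel), so these are vacuous or reduce to a naming check.

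The main work is the \emph{parallel} case $P=P_0^{\mathrm{pr}}\vert P_1^{\mathrm{pr}}$. After stripping the common prefix $l$, independence forces (WLOG) $u_0=l0u_0'$ heading into $P_0^{\mathrm{pr}}$ and $u_1=l1u_1'$ heading into $P_1^{\mathrm{pr}}$ (or both transitions are $\tau$-communications; then I split into sub-locations and reduce to the input/output components, again landing in disjoint parallel branches). Because the $[\text{PAR}_i]$ rule isolates $([\check i]\overline H,[\check i]\underline H,[\check i]H)$ and the two transitions touch histories tagged with disjoint prefixes, the updates $\overline H\mapsto(\overline H\setminus[i]\overline H)\cup i\overline{H_i'}$ commute: applying the $1$-update after the $0$-update yields the same triple as the other order. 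I also need the side conditions to survive: if $\alpha_0=\outp{a}{n}$ then $n\notin\fn(P_1^{\mathrm{pr}})$ still holds after the second transition provided the second transition does not make $n$ free in $P_1^{\mathrm{pr}}$ — but the only way it could is via an input of $n$, and $D_1(n')\neq u_0$ for all $n'$ by the independence clause on $D$, which together with the freshness discipline on outputs rules this out. This freshness/disjointness bookkeeping — showing the two side conditions and the two causal-dependency sets $D_0,D_1$ remain valid after the swapped transitions — is the step I expect to be the main obstacle; the commutation of the history operations themselves is routine once the prefix-disjointness is made explicit. For the mixed case where one transition is a $[\text{COM}_i]$ and the other a $[\text{PAR}_i]$, the same analysis applies: the communication's two sub-locations $0v_0,1v_1$ together occupy both branches, so independence with a third transition forces that transition to sit strictly inside one branch, disjoint from at least one of the communicating actions, and one descends through $[\text{PAR}]$ on the appropriate side before combining.

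Finally, once the common reduct $\mathbf{H}'\vdash\! P'$ is produced in each case, I would check the \emph{causal} overlay from Definition~\ref{def:CauseSem}: the cause set $D_1$ computed at $\mathbf{H}_0\vdash\! P_0$ agrees with the original $D_1$, because $\mathsf{dom}(D_1)=\mathsf{dom}(\overline H)\cap\mathsf{no}(\alpha_1)$ and the first transition only adds to $\overline H$ an extrusion at location $u_0$, which by independence is not in $D_1$; symmetrically for $D_0$. Hence the labelled causal transitions $\,\xarrowtail{\alpha_1}{u_1,D_1}$ and $\,\xarrowtail{\alpha_0}{u_0,D_0}$ are exactly reproducible from the two intermediate states, completing the diamond.
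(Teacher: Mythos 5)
Your plan follows essentially the same route as the paper's (very terse) proof: both locate the common path prefix $l$ at which independence forces the two locations to diverge into the $0$- and $1$-branches of a parallel composition, decompose the two transitions into transitions of the respective subprocesses using the $[\check{\imath}](-)$ history projections, and recombine via $[\text{PAR}_i]$, with the equality of the two resulting histories following from the commutation of the history-update operations (which the paper delegates to Lemma~15 of Hildebrandt \emph{et al.} rather than arguing directly). Your additional bookkeeping on the pass-through cases and on the stability of the $D$-sets is detail the paper omits, not a different argument.
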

\begin{proof}[Proof of Proposition~\ref{prop:square}]
This proof is similar to Theorem 14 of \cite{hildebrandt2017stable}. We have a path $l$ such that $u_i=l0u_i'$ and $u_{1-i}=l1u_{1-i}'$. If $Q \vert R$ is the parallel composition at location $l$, then $([\check{l0}]\overline{H},[\check{l0}]\underline{H},[\check{l0}]H)\vdash\! Q\,\xarrowtail{\alpha_i}{u_i'}$ and $([\check{l1}]\overline{H},[\check{l1}]\underline{H},[\check{l1}]H)\vdash\! Q\,\xarrowtail{\alpha_{1-i}}{u_{1-i}'}$ and there does not exist $n$ such that $D_i(n)=u_{1-i}$ or there does not exist $n$ such that $D_{1-i}(n)=u_{i}$, and by [PAR$_i$] and [PAR$_i^-1$], this means $\mathbf{H}_0\vdash\! P_0\,\xrightarrow[u_1]{\alpha_1} \mathbf{H}_0'\vdash\! P'$ and $\mathbf{H}_1\vdash\! P_1\,\xrightarrow[u_0]{\alpha_0} \mathbf{H}_1'\vdash\! P'$ and by Lemma 15 of \cite{hildebrandt2017stable}, $\mathbf{H}_0'=\mathbf{H}_1'$.

\end{proof}
%
%

\begin{definition}[Trace equivalence]
We define trace equivalence $\sim$ as the least equivalence relation closed under composition such that:
\[\begin{array}{rcl}
\mathsf{t};\underline{\mathsf{t}} &\sim& \varepsilon_\mathsf{t} \\
\mathbf{H}\vdash\! P\xarrowtail{\alpha_0}{u_0,D_0}\xarrowtail{\alpha_1}{u_1,D_1} \mathbf{H}'\vdash\! P' 
&\sim& \mathbf{H}\vdash\! P\xarrowtail{\alpha_1}{u_1,D_1}\xarrowtail{\alpha_0}{u_0,D_0}  \mathbf{H}'\vdash\! P'\\
&& \text{ if } \xarrowtail{\alpha_1}{u_1,D_1} \text{ and }\xarrowtail{\alpha_0}{u_0,D_0} \text{ are independent}\\
\end{array}\]
\end{definition}

 \begin{proposition}[Parabola]\label{the:para}
 Let $\mathsf{t}$ be a trace, then there exists a forward trace $\mathsf{t}_f$ and a backward trace $\mathsf{t}_b$ such that $\mathsf{t}\sim\mathsf{t}_b;\mathsf{t}_f$.
 \end{proposition}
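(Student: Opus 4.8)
The plan is to prove the Parabola Proposition by induction on the number of forward-after-backward ``bad pairs'' in the trace $\mathsf{t}$, using the Loop Proposition (Proposition~\ref{prop:fwdtorevTrans}) and the Forward Diamond (Proposition~\ref{prop:square}) as the two engines that let us push backward steps earlier. Concretely, say a trace is in \emph{parabolic form} if it consists of a (possibly empty) backward segment $\mathsf{t}_b$ followed by a (possibly empty) forward segment $\mathsf{t}_f$. Define the \emph{defect} of a trace to be the number of positions at which a forward step is immediately followed by a backward step; a trace has defect $0$ iff it is in parabolic form. We show that any trace $\mathsf{t}$ with positive defect is $\sim$-equivalent to one of strictly smaller defect, and then conclude by induction.

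For the inductive step, locate the last occurrence of a forward step immediately followed by a backward step, say $\cdots\xarrowtail{\alpha_0}{u_0,D_0}\xarrowtail{\alpha_1}{u_1,D_1}\cdots$ with the first forward and the second backward, so that everything after this pair is already a forward segment. There are two cases. If the two transitions are ``inverse'' to one another --- i.e.\ the backward step undoes exactly the forward step just taken (same location, and the extrusion added is the one removed) --- then by the Loop Proposition the composite $\mathsf{t};\underline{\mathsf{t}}\sim\varepsilon_{\mathsf{t}}$ rule applies and we simply delete both steps, lowering the defect (and the length) of the trace. Otherwise the forward step and the reverse step act on independent locations: here I would argue, using Lemma~\ref{lem:SingleExtrusion} together with the shape of the rules, that a forward step and an immediately following non-inverse reverse step must be independent in the sense of Definition~\ref{def:Ind} --- the reverse step cannot depend on the forward step's extrusion (else it would be its inverse) and their locations cannot be nested (a fresh forward action sits at a location whose state the reverse step is not touching). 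Then I apply a ``reverse'' version of the Forward Diamond to swap the two steps, obtaining $\xarrowtail{\alpha_1}{u_1,D_1}\xarrowtail{\alpha_0}{u_0,D_0}$ in their place; this is exactly an instance of the second $\sim$-clause, and it moves the backward step one position earlier, strictly decreasing the defect.

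The swap in the non-inverse case needs a diamond whose two sides are one forward and one backward transition rather than two forward transitions. I would obtain it from Proposition~\ref{prop:square} combined with the Loop Proposition: given $\mathbf{H}\vdash\! P\xarrowtail{\alpha_0}{u_0,D_0}\mathbf{H}_0\vdash\! P_0\xsquigarrow{\alpha_1}\mathbf{H}'\vdash\! P'$ with the two independent, the Loop Proposition turns the backward leg into a forward leg $\mathbf{H}'\vdash\! P'\xarrowtail{\alpha_1}{u_1,D_1}\mathbf{H}_0\vdash\! P_0$; applying the forward diamond to the two forward transitions out of $\mathbf{H}'\vdash\! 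P'$ (namely $\xarrowtail{\alpha_0}{u_0,D_0}$, obtained by transporting the forward step, and $\xarrowtail{\alpha_1}{u_1,D_1}$) closes the square, and reversing the appropriate legs again with the Loop Proposition yields the desired $\mathbf{H}\vdash\! P\xarrowtail{\alpha_1}{u_1,D_1}\cdot\xarrowtail{\alpha_0}{u_0,D_0}\mathbf{H}'\vdash\! P'$. Finally, iterating the defect-reduction step terminates (the defect is a natural number that strictly decreases), and a trace of defect $0$ is by definition of the form $\mathsf{t}_b;\mathsf{t}_f$, giving the result.

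The main obstacle I expect is the independence claim in the non-inverse case: one must rule out that a newly performed forward action and an immediately following reverse action interact through shared locations or through the extrusion dependency $D$. The key observations are that the forward step installs a fresh past action whose location has a specific shape, that by Lemma~\ref{lem:SingleExtrusion} the reverse step depends on exactly one extrusion in the history, and that if that extrusion were the one just added the two steps would be inverse (handled by the first case) --- so in the remaining case the reverse step's witness predates the forward step and lies at an independent location. Care is also needed because $\pi$IH uses structural congruence (for replication) and substitution in transitions, so I would phrase the argument up to $\equiv$ and make sure the diamond is applied to representatives in which the relevant parallel context is syntactically visible, exactly as in the proof of Proposition~\ref{prop:square}.
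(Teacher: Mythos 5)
Your strategy is essentially the paper's: induct on occurrences of a forward step immediately followed by a backward step, cancel such a pair when the backward step exactly undoes the forward step (via the Loop proposition), and otherwise swap the pair using independence and the diamond. The paper likewise invokes Proposition~\ref{prop:square} for the mixed forward/backward square, so your explicit derivation of that square from the forward diamond together with Proposition~\ref{prop:fwdtorevTrans} is a reasonable way to justify that step. Your informal argument for why a non-inverse adjacent pair must be independent also matches the paper's (which argues via the prefix order on locations that any dependent configuration either forces the pair to be inverse or is impossible); the only superficial difference is that you rewrite the last offending pair where the paper rewrites the first.

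The one concrete problem is your termination measure. The suffix of the trace after the \emph{last} forward-then-backward pair need not be a forward segment: it has the shape ``some backward steps, then some forward steps'', and in particular may begin with further backward steps. Consequently, swapping that pair can create a new forward-then-backward pair immediately to its right (writing $F$ for a forward and $B$ for a backward step, $FBBF$ becomes $BFBF$, which still has defect $1$), so your defect does not strictly decrease and the induction as stated does not close. This is easily repaired: take as measure the number of inversions, i.e.\ the sum over all backward steps of the number of forward steps preceding them. A swap decreases this by exactly one and a cancellation decreases it by at least one, so the rewriting terminates, and a trace with no inversions is exactly one of the form $\mathsf{t}_b;\mathsf{t}_f$. (The paper's own measure --- length, number of pairs, and position of the first pair --- is also stated rather loosely; the inversion count makes either version of the argument go through.)
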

\begin{proof}
We say that $\mathsf{t}= \mathbf{H}\vdash\! P\xarrowtail{\alpha_0}{u_0,D_0}\mathbf{H}_0\vdash\! P_0\xarrowtail{\alpha_1}{u_1,D_1}\dots \xarrowtail{\alpha_n}{u_n,D_n}\mathbf{H}'\vdash\! Q$ and $\mathsf{t}_b;\mathsf{t}_f=\mathbf{H}\vdash\! P\xsquigarrow{\alpha_0'}{u_0',D_0'}\dots \xsquigarrow{\alpha_k'}{u_k',D_k'}\mathbf{H}_k'\vdash\! P_k'\,\xrightarrow[u_{k+1}',D_{k+1}']{\alpha_{k+1}'} \dots \,\xrightarrow[u_{m}',D_m']{\alpha_{m}'}\mathbf{H}\vdash\! Q$.

We prove that they are equivalent by induction on the number of pairs $\,\xrightarrow[u_i,D_i]{\alpha_i}$, $\xsquigarrow{\alpha_{i+1}}{u_{i+1},D_{i+1}}$ and the length of the trace.

If no such pair exists, then $\mathsf{t}=\mathsf{t}_b;\mathsf{t}_f$, otherwise we find the first such pair $\,\xrightarrow[u_i,D_i]{\alpha_i}$, $\xsquigarrow{\alpha_{i+1}}{u_{i+1}}$.

If $u_i=u_{i+1}$ and $\alpha_i=\alpha_{i+1}$ then by Proposition~\ref{prop:fwdtorevTrans}, $\mathbf{H}_{i-1}\vdash\! P_{i-1}=\mathbf{H}_{i+1}\vdash\! P_{i+1}$, and we have a shorter trace $\mathbf{H}\vdash\! P\xarrowtail{\alpha_0}{u_0,D_0}\mathbf{H}_0\dots \mathbf{H}_{i-1}\vdash\! P_{i-1} \xarrowtail{\alpha_{i+2}}{u_{i+2},D_{i+2}}\dots \xarrowtail{\alpha_n}{u_n,D_n}\mathbf{H}'\vdash\! Q\sim \mathsf{t}$.

If $u_i\neq u_{i+1}$ or $\alpha_i\neq \alpha_{i+1}$ then if $u_i\npreceq u_{i+1}$ and $u_{i+1}\npreceq u_i$ then by Proposition~\ref{prop:square}, we have a trace $\mathbf{H}\vdash\! P\xarrowtail{\alpha_0}{u_0,D_0}\mathbf{H}_0\dots \mathbf{H}_{i-1}\vdash\! P_{i-1}\xarrowtail{\alpha_{i+1}}{u_{i+1},D_{i+1}}\xarrowtail{\alpha_i}{u_i,D_i} \mathbf{H}_{i+1}\vdash\! P_{i+1}\xarrowtail{\alpha_{i+2}}{u_{i+2},D_{i+2}}\dots \xarrowtail{\alpha_n}{u_n,D_n}\mathbf{H}'\vdash\! Q\sim\mathsf{t}$. If $u_i\preceq u_{i+1}$ then, since $\,\xrightarrow[u_i,D_i]{\alpha_i}$ is the most recent action in $\mathbf{H}_i$, $u_i=u_{i+1}$ and $\alpha_i\neq \alpha_{i+1}$. If $u_{i+1}\preceq u_{i}$ then, if $u_{i+1}=l[P_a][P_b]$, $P_b$ is not the subprocess located at location $l$ of $P_i$, meaning there cannot exist a transition $\mathbf{H}_i\vdash\! P_i \xsquigarrow{\alpha_{i+1}}{u_{i+1},D_{i+1}}$.
\end{proof}

\section{Section~\ref{sec:piik}}\label{app:piik}

\begin{lemma}\label{lem:SingleOutput}
Given a forwards reachable process $P$, if $P\xrightarrow{\outp{a}{x}[n]}$ then there cannot exist a past output action $\outp{b}{x}[m]$ anywhere in $P$.
\end{lemma}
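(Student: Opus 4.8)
The plan is to argue by induction on the derivation of the transition $P\xrightarrow{\outp{a}{x}[n]}$ using the forward rules of Table~\ref{tab:pik-sem}, carrying along the structural invariant $I(P)$: no name occurring, as channel or object, in a past action inside $P$ is bound --- by an input/output prefix or by a $(\nu\cdot)$ --- anywhere in $P$. This invariant is a conjunction of purely local conditions on the syntax tree, so it is inherited by every subterm, hence by each premise of the rules below. It also holds for our forwards reachable $P$: a short induction on the length of a forward trace to $P$ shows that the only rules creating a past action are the two axioms, which turn a prefix $\alpha$ whose names are free into a past action $\alpha[n]$, and that no rule ever places a fresh binder above an existing past action --- the side condition $a\notin n(\mu)$ of the restriction rule is precisely what forbids introducing $(\nu a)$ above a past action with channel or object $a$.

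Granting $I(P)$, I would then run the transition induction. The base case is the output axiom $\outp{a}{x}.Q\xrightarrow{\outp{a}{x}[n]}\outp{a}{x}[n].Q$ with $\mathsf{std}(Q)$: here $P=\outp{a}{x}.Q$ contains no past action, so in particular none of the form $\outp{b}{x}[m]$. For the inductive step, the communication axiom is irrelevant, its label being $\tau[n]$ rather than an output. In the past-prefix rule, with $P=\alpha[n_0].P_1$, $n_0\neq n$, and side condition $x\notin n(\alpha)$, the latter shows $\alpha[n_0]$ is not a past output with object $x$, while the induction hypothesis applied to $P_1\xrightarrow{\outp{a}{x}[n]}P_1'$ excludes such an action inside $P_1$. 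In the choice rule, with $P=P_0+P_1$ and $\mathsf{std}(P_1)$, the component $P_1$ has no past action at all and the induction hypothesis handles $P_0$. In the parallel rule, with $P=P_0\vert P_1$ and side condition $x\notin\fn(P_1)$, the induction hypothesis excludes a past output with object $x$ in $P_0$, whereas a past output $\outp{b}{x}[m]$ inside $P_1$ would, by $I(P_1)$, have $x$ unbound in $P_1$ and hence $x\in\fn(P_1)$, contradicting the side condition. The restriction rule $(\nu c)P_1\xrightarrow{\outp{a}{x}[n]}(\nu c)P_1'$ introduces no past action over $P_1$, so the induction hypothesis transfers directly. Finally the structural-congruence rule follows because $\equiv$ preserves the family of past actions up to $\alpha$-renaming of bound names, and by $I$ such bound names never occur in past actions, so ``$P$ has no past output with object $x$'' is an $\equiv$-invariant.

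The step I expect to be the main obstacle is the parallel case, together with the need to justify $I$ rather than merely quote that names in past actions are free: one must exclude pathological terms such as a component $(\nu x)(\outp{b}{x}[m].0)$ placed next to $\outp{a}{x}.0$, which has $x\notin\fn$ of its left component yet contains a past output with object $x$. Forwards reachability is exactly what rules such terms out, through the restriction side condition noted in the first paragraph, and this is the single point where the ``reachable'' hypothesis is genuinely used; the remaining cases are a routine rule-by-rule check.
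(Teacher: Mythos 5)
Your case analysis is the same one the paper's proof performs --- the past output $\outp{b}{x}[m]$ would have to prefix, be in parallel with, or be an alternative choice to the firing output, and each position is excluded by the side condition of the corresponding propagation rule --- so the skeleton of your induction matches the paper and is fine. The problem is the invariant $I(P)$ you introduce to discharge the parallel case. $I(P)$ is false for forwards reachable processes, and the justification you give for it (``no rule ever places a fresh binder above an existing past action'') is contradicted by the communication rule: $P_0\vert P_1\xrightarrow{\tau[n]}(\nu b)(P_0'\vert P_1')$ installs a fresh binder $(\nu b)$ immediately above the two past actions $\inp{a}{b}[n]$ and $\outp{a}{b}[n]$, both of which have $b$ as object. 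Concretely, $\inp{c}{z}.0\vert\outp{c}{x}.0\xrightarrow{\tau[m]}(\nu x)(\inp{c}{x}[m].0\vert\outp{c}{x}[m].0)$, and placing this in parallel with $\outp{a}{x}.0$ yields a forwards reachable process that can fire $\outp{a}{x}[n]$ --- the side condition $x\notin\fn(P_1)$ holds precisely because the $(\nu x)$ hides $x$ --- yet syntactically contains a past output whose object is $x$. So the parallel case as you argue it does not close, and this is exactly the ``pathological term'' you worried about: forwards reachability does not exclude it, and the restriction rule's side condition is not the mechanism that could.

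What rescues the lemma (and the paper's one-line proof, which silently relies on the stated convention that names in past actions are always free) is $\alpha$-conversion: the name captured by the communication-generated restriction is not really $x$, and under structural congruence it can and should be renamed apart from the free $x$ being output, after which the offending past action reads $\outp{c}{x'}[m]$. The statement must be read up to $\alpha$-equivalence, with bound names kept distinct from free ones. If you weaken your invariant accordingly --- say, to ``the object of any past output occurring \emph{free} in $P$ is not bound anywhere in $P$'' --- or simply invoke the naming convention and restrict attention to free occurrences, your parallel case goes through as intended: a past output of $x$ occurring free in $P_1$ forces $x\in\fn(P_1)$, contradicting the side condition, while a bound occurrence is not an occurrence of $x$ at all. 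The remaining cases of your induction are correct and agree with the paper.
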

 \begin{proof}
This would require $\outp{b}{x}[m]$ to either prefix, be in parallel with, or be an alternative choice to $\outp{a}{x}$ in $P$. The first two cases are impossible due to the $\text{if }\mu=\outp{a}{x} \text{ then } x\notin n(\alpha)$ and $\text{if }\mu=\outp{a}{x} \text{ then } x\notin \fn(P_1)$ requirement in the rules for propagating $\outp{a}{x}[n]$ past past actions and parallel composition, and the last case is prevented by requiring alternative paths to be standard if we want to propagate an action past the choice.
\end{proof}

\begin{proof}[Proof of Proposition~\ref{the:FwdToRev}]
\begin{enumerate}
\item We perform induction on $P\xrightarrow{\mu[n]} Q$:
\begin{enumerate}
\item Suppose $P=\inp{a}{x}.P'$, $\mu=\inp{a}{b}$, $\mathsf{std}(P')$, $Q=\inp{a}{b}[n].Q'$, and $Q'=Q[x:= b_{[n]}]$. Then, since $x\notin n(Q')$, $Q\xrsquigarrow{\inp{a}{b}} P$.
\item Suppose $P=\outp{a}{x}.P'$, $\mu=\outp{a}{x}$, $\mathsf{std}(P')$, $Q=\outp{a}{x}[n].P'$. Then clearly $Q\xrsquigarrow{\outp{a}{x}} P$.
\item Suppose $P=\alpha[m].P'$, $P'\xrightarrow{\mu[n]} Q'$, $Q=\alpha[m].Q'$, $n\neq m$, and if $\mu=\outp{a}{x}$ then $x\notin n(\alpha)$. Then by induction $Q'\xsquigarrow{\mu[n]} P'$, and clearly $Q\xrsquigarrow{\mu[n]} P$.
\item Suppose $P=P_0\vert P_1$, $P_0\xrightarrow{\mu[n]} Q_0$, \textsf{fsh}$[n](P_1)$, $Q=Q_0\vert P_1$, and if $\mu=\outp{a}{x}$ then $x\notin \fn(P_1)$. Then by induction, $Q_0\xrsquigarrow{\mu[n]}$, and obviously $Q\xrsquigarrow{\mu[n]} P$.
\item Suppose $P=P_0\vert P_1$, $P_0\xrightarrow{\inp{a}{x}[n]} Q_0$, $P_1\xrightarrow{\outp{a}{x}[n]} Q_1$, $\mu=\tau$, and $Q=(\nu x)(Q_0\vert Q_1)$. Then by induction $Q_0\xrsquigarrow{\inp{a}{x}} P_0$ and $Q_1\xrsquigarrow{\outp{a}{x}} P_1$, meaning clearly $Q\xrsquigarrow{\mu[n]} P$.
\item Suppose $P=P_0+P_1$, $P_0\xrightarrow{\mu[n]} Q_0$, $\textsf{std}(P_1)$, and $Q=Q_0+P_1$. Then by induction $Q_0\xrsquigarrow{\mu[n]} P_0$, meaning $Q\xrsquigarrow{\mu[n]} P$.
\item Suppose $P=(\nu x) P'$, $P'\xrightarrow{\mu[n]} Q'$, $x\notin n(\mu)$, and $Q=(\nu x) Q'$. Then by induction $Q'\xrsquigarrow{\mu[n]} P'$, and we get $Q\xrsquigarrow{\mu[n]} P$.
\item Suppose $P\equiv P'$, $P'\xrightarrow{\mu[n]} Q'$, and $Q\equiv Q'$. Then by induction $Q'\xrsquigarrow{\mu[n]} P'$, and therefore $Q\xrsquigarrow{\mu[n]} P$.
\end{enumerate}
\item We prove this by induction on $P\xrsquigarrow{\mu[n]} Q$:
\begin{enumerate}
\item Suppose $P=\inp{a}{b}[n].P'$, $\mu=\inp{a}{b}$, $\textsf{std}(P')$, $x\notin n(P')$, $Q'=P'[b_{[n]}:=x]$, and $Q=\inp{a}{x}.Q'$. Then clearly $Q\xrightarrow{\mu[n]} P$.
\item Suppose $P=\outp{a}{x}[n].P'$, $\mu=\outp{a}{x}$, $\mathsf{std}(P')$, $Q=\outp{a}{x}.P'$. Then clearly $Q\xrightarrow{\outp{a}{x}} P$.
\item Suppose $P=\alpha[m].P'$, $P'\xrsquigarrow{\mu[n]} Q'$, $m\neq n$, and $Q=\alpha[n].Q'$. Then by induction, $Q'\xrightarrow{\mu[n]} P'$, and since $P$ is forwards reachable, if $\mu=\outp{a}{x}$ then $x\notin n(\alpha)$. This means $Q\xrightarrow{\mu[n]} P$.
\item Suppose $P=P_0\vert P_1$, $P_0\xrsquigarrow{\mu[n]} Q_0$, \textsf{fsh}$[n](P_1)$, $Q=Q_0\vert P_1$, and if $\mu=\outp{a}(x)$ then $x\notin \fn(P_1)$. Then by induction $Q_0\xrightarrow{\mu[n]}P_0$, and clearly $Q\xrightarrow{\mu[n]} P$.
\item Suppose $P=(\nu x)(P_0\vert P_1)$, $\mu=\tau$, $P_0\xrsquigarrow{\inp{a}{x}[n]} Q_0$, $P_1\xrsquigarrow{\outp{a}{x}[n]} Q_1$, and $Q=Q_0\vert Q_1$. Then by induction $Q_0\xrightarrow{\inp{a}{x}} P_0$ and $Q_1\xrightarrow{\outp{a}{x}} P_1$, meaning clearly $Q\xrightarrow{\mu[n]} P$.
\item Suppose $P=P_0+P_1$, $P_0\xrsquigarrow{\mu[n]} Q_0$, $\textsf{std}(P_1)$, and $Q=Q_0+P_1$. Then by induction $Q_0\xrightarrow{\mu[n]} P_0$, meaning $Q\xrightarrow{\mu[n]} P$.
\item Suppose $P=(\nu x) P'$, $P'\xrsquigarrow{\mu[n]} Q'$, $x\notin n(\mu)$, and $Q=(\nu x) Q'$. Then by induction $Q'\xrightarrow{\mu[n]} P'$, and we get $Q\xrightarrow{\mu[n]} P$.
\item Suppose $P\equiv P'$, $P'\xrsquigarrow{\mu[n]} Q'$, and $Q\equiv Q'$. Then by induction $Q'\xrightarrow{\mu[n]} P'$, and therefore $Q\xrightarrow{\mu[n]} P$.
\end{enumerate}
\end{enumerate}
\end{proof}

\begin{proposition}[Reverse diamond]\label{prop:RevDiamond}
Given forwards reachable processes $P$, $Q$, and $R$, if $P\xrsquigarrow{\mu[m]} Q$ and $P\xrsquigarrow{\mu'[n]} R$ and $m\neq n$, then there exists a process $S$ such that $Q\xrsquigarrow{\mu'[n]} S$ and $R\xrsquigarrow{\mu[m]} S$.
\end{proposition}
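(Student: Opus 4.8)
The plan is to prove the Reverse Diamond by induction on the derivation of $P\xrsquigarrow{\mu[m]} Q$, mirroring the structure of the proof of Proposition~\ref{prop:RevDiamond}'s forward counterpart and the Loop property (Proposition~\ref{the:FwdToRev}). The key observation is that, because every reverse rule is driven by a single key and past actions carry distinct keys, the two reverse transitions $\xrsquigarrow{\mu[m]}$ and $\xrsquigarrow{\mu'[n]}$ with $m\neq n$ must act on structurally disjoint parts of $P$: no reverse rule can fire the same keyed action twice, and the side conditions ($\textsf{fsh}[n]$, $\textsf{std}$, $m\neq n$ in the $\alpha[n].P$ rule) guarantee that propagating one reversal never destroys the ability to propagate the other.

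First I would set up the induction and dispatch the base cases: if $P=\inp{a}{b}[m].P'$ or $P=\outp{a}{b}[m].P'$ is reversed at the top level by $\xrsquigarrow{\mu[m]}$, then any other reverse transition $\xrsquigarrow{\mu'[n]}$ with $n\neq m$ must come from inside $P'$ via the $\alpha[m].P'$ rule, so $R=\alpha[m].R'$ with $P'\xrsquigarrow{\mu'[n]} R'$; since $Q$ is standard below the prefix we need the subtlety that actually $\mu'[n]$ cannot fire inside a standard $P'$ — so in fact this base case is vacuous or handled by the observation that $P'$ standard forbids a keyed reversal. Then the inductive cases: for $P=\alpha[k].P'$, both reversals (if neither is the top prefix, which it cannot be since keys differ) descend into $P'$, apply the IH to get $S'$ with $Q'\xrsquigarrow{\mu'[n]} S'$ and $R'\xrsquigarrow{\mu[m]} S'$, then reattach $\alpha[k]$. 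For $P=P_0\vert P_1$ there are two sub-cases: both reversals act on the same component $P_i$ (apply IH, reattach the untouched $P_{1-i}$, using that $\textsf{fsh}$ conditions and the $x\notin\fn(P_{1-i})$ side conditions are preserved), or they act on different components (the transitions are then manifestly independent — each leaves the other component alone — and $S=S_0\vert S_1$ works directly). The $[\text{COM}_i^{-1}]$ case, restriction, choice, and structural congruence cases are handled analogously, pushing the diamond through by the IH.

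The main obstacle I expect is the interaction between the $[\text{COM}_i^{-1}]$ rule and the restriction/scope it introduces: reversing a $\tau[k]$ transition turns $(\nu x)(P_0\vert P_1)$ into $P_0'\vert P_1'$ and simultaneously reverses an input and an output in the two components, so if the \emph{other} reversal $\xrsquigarrow{\mu'[n]}$ also touches one of $P_0, P_1$ (or is itself a $\tau$-reversal sharing a component), one must check carefully that the freshness and free-name side conditions ($n\notin\fn(P_j)$, etc.) still hold after the first reversal and that the two introduced or removed restrictions commute. This is where I would lean on an auxiliary lemma analogous to Lemma~\ref{lem:SingleOutput} and Lemma~\ref{lem:SingleExtrusion} — roughly, that a reverse transition with key $n$ depends only on the (unique) past action labelled $n$ and the standardness/freshness of syntactically disjoint parts — to show that the side conditions are invariant under reversing an action with a different key. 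Given such a lemma the remaining work is routine rule-by-rule bookkeeping, and I would also note that the statement is essentially the mirror image of the forward square (Proposition~\ref{prop:square} in the $\pi$IH setting, or the diamond implicit in the $\pi$IK forward semantics), so one could alternatively derive it by combining the Loop property (Proposition~\ref{the:FwdToRev}) with a forward diamond for $\pi$IK — reverse both to $S$, use Loop to go forward from $Q$ and $R$, and glue — but the direct induction is cleaner since it avoids needing a separately stated forward diamond for $\pi$IK.
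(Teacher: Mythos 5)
Your plan matches the paper's proof essentially step for step: a structural induction with the same case split (vacuous base cases because a standard continuation admits no keyed reversal, same-component versus different-component sub-cases for parallel composition), and the delicate case you flag — two $\tau$-reversals competing for the same restriction under $[\text{COM}_i^{-1}]$ — is exactly the one the paper discharges via Lemma~\ref{lem:SingleOutput}, the auxiliary lemma you anticipated. The approach is correct and the same as the paper's.
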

\begin{proof}[Proof of Proposition~\ref{prop:RevDiamond}]
We use structural induction on $P$ to prove both these at once:
\begin{enumerate}
\item Suppose $P=0$ or $P=\alpha.P'$. Then $P$ cannot do any backwards transitions.
\item Suppose $P=\alpha[o].P'$. Then either $\textsf{std}(P')$ and $n=m=o$, or $Q=a(b)[o].Q'$, $R=a(b)[o].R'$, $P'\xrsquigarrow{\mu[m]} Q'$, and $P'\xrsquigarrow{\mu'[n]} R'$, meaning by induction there exists $S'$ such that $Q'\xrsquigarrow{\mu'[n]} S'$ and $R' \xrsquigarrow{\mu[m]} S'$. We say that $S=\alpha[n].S'$, and the theorem holds.
\item Suppose $P=P_0+P_1$, then either $\textsf{std}(P_0)$, $P_1\xrsquigarrow{\mu[m]} Q_1$, $P\xrsquigarrow{\mu'[n]} R_1$, $Q=P_0+Q_1$, and $R=P_0+R_1$, or $\textsf{std}(P_1)$, $P_0\xrsquigarrow{\mu[m]} Q_0$, $P\xrsquigarrow{\mu'[n]} R_0$, $Q=Q_0+P_1$, and $R=R_0+P_1$. In the first case, by induction there exists an $S_1$ such that $Q_1\xrsquigarrow{\mu'[n]} S_1$ and $R_1\xrsquigarrow{\mu[m]} S_1$, and we define $S=P_0+S_1$, and theorem holds. The second case is similar.
\item Suppose $P=(\nu x) P'$. Then either (1) $P'\xrsquigarrow{\mu[m]} Q'$ and $x\notin n(\mu)$ and $Q=(\nu x) Q'$ or (2) $P'=P_0\vert P_1$, $P_i \xrsquigarrow{\inp{a}{x}[m]} Q_i$, $P_{1-i} \xrsquigarrow{\outp{a}{x}[m]} Q_{i-1}$, $\mu=\tau$, and $Q= Q_0\vert Q_1$, and either (a) $P'\xrsquigarrow{\mu'[n]} R'$ and $x\notin n(\mu')$ and $R=(\nu x) R'$ or (b) $P'=P_0\vert P_1$, $P_i \xrsquigarrow{\inp{a}{x}[n]} R_i$, $P_{1-i} \xrsquigarrow{\outp{a}{x}[n]} R_{i-1}$, $\mu'=\tau$, and $R= R_0\vert R_1$. 

In case 1a, by induction there exists $S'$ such that $Q'\xrsquigarrow{\mu'[n]} S'$ and $R' \xrsquigarrow{\mu[m]} S'$, and we define $S=(\nu x)S'$, and the theorem holds.

In case 1b, there exists $P_j$ such that $P_j\xrsquigarrow{\mu[m]} Q_j$, and $\textsf{fsh}[m](P_{1-j})$, and if $\mu=\outp{a}{x}$ then $x\notin \fn(P_1)$. If $j=i$ then by induction there exists an $S_i$ such that $Q_j\xrsquigarrow{\mu'[n]} S_i$ and $R_i \xrsquigarrow{\inp{a}{x}[m]} S_i$, and we define $S=S_i\vert R_{1-i}$, and the theorem holds.
If $I=1-j$, the argument is similar.

Case 2a is similar to case 1b.

Case 2b cannot occur because we cannot have more than one past action outputting the same name according to Lemma~\ref{lem:SingleOutput}.
\item Suppose $P=P_0\vert P_1$. Then there exists an $i$ such that either $P_i\xrsquigarrow{\mu[m]} Q_i$ and $P_i\xrsquigarrow{\mu'[n]} R_i$ and $Q=Q_i\vert P_{1-i}$ and $R=R_i\vert P_{1-i}$, or $P_i\xrsquigarrow{\mu[m]} Q_i$ and $P_{1-i}\xrsquigarrow{\mu'[n]} R_{1-i}$ and $Q=Q_i\vert P_{1-i}$ and $R=P_i\vert R_{1-i}$. 

In the first case, there exists $S_i$ such that $Q_i\xrsquigarrow{\mu'[n]} S_i$ and $R_i \xrsquigarrow{\mu[m]} S_i$, and we define $S=S_i\vert P_{1-i}$ and the theorem holds.

If the second case we define $S=Q_i\vert R_{1-i}$, and the theorem holds.
\end{enumerate}
\end{proof}
\begin{proposition}\label{prop:UniqueKey}
Given forwards reachable processes $P$, $Q$, and $R$, if $P\xrsquigarrow{\mu[m]} Q$ and $P\xrsquigarrow{\mu'[m]} R$ then $\mu=\mu'$ and $R\equiv Q$.
\end{proposition}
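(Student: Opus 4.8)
The plan is to prove Proposition~\ref{prop:UniqueKey} by structural induction on $P$, mirroring the case analysis of the reverse-diamond proof (Proposition~\ref{prop:RevDiamond}) but now exploiting that the two reverse transitions carry the \emph{same} key $m$. The key observation is that in the reverse semantics of Table~\ref{tab:pik-rev-sem}, the only rules that actually consume a key are $[\text{OUT}^{-1}]$, $[\text{IN}^{-1}]$, and the communication rule, and in each the key $m$ is precisely the annotation on the past action being undone; every other rule merely propagates a reverse transition while insisting $m$ differs from the key guarding the prefix it passes. So a transition $P\xsquigarrow{\mu[m]}Q$ determines a unique ``site'' in $P$ where the past action tagged $m$ sits, and reversing it is deterministic.

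First I would dispose of the base cases: if $P=0$ or $P=\alpha.P'$ (standard prefix) there are no reverse transitions, so the statement is vacuous. For $P=\alpha[o].P'$, either $o=m$, in which case both transitions must be the bottom-up rule undoing $\alpha[o]$ itself, forcing $\mu=\mu'=\alpha$ (up to the choice of bound name on an input, which is $\alpha$-convertible, hence $\equiv$) and $Q\equiv P'\equiv R$; or $o\neq m$, in which case both transitions are $[\alpha[n].P]$-propagation rules with $P'\xsquigarrow{\mu[m]}Q'$ and $P'\xsquigarrow{\mu'[m]}R'$, and induction gives $\mu=\mu'$, $Q'\equiv R'$, hence $Q=\alpha[o].Q'\equiv\alpha[o].R'=R$. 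The choice case $P=P_0+P_1$ is analogous: reversibility requires the standard side to stay fixed, so both transitions act on the same summand, and induction closes it.

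For $P=(\nu x)P'$ and $P=P_0\mid P_1$ the argument needs slightly more care because a single key $m$ could a priori be used by a $\tau[m]$ (reversing a communication, which in the restriction case strips the $(\nu x)$) or by an ordinary action inside one component. Here I would invoke Lemma~\ref{lem:SingleOutput}: a forwards-reachable process cannot contain two past outputs of the same name, and more to the point two distinct past actions cannot share a key, since keys are introduced fresh. Hence the site tagged $m$ is unique, both transitions must locate it in the same place (same component, or the same matched input/output pair for a $\tau[m]$), and induction applied to that component (or, for the $\tau$ case, to both components simultaneously, using that an input $a(b)[m]$ and output $\overline a(b)[m]$ with a common key are necessarily partners) yields $\mu=\mu'$ and structurally congruent residuals, so the reassembled $Q$ and $R$ are congruent. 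The $[\text{STR}^{-1}]$ rule is handled by transporting along $\equiv$ in the usual way.

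The main obstacle I anticipate is not the induction skeleton but pinning down that ``a key identifies a unique past action'': one must argue that in any forwards-reachable $\pi$IK process, no two syntactically distinct keyed actions carry the same key $m$, and that the matched pair of a past communication is exactly the input/output bearing that key. This is a freshness/invariant property of the forward semantics (the $\mathsf{fsh}[n]$ side conditions) that needs to be stated and used carefully — it is really the crux that makes the diamond ``determined'' rather than merely ``confluent''. Once that invariant is in hand, every case is a short propagation-plus-induction argument essentially identical in shape to the corresponding case of Proposition~\ref{prop:RevDiamond}, with the $m\neq n$ branching collapsed to a single deterministic branch.
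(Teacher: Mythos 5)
Your induction skeleton, base cases, prefix case and choice case all match the paper's proof. The divergence --- and the genuine gap --- is in how you rule out the two transitions locating the key $m$ at different sites in the parallel and restriction cases. You defer this to a global invariant of forwards-reachable processes (``no two syntactically distinct keyed actions carry the same key''), which you correctly identify as the crux but do not prove; worse, as stated it is false, since a past communication leaves \emph{two} past actions bearing the same key (e.g.\ $\inp{a}{b}[n]\mid\outp{a}{b}[n]$ is forwards reachable), a fact you yourself rely on one sentence later when you speak of ``the matched pair of a past communication''. Any correct version of the invariant would have to be the more delicate statement that at most two past actions share a key and, when two do, they are an input/output pair on matching channels separated by a parallel composition --- and establishing that requires a separate induction over forward traces that your proposal does not contain.

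The paper's proof shows this machinery is unnecessary. The reverse propagation rule for parallel composition carries the premise $\mathsf{fsh}[m](P_{1-i})$, so if $P_i\xrsquigarrow{\mu[m]} Q_i$ and $P_{1-i}\xrsquigarrow{\mu'[m]} R_{1-i}$ were both derivable, the second would force a past action keyed $m$ to occur inside $P_{1-i}$, contradicting the freshness premise of the first; hence the ``different components'' case simply cannot arise, with no appeal to any reachability invariant beyond what the rule itself demands. The same local contradiction disposes of the mixed cases under restriction (an ordinary reversal in one component versus a communication reversal touching both), and the case of two communication reversals with the same key reduces componentwise to the induction hypothesis. If you replace your global key-uniqueness invariant with this direct use of the $\mathsf{fsh}[m]$ side conditions, your argument closes and coincides with the paper's.
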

\begin{proof}
We prove this by structural induction:
\begin{enumerate}
\item Suppose $P=0$ or $P=\alpha.P'$. Then $P$ cannot do any reverse transitions.
\item Suppose $P=\alpha[n].P'$. Then either $\textsf{std}(P')$, meaning $\mu=\mu'=\alpha$, $n=m$, and $Q\equiv R$, or $P'\xrsquigarrow{\mu[m]} Q'$, $P'\xrsquigarrow{\mu'[m]}R'$, $Q=\alpha[n].Q'$, and $R=\alpha[n].R'$, and the result follows from induction.
\item Suppose $P=P_0+P_1$. Then the result follows from induction.
\item Suppose $P=(\nu x) P'$. Then either (1) $P'\xrsquigarrow{\mu[m]} Q'$ and $x\notin n(\mu)$ and $Q=(\nu x) Q'$ or (2) $P'=P_0\vert P_1$, $P_i \xrsquigarrow{\inp{a}{x}[m]} Q_i$, $P_{1-i} \xrsquigarrow{\outp{a}{x}[m]} Q_{i-1}$, $\mu=\tau$, and $Q= Q_0\vert Q_1$, and either (a) $P'\xrsquigarrow{\mu'[m]} R'$ and $x\notin n(\mu')$ and $R=(\nu x) R'$ or (b) $P'=P_0\vert P_1$, $P_i \xrsquigarrow{\inp{a}{x}[m]} R_i$, $P_{1-i} \xrsquigarrow{\outp{a}{x}[m]} R_{i-1}$, $\mu'=\tau$, and $R= R_0\vert R_1$.

In case 1a the result follows from induction.

In case 1b $P_j$ such that $P_j\xrsquigarrow{\mu[m]} Q_j$, and $\textsf{fsh}[m](P_{1-j})$, contradicting $P_{1-j} \xrightarrow{\alpha[m]} R_{1-j}$. Meaning this case cannot occur.

Similar for case 2a.

Case 2b follows from induction.

\item Suppose $P=P_0\vert P_1$. Then there exists an $i$ such that either $P_i\xrsquigarrow{\mu[m]} Q_i$ and $P_i\xrsquigarrow{\mu'[m]} R_i$ and $Q=Q_i\vert P_{1-i}$ and $R=R_i\vert P_{1-i}$, or $P_i\xrsquigarrow{\mu[m]} Q_i$ and $P_{1-i}\xrsquigarrow{\mu'[m]} R_{1-i}$ and $Q=Q_i\vert P_{1-i}$ and $R=P_i\vert R_{1-i}$. 

In the first case the result follows from induction. In the second case $P_i\xrsquigarrow{\mu[m]} Q_i$ requires $\textsf{fsh}[m](P_{1-i})$, which contradicts $P_{1-i}\xrsquigarrow{\mu'[m]} R_{1-i}$, meaning this case cannot occur.
\end{enumerate}
\end{proof}
\begin{theorem}[Parabola]\label{the:paraK}
Given processes $P$ and $Q$, such that $P\rightarrowtail^* Q$, there exists a process $R$ such that $P\rightsquigarrow^*R\rightarrow^*Q$.
\end{theorem}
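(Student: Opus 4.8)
The plan is to mirror the structure of the proof of Proposition~\ref{the:para} (the parabola lemma for $\pi$IH), using the reverse and forward diamonds for $\pi$IK in place of their $\pi$IH counterparts. Concretely, suppose $\mathsf{t} = P \rightarrowtail^* Q$ is a mixed trace, written as a sequence of forward and reverse $\pi$IK steps $P = P_0 \xarrowtail{\mu_1[n_1]}{} P_1 \xarrowtail{\mu_2[n_2]}{} \cdots \xarrowtail{\mu_\ell[n_\ell]}{} P_\ell = Q$, where each $\xarrowtail{}{}$ is either a $\xrightarrow{\mu[n]}$ or a $\xrsquigarrow{\mu[n]}$. I want to show $\mathsf{t}$ can be rearranged into a backward segment followed by a forward segment, i.e.\ $P \rightsquigarrow^* R \rightarrow^* Q$ for some $R$. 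I would proceed by induction, the measure being first the number of ``bad pairs'' — consecutive occurrences of a forward step immediately followed by a reverse step — and secondarily the length $\ell$ of the trace. If there are no bad pairs, the trace is already of the form $\rightsquigarrow^* ; \rightarrow^*$ and we are done, taking $R = P_j$ at the turning point.

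If a bad pair exists, take the first one: $P_{i-1} \xrightarrow{\mu[m]} P_i \xrsquigarrow{\mu'[m']} P_{i+1}$. There are two cases according to whether the keys coincide. If $m = m'$: I would use Proposition~\ref{the:FwdToRev} (Loop) together with Proposition~\ref{prop:UniqueKey} to conclude that the reverse step simply undoes the forward step, so $P_{i-1} = P_{i+1}$ (up to $\equiv$), and we may excise both steps, obtaining a strictly shorter trace from $P$ to $Q$ and invoking the induction hypothesis. If $m \neq m'$: here the forward step introduces key $m$ and the reverse step removes key $m'$; since these keys are distinct, the two actions act on disjoint parts of the process structure (one creates a past action tagged $m$, the other consumes a past action tagged $m'$), and I claim they commute. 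Formally, I would argue — by structural induction on $P_i$, in the style of the proofs of Propositions~\ref{prop:square} and~\ref{prop:RevDiamond} — that from $P_{i-1} \xrightarrow{\mu[m]} P_i \xrsquigarrow{\mu'[m']} P_{i+1}$ (with $m\neq m'$) we can extract $P_{i-1} \xrsquigarrow{\mu'[m']} P_i' \xrightarrow{\mu[m]} P_{i+1}$ for some $P_i'$. Replacing the bad pair by its swapped version produces a trace with the same number of total steps but in which the first bad pair has moved one position earlier; repeating, the reverse step migrates to the front, so this inner process terminates and strictly decreases the lexicographic measure (number of bad pairs before position $i$), again allowing the induction hypothesis to apply.

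The remaining obligation is the commutation claim for a forward step and a reverse step carrying distinct keys, which I expect to be the main obstacle. It is not covered verbatim by Proposition~\ref{prop:square} (two forward steps) or Proposition~\ref{prop:RevDiamond} (two reverse steps), so it needs its own structural induction on the process, examining each $\pi$IK rule: prefix $\alpha[n].P$, choice, parallel, restriction (including the $\tau$-communication rule where the reverse step may need to split a restriction or the forward step introduce one), and structural congruence. The delicate points are: (i) the ``freshness'' side conditions $\mathsf{fsh}[n](P_1)$ for parallel composition, which are preserved because the keys differ; (ii) the ``output name does not occur elsewhere'' conditions $x\notin\fn(P_1)$ and $x\notin n(\alpha)$, for which Lemma~\ref{lem:SingleOutput} guarantees no second past output on the same name can interfere; and (iii) the interaction of substitution of keyed names $b_{[m]}$ with a reverse input step carrying a different key $m'$, which touches only names tagged $m'$ and hence leaves the $m$-tagged names alone, so the substitutions commute. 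Once this lemma is in hand, the induction described above goes through routinely, and setting $R$ to be the process reached after exhausting all the migrated reverse steps completes the proof.
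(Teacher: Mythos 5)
Your overall strategy coincides with the paper's: induct on the number of forward-immediately-followed-by-reverse pairs (and the position of the first such pair), cancel the pair via Propositions~\ref{the:FwdToRev} and~\ref{prop:UniqueKey} when the keys coincide, and commute the pair when they differ. The equal-key cancellation is handled exactly as in the paper.

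The one substantive divergence is the distinct-key case. You assert that the commutation of $P_{i-1}\xrightarrow{\mu[m]}P_i\xrsquigarrow{\mu'[m']}P_{i+1}$ (with $m\neq m'$) is not covered by Proposition~\ref{prop:square} or Proposition~\ref{prop:RevDiamond} and therefore requires a fresh structural induction over all the $\pi$IK rules, which you then leave as an open obligation. This is where you part company with the paper, and unnecessarily so: the commutation is an immediate consequence of lemmas you already have. By Proposition~\ref{the:FwdToRev}(1) the forward step $P_{i-1}\xrightarrow{\mu[m]}P_i$ yields $P_i\xrsquigarrow{\mu[m]}P_{i-1}$, so $P_i$ is a peak with two divergent reverse steps carrying distinct keys; Proposition~\ref{prop:RevDiamond} closes this to a common $S$ with $P_{i-1}\xrsquigarrow{\mu'[m']}S$ and $P_{i+1}\xrsquigarrow{\mu[m]}S$, and Proposition~\ref{the:FwdToRev}(2) turns the latter back into $S\xrightarrow{\mu[m]}P_{i+1}$. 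That gives exactly the swapped pair $P_{i-1}\xrsquigarrow{\mu'[m']}S\xrightarrow{\mu[m]}P_{i+1}$ with no new case analysis --- the standard manoeuvre of reflecting the forward edge of a peak into a reverse edge so that the reverse diamond applies, which is how the paper's proof uses Proposition~\ref{prop:RevDiamond}. So your proof is not wrong, but its self-declared ``main obstacle'' is a lemma you do not need to prove, and as written the proposal leaves its hardest step unestablished even though it is already available. One small caveat applying to both your argument and the paper's: Propositions~\ref{the:FwdToRev}(2), \ref{prop:RevDiamond} and~\ref{prop:UniqueKey} all presuppose forwards reachability, so the statement implicitly assumes the processes along the trace are forwards reachable.
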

\begin{proof}[Proof of Theorem~\ref{the:paraK}]
We say that $P\xarrowtail{\mu_0[m_0]}{} P_0 \dots \xarrowtail{\mu_n[m_n]}{} P_n= Q$ and perform induction on the length of the trace, the number of pairs $\xrightarrow{\mu_i[m_i]}\xrsquigarrow{\mu_{i+1}[m_{i+1}]}$ in the trace, and the location of the first such pair.

If no such pair exists then $R$ must exist.

Otherwise, we say that $\xrightarrow{\mu_i[m_i]}\xrsquigarrow{\mu_{i+1}[m_{i+1}]}$ is the fist such pair in the trace. We have 2 cases, either $m_i=m_{i+1}$ or not. 

If $m_i=m_{i+1}$ then by Propositions~\ref{the:FwdToRev} and~\ref{prop:UniqueKey}, $P_{i-1}=P_{i+1}$, and we therefore have a trace $P\xarrowtail{\mu_0[m_0]}{} P_0 \dots \xarrowtail{\mu_{i-1}[m_{i-1}]}{}P_{i-1}\xarrowtail{\mu_{i+2}[m_{i+2}]}{}\dots \xarrowtail{\mu_n[m_n]}{} P_n= Q$.

If $m_i\neq m_{i+1}$ then by Proposition~\ref{prop:RevDiamond} we have a trace $P\xarrowtail{\mu_0[m_0]}{} P_0 \dots P_{i-1}\xrsquigarrow{\mu_{i+1}[m_{i+1}]}{}\xrightarrow{\mu_{i}[m_{i}]}P_{i+1}\dots \xarrowtail{\mu_n[m_n]}{} P_n= Q$
\end{proof}
\section{Section~\ref{sec:sem-corr}}
In Lemma~\ref{lem:S} we demonstrate, that $S$ does indeed annotate any name, which was substituted for $x_1$, with $n$.

We also define the root of a $\pi$IK process as removing all keys from the process.
\begin{definition}[Root]
We say that a $\pi$IK process, $P$, has a root, $\mathsf{rt}(P)$, defined as:\vspace{5pt}

\begin{tabular}{llll}
$\mathsf{rt}(0)=0$\;\; & $\mathsf{rt}(!P)={!\mathsf{rt}(P)}$\;\; & $\mathsf{rt}(P_0\vert P_1)=\mathsf{rt}(P_0)\vert \mathsf{rt}(P_1)$  \;\; &
$\mathsf{rt}(\alpha.P)=\alpha.\mathsf{rt}(P)$ \\ 
\multicolumn{2}{l}{$\mathsf{rt}(P_0+P_1)=\mathsf{rt}(P_0)+\mathsf{rt}(P_1)$ \;\;} &
$\mathsf{rt}(\alpha[m].P)=\alpha.\mathsf{rt}(P)$ & $\mathsf{rt}((\nu x) P)=(\nu x)\mathsf{rt}(P)$ \\
\end{tabular}
\end{definition}
\begin{lemma}\label{lem:S}
Given a standard $\pi$IK process $P$, a $\pi$IK process $P'$, a series of substitutions $[x_1:=a_1][x_2:=a_2]\dots [x_k:=a_k]$, such that $\mathsf{rt}(P')\equiv P[x_1:=a_1][x_2:=a_2]\dots [x_k:=a_k]$ using the definition of $\equiv$ from Section \ref{sec:Ext-sem}, and a key $[n]$, we get that $S(P',P,[n],x_1)=P''$ for some $P''$ such that
\[
\mathsf{rt}(P'')\equiv P[x_1:=a_{1_{[n]}}][x_2:=a_2]\dots [x_k:=a_k]\ .
\]
\end{lemma}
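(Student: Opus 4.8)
The plan is to prove this by induction on the combined syntactic size $|P'|+|P|$ of the two process arguments of $S$, performing the case analysis dictated by Definition~\ref{def:S}. A combined-size measure (rather than structural induction on $P$ or on $P'$ alone) is needed because clauses~7--9 of $S$ peel replications apart: clause~8 shrinks the first argument while keeping $!P_2$, and clause~9 shrinks the second while keeping $!P_1$, so neither argument decreases monotonically on its own, but their sum strictly decreases in every recursive call.

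First I would record two structural observations that make the syntactic recursion of $S$ cooperate with the congruence $\equiv$ of Section~\ref{sec:Ext-sem} ($\alpha$-conversion, ${!P}\equiv{!P\vert P}$, and $\nu$-swapping). (a)~$S(P_1,P_2,[n],x)$ only inspects $P_2$ to test membership of prefixes in $\{\inp{x}{c},\outp{x}{c}\}$ or $\{\inp{c}{x},\outp{c}{x}\}$ and to test whether a restriction binds $x$; hence, for a well-formed $P_2$, $S$ is unchanged if $P_2$ is $\alpha$-renamed on any bound name other than $x$, or is rearranged by $\nu$-swapping or by $!$-unfolding. (b)~$\mathsf{rt}(P')\equiv P[x_1{:=}a_1]\cdots[x_k{:=}a_k]$ with $P$ standard forces $P'$ and $P$ into the same top-level shape up to exactly those rearrangements: if $P$ is a sum, a parallel, a restriction, or a (keyless) prefix at top level then so is $\mathsf{rt}(P')$, so $P'$ has that shape possibly with a key on the prefix --- which is why both clauses~5 and~6 can arise; and if $P={!Q}$ then $\mathsf{rt}(P')$ is $\equiv$ to a finite left-associated unfolding $!Q'\mid Q_1'\mid\cdots$, matched by clauses~7--9. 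In particular $S$ is always defined on inputs satisfying the hypothesis, and when we descend into a subterm the composite substitution can be pushed in and the induction hypothesis applied.

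With these in hand the induction is routine clause by clause. The cases $0$, sum, and parallel follow immediately from the induction hypothesis applied componentwise (for parallel, after using~(b) to line up the two factors). The binder cases carry the content. For a prefix $P=\alpha_2.P_2'$ (so $P'=\alpha_1.R_1$ or $\alpha_1[m].R_1$): if $x_1$ is the subject of $\alpha_2$ then $[x_1{:=}a_1]$ rewrote that subject to $a_1$, $[x_1{:=}a_{1_{[n]}}]$ rewrites it to $a_{1_{[n]}}$, and this is exactly the effect of the clause $\alpha_1'=\alpha_{1_{[n]}}$; if $x_1$ is the object of $\alpha_2$ (so $\alpha_2$ re-binds $x_1$) the two substitutions act identically on the continuation, matching the clause $P_1'=P_1$ with no recursion; otherwise we $\alpha$-rename the object binder away from $\{x_1,a_1\}$ --- which by~(a) does not change $S$'s recursive call --- and apply the induction hypothesis to the continuation. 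The restriction case is analogous, with the sub-case $x_1=b$ handled by noting that capture-avoiding substitution leaves $(\nu x_1)P_2$ untouched by $[x_1{:=}\cdot]$, so $\mathsf{rt}(P')$ is already $\equiv$ to the target with $P''=P'$. Clauses~7--9 reduce to the replicand on strictly smaller combined size, using that substitution commutes with $\equiv$ and with $!$-unfolding.

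The main obstacle is exactly the bookkeeping of observation~(b): $\mathsf{rt}(P')\equiv P[x_1{:=}a_1]\cdots[x_k{:=}a_k]$ does not make $P'$ and $P$ syntactically identical --- replication may be unfolded a different number of times on the two sides and bound names may be $\alpha$-varied --- whereas $S$ recurses purely on syntax. Showing that $S$'s descent stays in step with both sides, and that the final $\equiv$ in the conclusion survives re-associating the unfolded copies and re-permuting restrictions, is the delicate part; everything else is a direct unfolding of the definitions of $S$, $\mathsf{rt}$, and capture-avoiding substitution.
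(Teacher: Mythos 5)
Your proposal is correct and follows essentially the same route as the paper: a case analysis on the clauses of $S$, with the prefix cases split on whether $x_1$ is the subject or the (re-bound) object, the restriction case split on whether the binder is $x_1$, and the parallel/replication cases handled componentwise after lining the two arguments up via $\equiv$. The only difference is that you induct on the combined size $|P'|+|P|$ where the paper uses structural induction on $P$; your measure is a slightly more careful justification of well-foundedness in the replication-unfolding clauses, but the substance of the argument is the same.
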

\begin{proof}[Proof of Lemma~\ref{lem:S}]
We prove this by structural induction on $P$:
\begin{itemize}
\item Assume $P=0$. Then $P'=P[x_1:=a_1][x_2:=a_2]\dots [x_k:=a_k]=0$ and $S(P[x_1:=a_1][x_2:=a_2]\dots [x_k:=a_k],P,[n],x_1)=0$.
\item Assume $P=\inp{b}{c}.Q$ Then either $P'=\inp{d}{e}.Q'$, or $P'=\inp{d}{e}[m].Q'$, for some $d,e,m$. We then get $4$ cases: either $b=x_1$, $c=x_1$, $b=c=x_1$, or $b\neq x_1$ and $c\neq x_1$.

Assume $b=x_1$ and $c\neq x_1$. Then $d=a_1$, and
\[
S(P',P,[n],x_1)=\inp{d{[n]}}{c}.S(Q',Q,[n],x_1)\ ,
\]
and the result follows from induction.

Assume $c=x_1$ and $b\neq x_1$. Then, since $c$ is bound, $P[x_1:=a_1]=P[x_1:=a_{1_[n]}]=P$, and the result follows.

Assume $b=c=x_1$. Then $d=a_1$ and $Q[x_1:=a_1][x_2:=a_2]\dots [x_k:=a_k]=Q[x_2:=a_2]\dots [x_k:=a_k]=Q'$, and the result follows.
\item Assume $P=\outp{b}{c}.Q$. This is similar to the previous case.
\item Assume $P=\sum\limits_{i\in I} P_i$. Then the result follows trivially from induction.
\item Assume $P=P_0\vert P_1$. Then either $P'=P_0'\vert P_1'$, or $P_0\equiv{!P_1}$ and $P'=P_0'$. 

If $P'=P_0'\vert P_1'$ then the result follows trivially from induction. 

If ${P_0}={!P_1}$ and $P'=P_0$, then $P''=S\left(!P_0',P_0,[n],x\right)\vert S\left(P_0',P_1,[n],x\right)$, and the result follows from induction.
\item Assume $P=(\nu b) Q$. Then $P'=(\nu c) Q'$ and either $b=x_1$ or $b\neq x_1$. 

If $b=x_1$, then $P[x_1:=a_1]=P[x_1:=a_{1_[n]}]=P$.

If $b\neq x_1$, then the result follows from induction.

\item Assume ${P}={!Q}$. Then either ${P'}={!Q'}$, or $P'=P_0'\vert P_1'$.

If ${P'}={!Q'}$, the result follows trivially from induction.

Otherwise the case is similar to the second case on parallel composition. 
\end{itemize}
\end{proof}

\begin{proof}[Proof of Theorem~\ref{the:key-ext-eq}]
We first show that if there exists a location $u$ such that $\mathbf{H}\vdash\! P\,\xrightarrow[u]{\mu} \mathbf{H}'\vdash\! P'$, then there exists a key $m$, such that $E((\{(a,u,u)\mid (a,u)\in \overline{H}\},\{(a,u,u)\mid (a,u)\in \underline{H}\},\{(a,u,u)\mid (a,u)\in {H}\})\vdash\! P,P)\xarrowtail{\mu[m]} E((\{(a,u,u)\mid (a,u)\in \overline{H'}\},\{(a,u,u)\mid (a,u)\in \underline{H'}\},\{(a,u,u)\mid (a,u)\in H'\})\vdash\! P',P')$ by induction in the size of $\overline{H}\cup \underline{H}\cup H$ and the structure of $P$:

Assume $\mathbf{H}=(\emptyset,\emptyset,\emptyset)$. Then $E(\mathbf{H}\vdash\! P,P)=P$.
\begin{itemize}
\item Assume $P=a(x).Q$. Then $\mu=a(b)$, $u=[P][Q[x:=b]]$, and $\mathbf{H}'\vdash\! P'=(\emptyset,\{(a(b),u)\},\emptyset)\vdash\! Q[x:=b]$. We then by Lemma~\ref{lem:S} get $E(\mathbf{H}'\vdash\! P',P')=a(x)\left[[P][Q[x:=b]]\right].Q[x:=b_{\left[[P][Q[x:=b]]\right]}]$, and the rest of the case follows naturally.

\item Assume $P=\outp{a}{x}.Q$. This case is similar to the previous.

\item Assume $P=P_0\vert P_1$. Then either $u=iu'$, or $u=\lrangles{0u_0,1u_1}$. 

If $u=0u'$, then $(\emptyset,\emptyset,\emptyset)\vdash\! P_0 \,\xrightarrow[u']{\mu} \mathbf{H}_0'\vdash\! P_0'$, $\mathbf{H}'\vdash\! P'=(0\overline{H'}_0,0\underline{H'}_0,0H'_0)\vdash\! P_0'\vert P_1$, and if $\mu=\outp{a}{b}$ then $b\notin \fn(P_1)$. By induction, $P_0\xrightarrow{\mu[m]} E(\mathbf{H}_0'\vdash\! P_0',P_0)$, and therefore $P_0\vert P_1 \xrightarrow{\mu[m]} E(\mathbf{H}_0'\vdash\! P_0',P_0)\vert P_1=E((0\overline{H'}_0,0\underline{H'}_0,0H'_0)\vdash\! P_0'\vert P_1, P_0'\vert P_1)$.

If $u=1u'$, the case is similar to $u=0u'$.

If $u=\lrangles{0u_0,1u_1}$, then $(\emptyset,\emptyset,\emptyset)\vdash\! P_i \,\xrightarrow[u_i]{\inp{a}{b}} \mathbf{H}_i'\vdash\! P_i'$ and $(\emptyset,\emptyset,\emptyset)\vdash\! P_{1-i} \,\xrightarrow[u']{\outp{a}{b}} \mathbf{H}_{1-i}\vdash\! P_{1-i}'$ for some $i\in \{0,1\}$ and $b\notin \fn(P_i)$ and $\mathbf{H}'\vdash\! P'=(\emptyset,\emptyset,\{(\inp{a}{b},\outp{a}{b},u)\})\vdash\! P_0'\vert P_1'$. By induction, $E((\emptyset,\emptyset,\emptyset)\vdash\! P_i,P_i)\xrightarrow{a(b)[m]} E(\mathbf{H}_i'\vdash\! P_i',P_i')$ and $E((\emptyset,\emptyset,\emptyset)\vdash\! P_{1-i},P_{1-i})\xrightarrow{\outp{a}{b}[m]} E(\mathbf{H}_{1-i}'\vdash\! P_{1-i}',P_{i-1})$. Therefore $$\begin{array}{r}P_0\vert P_1\xrightarrow{\tau[m]} E((\emptyset,\emptyset,\{(a(b),\outp{a}{b},\lrangles{0u_0,1u_1},m)\})\vdash\! (\nu b) (P_0'\vert P_1'),(\nu b) (P_0'\vert P_1'))\\ =(\nu b) E((\emptyset,\emptyset,\{(a(b),\outp{a}{b},\lrangles{0u_0,1u_1},m)\})\vdash\!  (P_0'\vert P_1'),(P_0'\vert P_1'))\end{array}$$ 

\item Assume $P=(\nu x) Q$. Then $(\emptyset,\emptyset,\emptyset)\vdash\! Q\,\xrightarrow[u]{\mu} \mathbf{H}'\vdash\! Q'$, $x\notin n(\mu)$, and $P'=(\nu x) Q'$. We then get by induction $Q\xrightarrow{\mu[m]} E(\mathbf{H}'\vdash\! Q',Q')$, and therefore $(\nu x)Q \xrightarrow{\mu[m]} (\nu x)E(\mathbf{H}'\vdash\! Q',Q')=E(\mathbf{H}'\vdash\! (\nu x) Q',(\nu x) Q')$.
\item Assume ${P}={!Q}$. Then $(\emptyset,\emptyset,\emptyset)\vdash\! !Q\vert Q\,\xrightarrow[u]{\mu} \mathbf{H}'\vdash\! P'$, and the rest follows from the parallel case. 
\end{itemize}
If for any $(\mu',u')\in \overline{H}\cup \underline{H}\cup H$, if there exists a location $u$ such that $\mathbf{H}-(\mu',u')\vdash\! P\,\xrightarrow[u]{\mu} \mathbf{H}''\vdash\! P'$, then there exists a key $m$, such that $E(\mathbf{H}-(\mu',u')\vdash\! P,P)\xrightarrow{\mu[m]} E(\mathbf{H}''\vdash\! P',P')$, then $E$ only adds past actions and unused choice branches to the process, both of which one can easily propagate the action past.

We then show that if there exists a key, $m$, such that $E((\{(a,u,u)\mid (a,u)\in \overline{H}\},\{(a,u,u)\mid (a,u)\in \underline{H}\},\{(a,u,u)\mid (a,u)\in {H}\})\vdash\! P,P)\xarrowtail{\mu[m]}{}P''$, then there exists a location, $u$, and a $\pi$IH process, $\mathbf{H}'\vdash\! P'$, such that $\mathbf{H}\vdash\! P\xarrowtail{\mu}{u} \mathbf{H}'\vdash\! P'$ and $P''\equiv E((\{(a,u,u)\mid (a,u)\in \overline{H'}\},\{(a,u,u)\mid (a,u)\in \underline{H'}\},\{(a,u,u)\mid (a,u)\in H'\})\vdash\! P',P')$ We again do this by induction on the number of extrusions in $\overline{H}\cup \underline{H}\cup H$, and the structure of $P$.

Assume $\overline{H}\cup \underline{H}\cup H=\emptyset$. Then $E((\{(a,u,u)\mid (a,u)\in \overline{H}\},\{(a,u,u)\mid (a,u)\in \underline{H}\},\{(a,u,u)\mid (a,u)\in {H}\})\vdash\! P,P)=P$. Since we are only proving operational correspondence up to structural congruence, we can discount any rules employing that.
\begin{itemize}
\item Assume $P=a(x).Q$. Then $\mu=a(b)$, we select $m=[[P][Q[x:=b]]]$, and $E((\{(a,u,u)\mid (a,u)\in \overline{H'}\},\{(a,u,u)\mid (a,u)\in \underline{H'}\},\{(a,u,u)\mid (a,u)\in H'\})\vdash\! P',P')=a(b)[[P][Q[x:=b]]].Q[x:=b_{[[P][Q[x:=b]]]}]$. We say that $u=[P][Q[x:=b]]$, $\mathbf{H}'\vdash\! P'=(\emptyset,\{(\inp{a}{b},[P][Q[x:=b]],[P][Q[x:=b]])\},\emptyset)\vdash\! Q[x:=b]$, and by Lemma~\ref{lem:S} the result follows.
\item Assume $P=\outp{a}{b}.Q$. This case is similar to the previous.
\item Assume $P=P_0\vert P_1$. Then either $P_0\xrightarrow{\mu[m]} P_0'$ and $E((\{(a,u,u)\mid (a,u)\in \overline{H'}\},\{(a,u,u)\mid (a,u)\in \underline{H'}\},\{(a,u,u)\mid (a,u)\in H'\})\vdash\! P',P')=P_0'\vert P_1$, $P_1\xrightarrow{\mu[m]} P_1'$ and $E((\{(a,u,u)\mid (a,u)\in \overline{H'}\},\{(a,u,u)\mid (a,u)\in \underline{H'}\},\{(a,u,u)\mid (a,u)\in H'\})\vdash\! P',P')=P_0\vert P_1'$, or $P_i\xrightarrow{a(b)[m]} P_i'$, $P_{1-i}\xrightarrow{\outp{a}{b}[m]} P_{1-i}'$, $\mu=\tau$, and $E((\{(a,u,u)\mid (a,u)\in \overline{H'}\},\{(a,u,u)\mid (a,u)\in \underline{H'}\},\{(a,u,u)\mid (a,u)\in H'\})\vdash\! P',P')=(\nu b) (P_0'\vert P_1')$. 

If $P_0\xrightarrow{\mu[m]} P_0'$ and $E((\{(a,u,u)\mid (a,u)\in \overline{H'}\},\{(a,u,u)\mid (a,u)\in \underline{H'}\},\{(a,u,u)\mid (a,u)\in H'\})\vdash\! P',P')=P_0'\vert P_1$, then by induction, there exists $u_0$ and $\mathbf{H}_0\vdash\! P_0''$ such that $E((\{(a,u,u)\mid (a,u)\in \overline{H'}_0\},\{(a,u,u)\mid (a,u)\in \underline{H'}_0\},\{(a,u,u)\mid (a,u)\in H'_0\})\vdash\! P_0'',P_0'')=P_0'$ and $(\emptyset,\emptyset,\emptyset)\vdash\! P_0\,\xrightarrow[u_0]{\mu} \mathbf{H}_0\vdash\! P_0''$. We therefore get $(\emptyset,\emptyset,\emptyset)\vdash\! P\,\xrightarrow[0u_0]{\mu} (0\overline{H_0},0\underline{H_0},0H_0)\vdash\! P_0'\vert P_1$.

If $P_1\xrightarrow{\mu[m]} P_1'$ and $E((\{(a,u,u)\mid (a,u)\in \overline{H'}\},\{(a,u,u)\mid (a,u)\in \underline{H'}\},\{(a,u,u)\mid (a,u)\in H'\})\vdash\! P',P')=P_0\vert P_1'$, then the case is similar to the previous.

If $P_i\xrightarrow{a(b)[m]} P_i'$, $P_{1-i}\xrightarrow{\outp{a}{b}[m]} P_{1-i}'$, $\mu=\tau$, and $E((\{(a,u,u)\mid (a,u)\in \overline{H'}\},\{(a,u,u)\mid (a,u)\in \underline{H'}\},\{(a,u,u)\mid (a,u)\in H'\})\vdash\! P',P')=(\nu b) (P_0'\vert P_1')$, then by induction we have $u_i$ and $\mathbf{H}_i\vdash\! P_i''$, $u_{1-i}$ and $\mathbf{H}_{1-i}\vdash\! P_{1-i}''$ such that $E((\{(a,u,u)\mid (a,u)\in \overline{H'}_0\},\{(a,u,u)\mid (a,u)\in \underline{H'}_0\},\{(a,u,u)\mid (a,u)\in H'_0\})\vdash\! P_0'',P_0'')=P_0'$ and $E((\{(a,u,u)\mid (a,u)\in \overline{H'}_1\},\{(a,u,u)\mid (a,u)\in \underline{H'}_1\},\{(a,u,u)\mid (a,u)\in H'_1\})\vdash\! P_1'',P_1'')=P_1'$ and $(\emptyset,\emptyset,\emptyset)\vdash\! P_i\,\xrightarrow[u_i]{a(b)} \mathbf{H}_i\vdash\! P_i''$ and $(\emptyset,\emptyset,\emptyset)\vdash\! P_{1-i}\,\xrightarrow[u_{1-i}]{\outp{a}{b}} \mathbf{H}_{1-i}\vdash\! P_{1-i}''$. We therefore say $m=\lrangles{0u_0,1u_1}$, and get $(\emptyset,\emptyset,\emptyset)\vdash\! P_0\vert P_1\,\xrightarrow[\lrangles{0u_0,1u_1}] (\emptyset,\emptyset,\{(a(b),\outp{a}{b},\lrangles{0u_0,1u_1},m)\})\vdash\! (\nu b) (P_0''\vert P_1'')$.
\item Assume $P=(\nu x) Q$. Then $x\notin n(\mu)$, $E((\{(a,u,u)\mid (a,u)\in \overline{H'}\},\{(a,u,u)\mid (a,u)\in \underline{H'}\},\{(a,u,u)\mid (a,u)\in H'\})\vdash\! P',P')= (\nu x) Q'$ and $Q\xrightarrow{\mu[m]} Q'$. We therefore get $P'=(\nu x) Q''$, and by induction $\mathbf{H}\vdash\! Q\,\xrightarrow[u]{\mu} \mathbf{H}'\vdash\! Q'$, and therefore $\mathbf{H}\vdash\! P\,\xrightarrow[u]{\mu} \mathbf{H}'\vdash\! P'$.
\item Assume ${P}={!Q}$. Then the transition must involve structural congruence, $!Q\vert Q\xrightarrow{\mu[m]} P'''$ for $P'''\equiv P''$, and the rest follows from the parallel case.
\end{itemize}

If for any $(\mu',u')\in \overline{H}\cup \underline{H}\cup H$, if there exists a key $m$, such that $E(\mathbf{H}-(\mu',u')\vdash\! P,P)\xrightarrow{\mu[m]} E(\mathbf{H}''\vdash\! P',P')$, then there exists a location $u$ such that $\mathbf{H}-(\mu',u')\vdash\! P\,\xrightarrow[u]{\mu} \mathbf{H}''\vdash\! P'$, then having more past extrusions does not stop $\mathbf{H}-(\mu',u')\vdash\! P$ from performing any forwards actions and having more past actions does not allow $E(\mathbf{H}-(\mu',u')\vdash\! P,P)$ to perform additional forward actions.

We then need to prove that if there exists a location $u$ such that $\mathbf{H}\vdash\! P\xsquigarrow{\mu}{u} \mathbf{H}'\vdash\! P'$, then there exists a key $m$, such that $E((\{(a,u,u)\mid (a,u)\in \overline{H}\},\{(a,u,u)\mid (a,u)\in \underline{H}\},\{(a,u,u)\mid (a,u)\in {H}\})\vdash\! P,P)\xrsquigarrow{\mu[m]} E((\{(a,u,u)\mid (a,u)\in \overline{H'}\},\{(a,u,u)\mid (a,u)\in \underline{H'}\},\{(a,u,u)\mid (a,u)\in H'\})\vdash\! P',P')$.

This follows naturally from the above properties, and Propositions~\ref{the:FwdToRev} and~\ref{prop:fwdtorevTrans}.

We finally need to prove that if there exists a key, $m$, such that $E((\{(a,u,u)\mid (a,u)\in \overline{H}\},\{(a,u,u)\mid (a,u)\in \underline{H}\},\{(a,u,u)\mid (a,u)\in {H}\})\vdash\! P,P)\xrsquigarrow{\mu[m]}{}P''$, then there exists a location, $u$, and a $\pi$IH process, $\mathbf{H}'\vdash\! P'$, such that $\mathbf{H}\vdash\! P\xsquigarrow{\mu}{u} \mathbf{H}'\vdash\! P'$ and $P''\equiv E((\{(a,u,u)\mid (a,u)\in \overline{H'}\},\{(a,u,u)\mid (a,u)\in \underline{H'}\},\{(a,u,u)\mid (a,u)\in H'\})\vdash\! P',P')$.

As we have proven the above properties, and Propositions~\ref{the:FwdToRev}, and~\ref{prop:fwdtorevTrans}, we only need to prove that there exists a $\pi$IH process $\mathbf{H}'\vdash\! P'$, such that $P''\equiv E((\{(a,u,u)\mid (a,u)\in \overline{H'}\},\{(a,u,u)\mid (a,u)\in \underline{H'}\},\{(a,u,u)\mid (a,u)\in H'\})\vdash\! P',P')$. Since none of the transition rules - forward or reverse - in $\pi$IK can create unguarded choice from guarded choice, and $E$ only generates $\pi$I-calculus processes with guarded choice, we know $P''$ has guarded choice. 

If $P''$ is a standard process, then $\mathbf{H}'\vdash\! P'=(\emptyset,\emptyset,\emptyset)\vdash\! P''$. Otherwise, by Theorems~\ref{the:paraK} and~\ref{the:para}, $P''$ must be forwards reachable from a standard process $P'''$ such that $P'''\equiv E((\emptyset,\emptyset,\emptyset)\vdash\! P''',P''')$, and by the above properties, $\mathbf{H}'\vdash\! P'$ exists.

%
%

\end{proof}

\section{Section~\ref{sec:Den-Ev-Sem}}

\begin{proposition}[Structural Congruence]\label{prop:RollStrCon}
Given processes $P$ and $P'$ and a set of names $\M{N}\supseteq n(P)\cup n(P')$, if $P\equiv P'$, $\lrBraces{P}_{\M{N}}=\lrangles{\M{E},\I,k}$, and $\lrBraces{P'}_{\M{N}}=\lrangles{\M{E}',\I',k'}$, then there exists an isomorphism $f:\M{E}\rightarrow \M{E}'$ such that $f(\I)=\I'$ and for all $e\in \I$, $k(e)=k'(f(e))$.
\end{proposition}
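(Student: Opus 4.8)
The plan is to induct on the derivation of $P\equiv P'$. Since structural congruence (Table~\ref{tab:str-con}) is the least congruence containing the stated axioms, the induction splits into three families of cases: the equivalence-relation rules (reflexivity is trivial; symmetry uses $f^{-1}$; transitivity composes the two isomorphisms, and the conditions $f(\I)=\I'$ and $k=k'\circ f$ are closed under inverse and composition); the congruence rules, one per operator of the grammar; and the axiom cases. In every case the work is to exhibit the isomorphism $f$ and check that it preserves labels, carries $\I$ to $\I'$, and commutes with the key functions.

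For the \emph{congruence cases} I would first prove a lemma that each LRBES operator used in Table~\ref{tab:pik-den-sem} -- causal prefix, choice, product, the restrictions $\M{E}\upharpoonright E_\alpha$, the label-removal $\rho$, and the link-causation augmentation (the extra bundles and prevention pairs) used inside $\lrBraces{P_0\vert P_1}_{(\M{N},l)}$ -- is functorial in the following sense: if for each argument position $i$ we are given an isomorphism $f_i:\M{E}_i\to\M{E}_i'$ with $f_i(\I_i)=\I_i'$ and $k_i=k_i'\circ f_i$, then the operator applied to the $\M{E}_i$ and to the $\M{E}_i'$ yields isomorphic results via a map $f$ canonically assembled from the $f_i$, and this $f$ again satisfies $f(\I)=\I'$ and $k=k'\circ f$. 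For prefix, choice and restriction this is immediate from the definitions. For the product the point is that the initial configuration is computed purely by \emph{matching keys} across the two components, and the added bundles and prevention pairs are defined purely in terms of labels and of $\mathsf{sbn}$ and $\mathsf{no}$; since the $f_i$ preserve both keys and labels, the key-matching, the selected synchronisation events, and the augmented relations all transport verbatim, so the obvious bijection on $E_0\times_* E_1$ works. The replication congruence rule is then dispatched via the defining equation for $\lrBraces{!_{\mathbf{x}}P}_{(\M{N},l)}$ (which unfolds one level into a parallel composition evaluated at level $l-1$) by induction on $l$, followed by passing to the supremum over $l$.

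For the \emph{axiom cases} I would compute both sides and read off $f$. The commutative-monoid laws for $+$ and for $\vert$ (including $P\vert 0\equiv P$ and $P+0\equiv P$) hold because choice and product are commutative and associative up to the evident reindexing of the tags $\{0,1,\dots\}$, respectively of the $\times_*$-pairs, and $\lrBraces{0}$ is a unit; the reindexing bijection preserves labels and keys trivially, and it carries $\I$ to $\I'$ because the definitions of $\I$ -- and, in the parallel case, the key-matching that builds it -- are symmetric in the components. The case $!P\equiv{!P}\vert P$ unfolds one replication level and then appeals to the parallel case. For $(\nu x)(\nu y)P\equiv(\nu y)(\nu x)P$ one uses that, in the ``original-process'' sub-case of the restriction clause, the effect is to $\rho$-remove all events whose label mentions the restricted name, and these removals for $x$ and for $y$ commute. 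For $(\nu a)(P_0\vert P_1)\equiv((\nu a)P_0\vert P_1)$ with $a\notin n(P_1)$, no event originating from $P_1$ mentions $a$, so removing the $a$-mentioning events commutes with forming the product; the added link-causation bundle $\{e'\mid\lambda(e')=\outp{c}{a}\}\mapsto e$ is unproblematic because such a bundle is discarded exactly when $e$ is.

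The hardest point will be this last, restriction-versus-parallel axiom, specifically reconciling the two sub-cases of the $(\nu a)P$ clause of Table~\ref{tab:pik-den-sem}: one must check that whether a restriction-guarded pair of matching past actions $\inp{b}{a}[m],\outp{b}{a}[m]$ lies wholly inside $P_0$ or is separated by the outer parallel is immaterial, so that the same sub-case fires on both sides of the congruence and the synchronisation event that key-matching selects into $\I$ lands in the corresponding position; and one must verify that the assembled map is a bijection of the \emph{whole} product event set $E_0\times_* E_1$, not merely of the synchronisation events. Everything else is routine transport of structure along the component isomorphisms supplied by the induction hypothesis.
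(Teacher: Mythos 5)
Your overall strategy is the same as the paper's: a case analysis over the structural congruence axioms, exhibiting an explicit label-, key- and $\I$-preserving bijection in each case, with commutativity/associativity/unit of $+$ and $\vert$ handled by the canonical reindexings of tags and of $\times_*$-pairs, and $!P\equiv{!P\vert P}$ by unfolding. You are in fact more systematic than the paper: the paper's proof only lists the monoid laws for $\vert$ and $+$ and the replication axiom, leaving the closure under contexts, the equivalence-relation rules, and--crucially--the two restriction axioms entirely implicit, whereas you supply the derivation induction, the functoriality lemma for the LRBES operators, and an attempt at the restriction cases. That extra scaffolding is all sound and is genuinely needed for a complete proof.

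The gap is in the one case you yourself flag as hardest, and your proposed argument for it does not go through. For $(\nu a)(P_0\vert P_1)\equiv((\nu a)P_0)\vert P_1$ with $a\notin n(P_1)$ you assert that ``no event originating from $P_1$ mentions $a$''. Under the early input semantics this is false: $\lrBraces{\inp{c}{x}.Q}_{(\M{N},l)}$ generates an input event labelled $\inp{c}{n}$ for \emph{every} $n\in\M{N}\setminus\mathsf{sbn}(Q)$, and since $\M{N}\supseteq n(P)$ contains $a$ while $a\notin\mathsf{sbn}(Q)$ (as $a\notin n(P_1)$), any input prefix in $P_1$ contributes an event labelled $\inp{c}{a}$. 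On the left-hand side these events are pruned by the outer restriction (the $\rho$-removal of all labels mentioning $a$, applied to the whole product), but on the right-hand side the restriction is applied only to $\lrBraces{P_0}$ and the $\inp{c}{a}$ events of $P_1$ survive into the product, so the two event sets differ and the removal does not commute with the product as you claim. You would need either a separate argument confined to the reachable processes for which the proposition is actually invoked, or a reformulation of the restriction clause; as it stands this step fails. (The paper's own proof silently omits this axiom, so this is a defect you have surfaced rather than introduced, but your proposed repair is not yet a proof of it.)
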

\begin{proof}
		We say that $\M{E}=\LRBES{}$ and $\M{E}'=(E',F',\mapsto',\cf',\rhd',\lambda',\mathsf{Act}')$ and do a case analysis on the Structural congruence rules:
		\begin{description}
			\item[$P=P_0\vert P_1$ and $P'=P_1 \vert P_0$:] Products are unique up to isomorphism, and 			
			$$f(e)=\begin{dcases*}
			(e_1,e_0) & if $e=(e_0,e_1)$ \\
			(e_1,*) & if $e=(*,e_1)$ \\
			(*,e_0) & if $e=(e_0,*)$ \\
			\end{dcases*}$$ clearly fulfils the conditions other conditions and remains a morphism after the enablings and preventions describing the link dependencies are added to the product.
			\item[$P= P_0\vert (P_1 \vert P_2)$ and $P'= (P_0\vert P_1 ) \vert P_2$:] Products are associative up to isomorphism, and $f((e_0,(e_1,e_2))=
			((e_0,e_1),e_2)$ clearly fulfils the other conditions and remains a morphism after the enablings and preventions describing the link dependencies are added to the product. 
			\item[$P=P'\vert 0$:] If $f((e,*))=e$, then this clearly holds.
			\item[$P=P_0 + P_1$ and $P'= P_1 + P_0$:] Coproducts are unique up to isomorphism, and $f(i,e)=(1-i,e)$ clearly fulfils the other conditions.
			\item[$P= P_0 + (P_1 + P_2) $ and $P'= (P_0 + P_1) + P_2$:] Coproducts are associative up to isomorphism, and $f((e_0,(e_1,e_2)))=((e_0,e_1),e_2)$ clearly fulfils the other conditions. 
			\item[$P=P'+ 0$:] Clearly $f(0,e)=e$ is an isomorphism, $\I=\{0\}\times\I'$, and $k(0,e)=k'(e)$.
			\item[${P}={!Q}$ and ${P'}={!Q\vert Q}$:] Obvious.
		\end{description}
	\end{proof}


\begin{proof}[Proof of Theorem~\ref{the:PtoLRBEStrans}]
Let $\M{E}=\LRBES{}$ and $\M{E}'=(E',F',\mapsto',\cf',\rhd',\lambda',\mathsf{Act}')$.
We prove the theorem by induction on $P\xrightarrow{\mu[m]} P'$:
\begin{enumerate}
\item Suppose $P=\inp{a}{x}.Q$, $P'=\inp{a}{x}[m].Q[x:= b_{[n]}]$, $\mathsf{std}(Q)$, and $\mu=\inp{a}{b}$. Then for all $n\in(\M{N}\setminus \mathsf{sbn}(Q))=(\M{N}\setminus \mathsf{sbn}(Q[x:= b_{[n]}]))$, we have $\lrBraces{Q[x:=n]}=\lrangles{\M{E}_n,\I_n,k_n}$, $\lrBraces{Q[x:=b_{[n]}][b_{[n]}:=n]}=\lrangles{\M{E}_n',\I_n',k_n'}$, and an isomorphism $f_n:\M{E}_n\rightarrow\M{E}_n'$. We define our isomorphism \[f((n,e_n))=\begin{dcases*}(n,f_n(e_n)) & if $e_n\in E_n$ \\
(n,e_n') & for $\{e_n'\}=\{e'\mid (n,e')\in E' \text{ and } e'\notin E_n'\}$ otherwise \\
\end{dcases*}\] Since all bound names are different from all other bound and free names, $b\notin \bn(Q)$, and therefore there exists an $e\in E$ such that $\lambda(e)=a(b)$, and for all $e'\in E$ either $e'=e$, $e'\cf e$, or $\{e\}\mapsto e'$. We therefore get $\I=\emptyset\xrightarrow{\{e\}} X$ and $f(X)=\I'$, and the rest of the conditions fulfilled.
\item Suppose $P=\outp{a}{x}.Q$, $P'=\outp{a}{x}[m].Q$, $\mu=\outp{a}{x}$, and $\mathsf{std}(Q)$. This case is similar to the previous, without the choice of substitutions.
\item Suppose $P=\alpha[n].Q$, $P'=\alpha[n].Q'$, $Q\xrightarrow{\mu[m]} Q'$, $m\neq n$, and if $\mu=\outp{a}{x}$ then $x\notin n(\alpha)$. Then let $\lrBraces{Q}=\lrangles{\M{E}_Q,\I_Q,k_Q}$ and $\lrBraces{Q'}=\lrangles{\M{E}_Q',\I_Q',k_Q'}$. We have an isomorphism $f_Q:\M{E}_Q\rightarrow \M{E}_Q'$ and a transition $\I_Q\xrightarrow{e_Q} X_Q$ such that $\lambda_Q(e_Q)=\mu$, $f\circ k_Q'=k_Q[e_Q\mapsto m]$, and $f(X_Q)=\I_Q'$. We define our isomorphism \[f((n,e_n))=\begin{dcases*}(n,f_n(e_n)) & if $e_n\in E_n$ \\
(n,e_n') & for $\{e_n'\}=\{e'\mid (n,e')\in E' \text{ and } e'\notin E_n'\}$ otherwise \\
\end{dcases*}\] and $e=(x,e_Q)$ if $\alpha=\inp{a}{x}$, and \[f(e')=\begin{dcases*}f_Q(e') & if $e'\in E_Q$ \\
e'' & for $\{e''\}=\{e'''\mid e'''\in E' \text{ and } e'''\notin E_Q'\}$ otherwise \\
\end{dcases*}\] and $e=e_Q$ if $\alpha=\outp{a}{x}$. These clearly fulfil the conditions.
\item Suppose $P=P_0\vert P_1$, $P'=P_0'\vert P_1$, $P_0\xrightarrow{\mu[m]} P_0'$, $\mathsf{fsh}[n](P_1)$, and if $\mu=\outp{a}{x}$ then $x\notin \fn(P_1)$. Then let $\lrBraces{P_0}=\lrangles{\M{E}_0,\I_0,k_0}$, $\lrBraces{P_0'}=\lrangles{\M{E}_0',\I_0',k_0'}$, and $\lrBraces{P_1}=\lrangles{\M{E}_1,\I_1,k_1}$. We then have an isomorphism $f_0:\M{E}_0\rightarrow \M{E}_0'$ and transition $\I_0\xrightarrow{e_0} X_0$ such that $\lambda_0(e_0)=\mu$, $f_0\circ k_0'=k_0[e_0\mapsto m]$, and $f_0(X_0)=\I_0'$. We define our isomorphism \[f(e')=\begin{dcases*}
(f_0(e_0'),*) & if $e'=(e_0',*)$ \\
(*,e_1') & if $e'=(*,e_1')$ \\
(f_0(e_0'),e_1') & if $e'=(e_0',e_1')$ \\
\end{dcases*} \] and $e=(e_0,*)$.
Since $\mathsf{sbn}(P_0)=\mathsf{sbn}(P_0')$ this is an isomorphism, and since all free and bound names are different, $\mathsf{no}(\mu)\cap \mathsf{sbn}(P_1)=\emptyset$, implying $\I\xrightarrow{e}$. The other conditions are clearly fulfilled.
\item Suppose $P=P_0\vert P_1$, $P'=(\nu x) (P_0'\vert P_1)$, $\mu=\tau$, $P_0\xrightarrow{\inp{a}{x}[m]} P_0'$, and $P_1\xrightarrow{\outp{a}{x}[m]} P_1'$. Then let  $\lrBraces{P_0}=\lrangles{\M{E}_0,\I_0,k_0}$, $\lrBraces{P_0'}=\lrangles{\M{E}_0',\I_0',k_0'}$, $\lrBraces{P_1}=\lrangles{\M{E}_1,\I_1,k_1}$, and $\lrBraces{P_1'}=\lrangles{\M{E}_1',\I_1',k_1'}$. Then we have isomorphisms $f_0:\M{E}_0\rightarrow \M{E}_0'$ and $f_1:\M{E}_1\rightarrow \M{E}_1'$ and transitions $\I_0\xrightarrow{e_0} X_0$ and $\I_1\xrightarrow{e_1} X_1$ such that $\lambda_0(e_0)=\inp{a}{x}$, $\lambda_1(e_1)=\outp{a}{x}$, $f_0\circ k_0'=k_0[e_0\mapsto m]$, $f_1\circ k_1'=k_1[e_1\mapsto m]$, $f_0(X_0)=\I_0'$, and $f_1(X_1)=\I_1'$. We then define our isomorphism \[f(e')=\begin{dcases*}
(f_0(e_0'),*) & if $e'=(e_0',*)$ \\
(*,f_1(e_1')) & if $e'=(*,e_1')$ \\
(f_0(e_0'),f_1(e_1')) & if $e'=(e_0',e_1')$ \\
\end{dcases*} \] and $e=(e_0,e_1)$. Since $\mathsf{sbn}(P_0)=\mathsf{sbn}(P_0')$ and the existence of $(f_0(e_0),f_1(e_1))\in \I'$ and $\inp{a}{x}[m]$ and $\outp{a}{x}[m]$ prevents $(\nu x)$ from affecting $\M{E}'$, $f$ is an isomorphism, and since $\mathsf{no}(\tau)=\emptyset$, we have a transition $\I\xrightarrow{e}$. The other conditions are clearly fulfilled.
\item Suppose $P=P_0+P_1$, $P'=P_0'+P_1$, $P_0\xrightarrow{\mu[m]} P_0'$, and $\mathsf{std}(P_1)$. Then let $\lrBraces{P_0}=\lrangles{\M{E}_0,\I_0,k_0}$, $\lrBraces{P_0'}=\lrangles{\M{E}_0',\I_0',k_0'}$, and $\lrBraces{P_1}=\lrangles{\M{E}_1,\I_1,k_1}$. We then have an isomorphism $f_0:\M{E}_0\rightarrow \M{E}_0'$ and transition $\I_0\xrightarrow{e_0} X_0$ such that $\lambda_0(e_0)=\mu$, $f_0\circ k_0'=k_0[e_0\mapsto m]$, and $f_0(X_0)=\I_0'$. We define out isomorphism \[f((i,e_i))= \begin{dcases*}
(0,f_0(e_0)) & if i=0 \\
(1,e_1) & if i=1 \\
\end{dcases*}\] and $e=(0,e_0)$. Isomorphism is preserved by the coproduct, and the remaining conditions are clearly fulfilled.
\item Suppose $P=(\nu x) Q$, $P'=(\nu x) Q'$, $Q\xrightarrow{\mu[m]} Q'$, and $x\notin n(\mu)$. Then let $\lrBraces{Q}=\lrangles{\M{E}_Q,\I_Q,k_Q}$ and $\lrBraces{Q'}=\lrangles{\M{E}_Q',\I_Q',k_Q'}$. We have an isomorphism $f_Q:\M{E}_Q\rightarrow \M{E}_Q'$ and a transition $\I_Q\xrightarrow{e_Q} X_Q$ such that $\lambda_Q(e_Q)=\mu$, $f\circ k_Q'=k_Q[e_Q\mapsto m]$, and $f(X_Q)=\I_Q'$. Either there exist past actions $\inp{b}{a}[m]$ and $\outp{b}{a}[m]$ in $P$ which are not guarded by a restriction $(\nu a)$ in $P$ or not. If such $\inp{b}{a}[m]$ and $\outp{b}{a}[m]$ exist, then $\lrangles{\M{E},\I,k}=\lrangles{\M{E}_Q,\I_Q,k_Q}$ and $\lrangles{\M{E}',\I',k'}=\lrangles{\M{E}_Q',\I_Q',k_Q'}$, and the rest follows trivially. Otherwise restriction preserves morphisms, and clearly does not affect $e=e_Q$.
\end{enumerate}
\end{proof}

\begin{proof}[Proof of Theorem~\ref{the:PtoLRBEStrans2}]
We prove this by structural induction on $P$:
\begin{itemize}
\item Suppose $P=0$. Then $E=\emptyset$, and no transition $\I\xrightarrow{\{e\}} X$ exists.
\item Suppose $P=\outp{a}{x}.Q$. Let $\lrBraces{Q}_{\M{N}}=\lrangles{\M{E}_Q,\I_Q,k_Q}$, $\M{E}_Q=\LRBES{Q}$, and $\M{E}'=(E',F',\mapsto',\cf',\rhd',\lambda',\mathsf{Act}')$. Then there exists $e$ such that $E\setminus E_Q=\{e\}$, and for all $e'\in E$, if $e'\neq e$ then $\{e\}\mapsto e'$. Therefore this is the only possible $e$ such that $\I\xrightarrow{\{e\}}$. Additionally we have $\lambda(e)=\outp{a}{x}$ and $P\xrightarrow{\outp{a}{x}[m]} \outp{a}{x}[m].Q$ for any key $m$, and the rest of the case is straightforward.
\item Suppose $P=\inp{a}{x}.Q$. 
Then there must exist $b\in \M{N}\setminus \mathsf{sbn}(P)$ such that $\lambda(e)=\inp{a}{b}$, and for all $e'\in E$ either $e=e'$, $e\cf e'$, or $\{e\}\mapsto e'$. There then exists a transition $P\xrightarrow{\inp{a}{b}[m]} \inp{a}{b}[m].Q[x:= b_{[m]}]$ and the rest of the case is straightforward.
\item Suppose $P=\outp{a}{x}[n].Q$. Let $\lrBraces{Q}_{\M{N}}=\lrangles{\M{E}_Q,\I_Q,k_Q}$, $\M{E}_Q=\LRBES{Q}$, and $\M{E}'=(E',F',\mapsto',\cf',\rhd',\lambda',\mathsf{Act}')$. Then $\I\xrightarrow{\{e\}} X$ implies $\I_Q\xrightarrow{\{e\}} X\cap E_Q$. We therefore have a transition $Q\xrightarrow{\mu[m]} Q'$ such that $\lrBraces{Q'}_{\M{N}}=\lrangles{\M{E}_Q',\I_Q',k_Q'}$ and an isomorphism $f_Q:\M{E}_Q\rightarrow \M{E}_Q'$ such that $\lambda_Q(e)=\mu$, $f_Q\circ k_Q'=k_Q[e\mapsto m]$, and $f_Q(X\cap E_Q)=\I_Q'$. This gives us a transition $P\xrightarrow{\mu[m]} \outp{a}{x}[n].Q'$ and the rest of the case is straightforward.
\item Suppose $P=\inp{a}{x}[n].Q$. This case is a combination of the previous two.
\item Suppose $P=Q+R$ and let $\lrBraces{Q}_{\M{N}}=\lrangles{\M{E}_Q,\I_Q,k_Q}$, $\M{E}_Q=\LRBES{Q}$, $\lrBraces{R}_{\M{N}}=\lrangles{\M{E}_R,\I_R,k_R}$, $\M{E}_R=\LRBES{R}$, and $\M{E}'=(E',F',\mapsto',\cf',\rhd',\lambda',\mathsf{Act}')$. Either $e=(0,e_Q)$ or $e=(1,e_R)$. In the first case we get a transition transition $Q\xrightarrow{\mu[m]} Q'$ such that $\lrBraces{Q'}_{\M{N}}=\lrangles{\M{E}_Q',\I_Q',k_Q'}$ and an isomorphism $f_Q:\M{E}_Q\rightarrow \M{E}_Q'$ such that $\lambda_Q(e_Q)=\mu$, $f_Q\circ k_Q'=k_Q[e_Q\mapsto m]$, and $f_Q(\{e_Q'\vert (0,e_Q')\in X\})=\I_Q'$. We therefore define $$f(e)=\begin{dcases*}
(0,f_Q(e_Q)) & if $e=(0,e_Q)$ \\
e & otherwise \\
\end{dcases*}$$ and the rest of the case is straightforward. If $e=(1,e_R)$, the proof is similar.
\item Suppose $P=Q\vert R$ and let $\lrBraces{Q}_{\M{N}}=\lrangles{\M{E}_Q,\I_Q,k_Q}$, $\M{E}_Q=\LRBES{Q}$, $\lrBraces{R}_{\M{N}}=\lrangles{\M{E}_R,\I_R,k_R}$, $\M{E}_R=\LRBES{R}$, and $\M{E}'=(E',F',\mapsto',\cf',\rhd',\lambda',\mathsf{Act}')$. Either $e=(e_Q,*)$, $e=(*,e_R)$, or $e=(e_Q,e_R)$. If $e=(e_Q,*)$ then we have a transition $Q\xrightarrow{\mu[m]} Q'$ such that $\lrBraces{Q'}_{\M{N}}=\lrangles{\M{E}_Q',\I_Q',k_Q'}$ and an isomorphism $f_Q:\M{E}_Q\rightarrow \M{E}_Q'$ such that $\lambda_Q(e_Q)=\mu$, $f_Q\circ k_Q'=k_Q[e_Q\mapsto m]$, and $f_Q(\{e_Q'\vert (e_Q',*)\in X \text{ or } (e_Q',e_R')\in X\})=\I_Q'$. We therefore get $P\xrightarrow{\mu[m]} Q'\vert R$ so long as $\mathsf{fsh}[m](R)$, and if not we can do the same with a different $m$. We can then define \[f(e')=\begin{dcases*}
(f_0(e_0'),*) & if $e'=(e_0',*)$ \\
(*,e_1') & if $e'=(*,e_1')$ \\
(f_0(e_0'),e_1') & if $e'=(e_0',e_1')$ \\
\end{dcases*} \] and the rest of the case is straightforward. If $e=(*,e_R)$, the case is similar. If $e=(e_Q,e_R)$, then we have transition $Q\xrightarrow{\alpha[m]} Q'$ such that $\lrBraces{Q'}_{\M{N}}=\lrangles{\M{E}_Q',\I_Q',k_Q'}$ and isomorphism $f_Q:\M{E}_Q\rightarrow \M{E}_Q'$ such that $\lambda_Q(e_Q)=\alpha$, $f_Q\circ k_Q'=k_Q[e_Q\mapsto m]$, and $f_Q(\{e_Q'\vert (e_Q',*)\in X \text{ or } (e_Q',e_R')\in X\})=\I_Q'$, and transition $R\xrightarrow{\alpha'[m]} R'$ such that $\lrBraces{R'}_{\M{N}}=\lrangles{\M{E}_R',\I_R',k_R'}$ and isomorphism $f_R:\M{E}_R\rightarrow \M{E}_R'$ such that $\lambda_R(e_R)=\alpha'$, $f_R\circ k_R'=k_R[e_R\mapsto m]$, and $f_R(\{e_R'\vert (*,e_R')\in X \text{ or } (e_Q',e_R')\in X\})=\I_R'$ and there exist names $a,b$ such that either $\alpha=\inp{a}{b}$ and $\alpha'=\outp{a}{b}$ or $\alpha'=\inp{a}{b}$ and $\alpha=\outp{a}{b}$. We therefore get a transition $P\xrightarrow{\tau[m]} (\nu b) (Q'\vert R')$ and define \[f(e)=\begin{dcases*}
(f_0(e_0'),*) & if $e=(e_0',*)$ \\
(*,f_1(e_1')) & if $e=(*,e_1')$ \\
(f_0(e_0'),f_1(e_1')) & if $e=(e_0',e_1')$ \\
\end{dcases*} \] and the rest of the case is straightforward.
\item Suppose $P=(\nu a) Q$. Let $\lrBraces{Q}_{\M{N}}=\lrangles{\M{E}_Q,\I_Q,k_Q}$, $\M{E}_Q=\LRBES{Q}$, and $\M{E}'=(E',F',\mapsto',\cf',\rhd',\lambda',\mathsf{Act}')$. Then either there exist past actions $\inp{b}{a}[m]$ and $\outp{b}{a}[m]$ in $Q$ which are not guarded by a restriction $(\nu a)$ in $Q$ or not. If such $\inp{b}{a}[m]$ and $\outp{b}{a}[m]$ do exist, then they must be in parallel, and therefore there exists an event $e'\in E\setminus \I$ such that $\lambda(e)=\outp{b}{a}$, and for all other events $e''\in E$, if $\lambda(e')$ outputs $a$ then $e'=e$, and if $a\in \mathsf{no}(\lambda(e'))$ then $\{e'\}\mapsto e''$. Additionally there exists $e'''\in \I$ such that $e'''\cf e'$ and $\lambda(e''')=\tau$. We therefore get that $a\notin n(e)$. Additionally $\I_Q=\I\xrightarrow{\{e\}} X$ and by induction we have a transition $Q\xrightarrow{\mu[m]} Q'$ such that $\lrBraces{Q'}_{\M{N}}=\lrangles{\M{E}_Q',\I_Q',k_Q'}$ and an isomorphism $f_Q:\M{E}_Q\rightarrow \M{E}_Q'$ such that $\lambda_Q(e)=\mu$, $f_Q\circ k_Q'=k_Q[e\mapsto m]$, and $f_Q(X)=\I_Q'=\I'$. We define $f=f_Q$ and the result follows.
If no such $\inp{b}{a}[m]$ and $\outp{b}{a}[m]$ exist in $Q$ then clearly $a\notin n(\lambda(e))$, and restriction preserves morphisms, meaning the proof is straightforward.
\end{itemize}
\end{proof}

\end{document}